\newcommand{\footremember}[2]{%
    \footnote{#2}
    \newcounter{#1}
    \setcounter{#1}{\value{footnote}}%
}
\definecolor{darkgreen}{rgb}{0,0.5,0}
\newcommand{\kibitz}[2]{\ifnum\Comments=1{\color{#1}{#2}}\fi}
\newcommand{\ignore}[1]{}
\newtheorem{theorem}{Theorem}
\newtheorem*{theorem*}{Theorem} 
\newtheorem{lemma}{Lemma}
\newtheorem{proposition}{Proposition}
\newtheorem{definition}{Definition}
\title{\vspace{12pt}Auctions between Regret-Minimizing Agents\thanks{This work has been published in Proceedings of the ACM Web Conference 2022 (WWW'22). \href{https://doi.org/10.1145/3485447.3512055}{https://doi.org/10.1145/3485447.3512055}.}}
\author{
Yoav Kolumbus\footremember{HUJI1}{The Hebrew University of Jerusalem, Israel. Email: yoav.kolumbus@mail.huji.ac.il.}%
\and Noam Nisan\footremember{HUJI2}{The Hebrew University of Jerusalem, Israel. Email: noam@cs.huji.ac.il.}
	}
\date{}
\begin{document}
\maketitle
\vspace{12pt}
\begin{abstract}
We analyze a scenario in which software agents implemented as regret-minimizing algorithms engage in a repeated auction on behalf of their users. We study first-price and second-price auctions, as well as their generalized versions (e.g., as those used for ad auctions). Using both theoretical analysis and simulations, we show that, surprisingly, in second-price auctions the players have incentives to misreport their true valuations to their own learning agents, while in first-price auctions it is a dominant strategy for all players to truthfully report their valuations to their agents.
\end{abstract}

\bibliographystyle{splncs03}

\section{Introduction}
This paper deals with the following common type of scenario: several users engage in a repeated online auction, where each of them is assisted by a learning agent.  A typical example is advertisers that compete for ad slots: typically, each of these advertisers enters his key parameters into some advertiser-facing website, and then this website's ``agent'' participates on the advertiser's behalf in a sequence of auctions for ad slots.  Often, the auction's platform designer provides this agent as its advertiser-facing user interface.  In cases where the platform's agent does not optimize sufficiently well for the advertiser (but rather, say, for the auctioneer), one would expect some other company to provide a better (for the advertiser) agent.  A typical learning algorithm of this type will have at its core some regret-minimization algorithm \cite{hart2000simple,blum2007learning}, such as ``multiplicative weights'' (see \cite{arora2012multiplicative} and references therein),  ``online gradient descent'' \cite{zinkevich2003online}, or some variant of fictitious play \cite{brown1951iterative,robinson1951iterative}, such as ``follow the perturbed leader'' \cite{hannan1957lapproximation,kalai2005efficient}. In particular, for ad auctions, there is empirical data showing that bids are largely consistent 
with no-regret learning \cite{nekipelov2015econometrics, noti2021bid}.  

While one's first intuition would be that an auction that is well designed for ``regular'' users would also 
work well for such agent-assisted users, it turns out that this is often not the case. In this paper we 
study how basic auction formats perform in terms of revenue and economic efficiency when 
software agents are doing the repeated bidding for the users.  Specifically, we assume that each of the
human bidders enters his valuation into a regret-minimizing software agent, and then the agent engages in
a long sequence of auctions, attempting to maximize utility for the human user.  We present both 
simulation results and theoretical analysis for several types of auctions: first-price, second-price, and 
generalized first- and second-price auctions (for multiple ad slots).  We study both classic regret-minimizing
algorithms like multiplicative weights or follow the perturbed leader, as well as the consequences of the
regret-minimization property itself.  

Before presenting our results, let us say what exactly we are looking at when studying a long sequence of auctions played by software agents.  While one might hope that there is some kind of convergence to fixed bids or distributions, it is well known that regret-minimization dynamics in many games need not converge \cite{daskalakis2010learning,hart2013simple,papadimitriou2018nash,bailey2018multiplicative,bailey2021stochastic}. For example, in our context, Figure \ref{fig:GFP_symmetric_bid_dynamics} shows the dynamics of a simulation of a generalized first-price auction for two slots where the two bidders have the same value.  
(Throughout the dynamics, both agents bid nearly the same value and thus only a single bid-line is visible.) 
As one may clearly see, the dynamics do not converge.  What we wish to study is the thing that is important to the auctioneer and to the bidders: the long-term distribution of the tuple of bids. For example, Figure \ref{fig:GFP_joint_bid_dist_diagonal} shows a density map of the empirical pair of bids in this simulation.  The empirical distribution of the tuple of bids determines the average winning frequencies and prices, which in turn determine the users' utilities and the auctioneer's revenue over the whole sequence of auctions. While there may not always be a theoretical guarantee that this empirical distribution converges either, one may still theoretically analyze properties that it may have in various cases as well as study it through simulations.

It is well known that in any game and for any regret-minimizing dynamics the distribution of tuples of bids
approaches a ``coarse correlated equilibrium'' \cite{hannan1957lapproximation,hart2000simple,young2004strategic,blum2007learning} (CCE), 
but as the space of CCEs is quite wide, this often does not suffice for an analysis of parameters of interest like the revenue
or social welfare.  For example, both for the first- and second-price auctions, there exist multiple (full information) CCEs, with different revenues and different social welfare \cite{feldman2016correlated}. One of the contributions of this work, presented in Section \ref{sec:co-undominated-CCE}, is the identification of a refinement of the class of CCEs which we term {\em co-undominated},  
to which alone a large class of regret-minimizing algorithms (``mean-based'' ones, as defined in \cite{braverman2018selling}) may converge in any game. 
In some cases this refinement suffices for an analysis, and for specific regret-minimization algorithms like multiplicative weights we may get an even finer analysis.

While there has been some work on the behavior of no-regret dynamics in auctions \cite{daskalakis2016learning,balseiro2019learning,feng2020convergence,deng2021nash}, we do not know of previous analysis of the average utilities and revenue of standard no-regret dynamics in standard auction formats. 
The closest paper conceptually to ours is \cite{feng2020convergence}, which manages to analyze these for no-regret dynamics that are preceded by a long pure exploration
phase by all agents. Here we provide theoretical analysis and simulations for standard regret-minimization algorithms (without a pre-exploration phase), 
sometimes getting very different results. In a companion paper  \cite{KolumbusNisan2021manipulate} we discuss the convergence of regret-minimizing agents and  analyze the incentives of the users of such agents in a general game-theoretic setting.

\begin{figure}[!t]
\centering
\vspace{-34pt}
	\begin{subfigure}{.49\linewidth}
			\vspace{8pt}

		\includegraphics[width=1.11\linewidth]{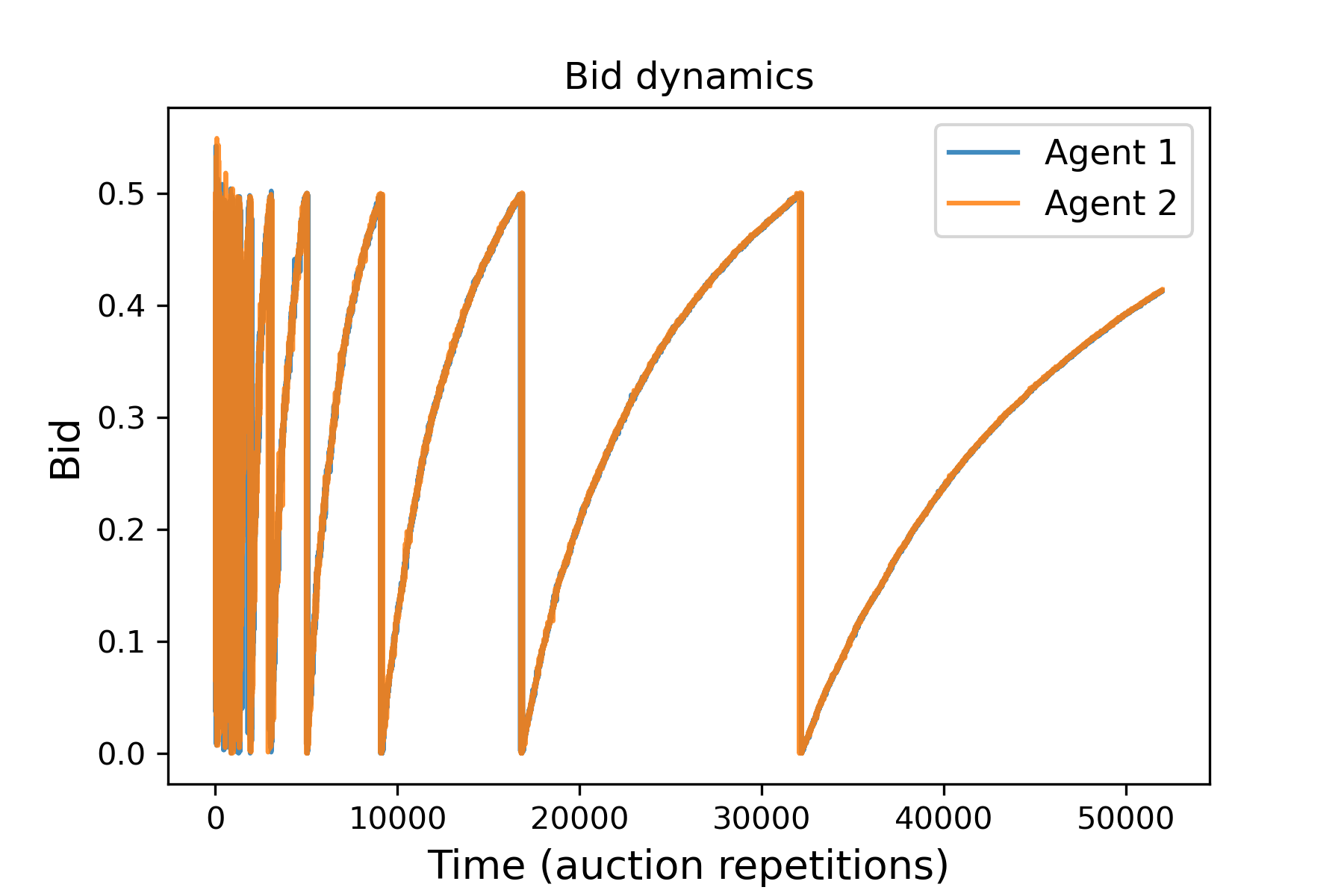}
		\caption{Bid dynamics}  
		\label{fig:GFP_symmetric_bid_dynamics}
	\end{subfigure}
	\begin{subfigure}{.49\linewidth} 
	\center
		\includegraphics[width=0.78\linewidth]{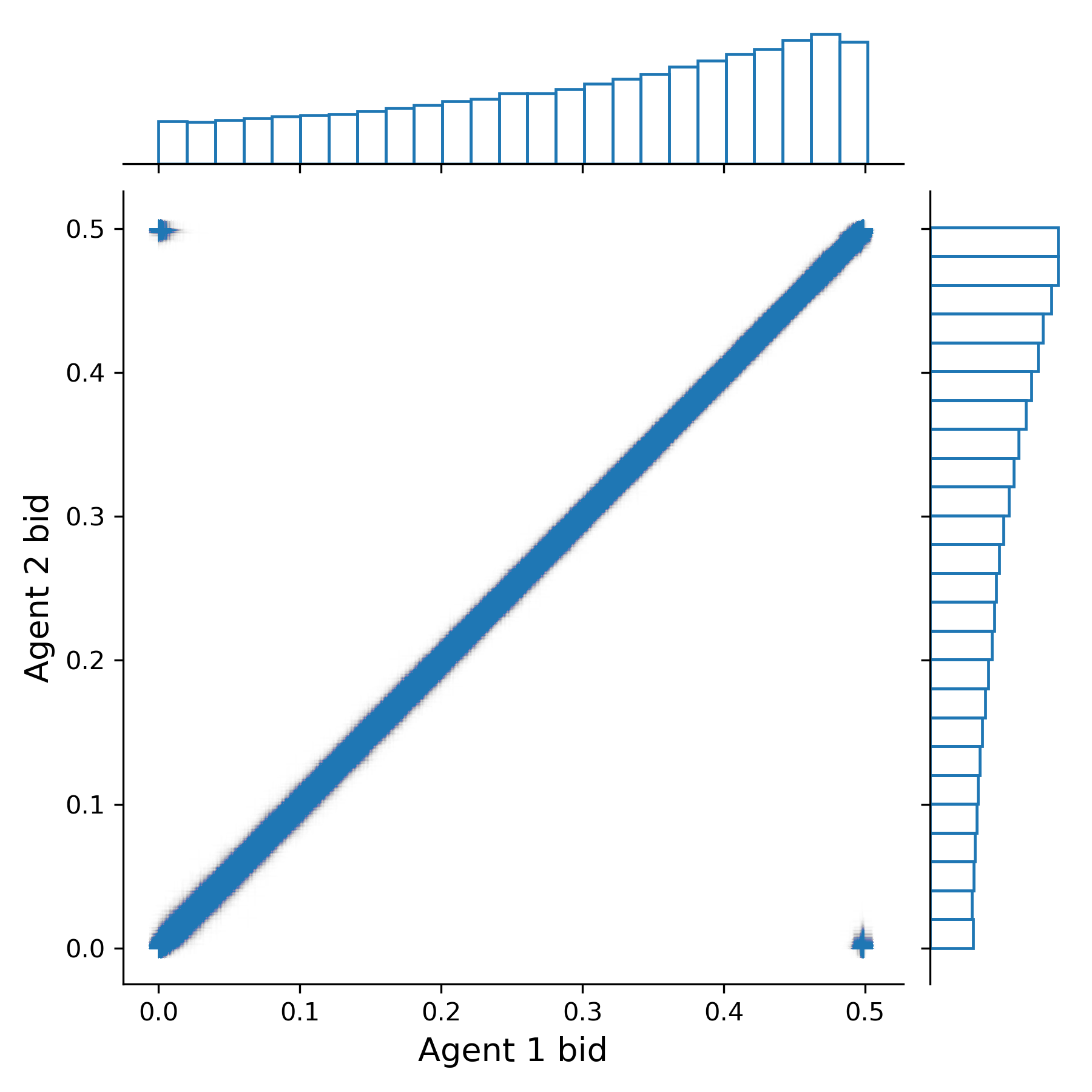}
		\caption{Joint distribution of bids}  
		\label{fig:GFP_joint_bid_dist_diagonal}
	\end{subfigure}
		\vspace{-4pt}
	\caption{{\small Dynamics and bid density in a symmetric GFP auction.}}
	\vspace{-8pt}
	\label{fig:GFP-bids}
\end{figure}

\subsection{Second-Price Auctions}
To demonstrate the difficulty, let us start by looking at the easiest-to-analyze auction: a second-price
auction of a single item among two users.  It is well known \cite{vickrey1961counterspeculation,edelman2007internet} that such an auction is 
(dominant-strategy) incentive compatible, i.e., that no bidder ever has any incentive to misreport his value to the auctioneer. Such truthful bidding will
result in an efficient allocation where the player with the higher value wins and pays a price equal to the value of the lower player.

So now let us look at our scenario,
where the two (human) users each report their value to a software agent (one agent for each player),  
and then the two (software) agents repeatedly engage in auctions, each agent attempting to optimize for its
own user according to the value reported to it.  (We assume that each auction in the sequence is for the same type of item, and 
that the values of the players for this type of item are fixed and additive over the items won.)  What do we expect to happen in this
case? Specifically: \emph{Is it in the best interest of users to report their true values
to their agents?  Will the player with the higher value always win? Will he pay a price equal to 
the value of the lower player?}

While positive answers to these three questions may seem to be trivially true, this turns out not to be 
the case, and only the second question gets a positive answer, while the answers to the first and third questions are negative.  

\begin{theorem}\label{thm:second-price-auction-MW-bid-distributions}
For two bidders with values $v>w$ with multiplicative-weights agents that engage in a sequence of second-price auctions on their behalf with discrete bid levels, the distribution of bids of the high player converges to uniform in the range $(w,v]$, while the 
distribution of the bids of the low player converges to a distribution that has full support on $[0,w]$.  In
particular, in the limit, the high player always wins and pays an average price that is strictly less than the second price.
\end{theorem}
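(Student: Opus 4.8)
The plan is to analyze the two coupled multiplicative-weights (MW) processes through their cumulative-reward vectors, identify the support on which each player's weight concentrates, and then read off the limiting shapes directly from the per-round reward structure of the second-price auction. Write $u_H(b,b_L)=(v-b_L)\mathbf{1}[b>b_L]$ and $u_L(b,b_H)=(w-b_H)\mathbf{1}[b>b_H]$ for the realized utilities, and recall that under MW with step size $\eta$ the probability player $i$ assigns to bid level $b$ at round $t$ is proportional to $\exp(\eta R_i^t(b))$, where $R_i^t(b)=\sum_{s<t}u_i(b,b_{-i}^s)$ is the cumulative utility that bid $b$ would have earned against the opponent's observed bids. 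Since each $R_i^t(b)$ is finite, every level carries positive weight at every finite time, so the whole question is how the weights redistribute in the limit.

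The first step is to pin down the supports. Because each agent's bids are capped at the value reported to it, the low player bids only in $[0,w]$, so against an opponent confined to $[0,w]$ every bid $b>w$ of the high player wins with certainty. Hence for any two levels $b,b'\in(w,v]$ the per-round utilities $u_H(b,b_L^s)=v-b_L^s=u_H(b',b_L^s)$ coincide at every round, giving $R_H^t(b)=R_H^t(b')$, so MW assigns them exactly equal weight throughout; this is precisely the mechanism forcing the high player's conditional distribution on $(w,v]$ to be uniform. Simultaneously, a bid $b>w$ weakly dominates any bid $b'\le w$ round by round, with per-round gap $u_H(b,b_L^s)-u_H(b',b_L^s)=(v-b_L^s)\mathbf{1}[b_L^s\ge b']\ge 0$, strictly positive whenever $L$ bids at least $b'$. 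As long as $L$ keeps positive weight on $[b',w]$ this gap accumulates linearly, so $R_H^t(b)-R_H^t(b')\to\infty$ and the high player's weight on $[0,w]$ decays exponentially. Together these yield convergence of $H$ to the uniform distribution on $(w,v]$.

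For the low player I would establish full support rather than a specific shape. Once $H$ has concentrated on $(w,v]$, in every round with $b_H^s>w$ each of $L$'s bids loses and earns utility $0$, so on the typical rounds all of $L$'s levels are tied; their cumulative rewards therefore stay within a bounded distance of one another, and since $\exp(\eta\cdot(\text{bounded}))$ is bounded away from $0$, every level of $[0,w]$ retains a positive share of the weight, i.e.\ full support. The shape is not pinned to uniform precisely because of the rounds in which $H$ bids at or below $w$: there a higher bid of $L$ can win and collect $w-b_H^s>0$, tilting $L$'s weights toward its higher levels by a bounded but generically nonzero amount. The stated consequences then follow: in the limit $b_H^s>w\ge b_L^s$ every round, so $H$ wins always; the price $H$ pays is $b_L^s$, whose time-average converges to the mean of $L$'s limiting distribution, and since that distribution has full support on $[0,w]$ it places positive mass strictly below $w$, making the average price strictly smaller than $w$, the second price that truthful one-shot bidding would produce.

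The main obstacle I anticipate is the circularity between the two concentration claims: the argument that $H$ abandons $[0,w]$ uses that $L$ retains mass up to near $w$, while the argument that $L$ stays spread over $[0,w]$ uses that $H$ is already above $w$. Making this rigorous requires turning the two statements into a self-reinforcing invariant — for instance, showing via the cumulative-reward recursions that the regime ``$H$ above $w$, $L$ spread below $w$'' is entered in finite time and is absorbing up to exponentially small probabilities, and bounding the total (summable) contribution of the atypical low-$H$ rounds so that neither the exact symmetry that makes $H$ uniform nor the boundedness of $L$'s reward gaps is destroyed. Controlling this transient precisely is what upgrades the qualitative support picture to the exact uniform limit for $H$ and the full-support claim for $L$.
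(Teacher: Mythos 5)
Your reward-structure observations are the right ones and match the paper's: all of the high player's bids in $(w,v]$ receive \emph{identical} per-round utility $v-b_L^s$ once the low player is confined to $[0,w]$, forcing exactly equal MW weights and hence the uniform limit; and once the high player is above $w$, all of the low player's bids earn zero, so his cumulative-reward gaps stay bounded and every level retains positive weight. But there is a genuine gap exactly where you flag it, and your proposed fix does not close it. You need an \emph{unconditional} lower bound on the frequency with which the low player bids near $w$ in order to make the reward gap between the high player's bids above and below $w$ grow linearly; your sketch only says this holds ``as long as $L$ keeps positive weight on $[b',w]$,'' and your argument that $L$'s weights stay spread assumes $H$ is already above $w$. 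Declaring that the regime is ``entered in finite time and is absorbing'' is precisely the statement to be proved, and the invariant you gesture at is not constructed.

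The paper breaks the circularity with one elementary observation you are missing: for the low player, the bid $w$ weakly dominates every other bid \emph{pointwise at every round}, regardless of what the high player does (if $b_H^s<w$ it wins and collects $w-b_H^s\geq$ anything any other bid collects; if $b_H^s\geq w$ all bids get zero, up to the tie at $b_H^s=w$ which still favors $w$ weakly). Hence $R_L^t(w)\geq R_L^t(b')$ for all $b'$ and all $t$, so under MW the probability that $L$ bids exactly $w$ never falls below its initial uniform share $1/(w/\epsilon+1)$, at all times and unconditionally. This immediately gives the linear accumulation you need: each of the high player's bids $b'\leq w$ loses to $L$'s bid $w$ at frequency at least $1/(w/\epsilon+1)$ while $w+\epsilon$ always wins with positive utility, so $H$'s mass on $[0,w]$ decays geometrically and the expected total number of rounds with $b_H^s\leq w$ is bounded by a convergent geometric series. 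Only \emph{then} does your full-support argument for $L$ go through: his weights update only in those finitely many (in expectation) rounds, so the ratios between his levels stay bounded. With this one dominance observation inserted at the start, the rest of your outline is essentially the paper's proof; without it, the ``self-reinforcing invariant'' remains an open step rather than an argument. (The paper also notes the resulting monotonicity $\Pr[0]\leq\Pr[\epsilon]\leq\cdots\leq\Pr[w]$, since any round in which a low bid wins is one in which every higher bid wins too; your full-support claim suffices for the strict-price conclusion but misses this sharper structure.)
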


Figure \ref{fig:SecondPriceV1W05BidDynamics} shows the dynamics of multiplicative-weights learning agents in a 
second-price auction, where the values of the two agents are $v=1$ and $w=0.5$, and Figure
\ref{fig:SecondPriceV1W05BidDensity} shows the long-term distribution of bids of each of the two agents that indeed 
behave as the theorem states: the high-value agent always wins -- except for a small number of auctions in the initial phase -- and the average price he pays is about $0.27$ (which is well below $w$). This is very much in contrast to the results of \cite{feng2020convergence}
that show that if we start with a sufficiently long pure exploration phase in which all agents bid uniformly, then dominant strategies are indeed reached.  
While the proved and observed lack of convergence to the dominant strategies may be surprising, the explanation is simple: while the dominant strategies do form a Nash equilibrium of the game, there are other (full-information) Nash equilibria of this game where the first player bids arbitrarily 
above the second price and the second player underbids arbitrarily.\footnote{
These equilibria are distinct from the known ``overbidding equilibria'' where the low bidder
bids above the high-bidder's value and occur even if we completely disallow our learning agents to 
overbid as we do in our simulations.}  
Regret dynamics do not necessarily choose the former and so we do get some distribution over the ``low revenue'' equilibria.

\begin{figure}[!t]
\centering
\vspace{-24pt}
	\begin{subfigure}{.32\linewidth}

		\includegraphics[width=1.13\linewidth]{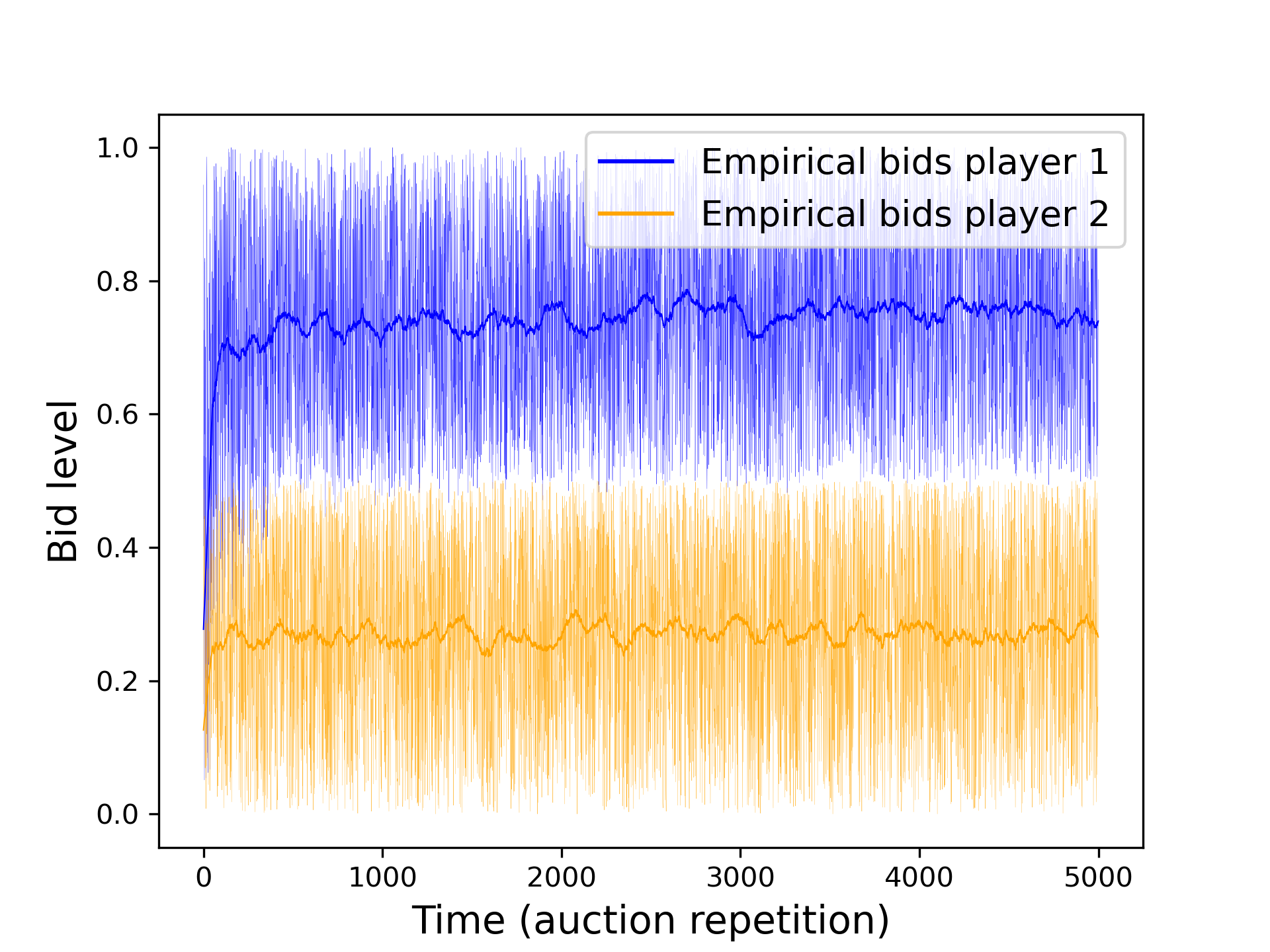}
		\caption{Bid dynamics}  
		\label{fig:SecondPriceV1W05BidDynamics}
	\end{subfigure}
	\begin{subfigure}{.32\linewidth} 
	\center
		\includegraphics[width=1.13\linewidth]{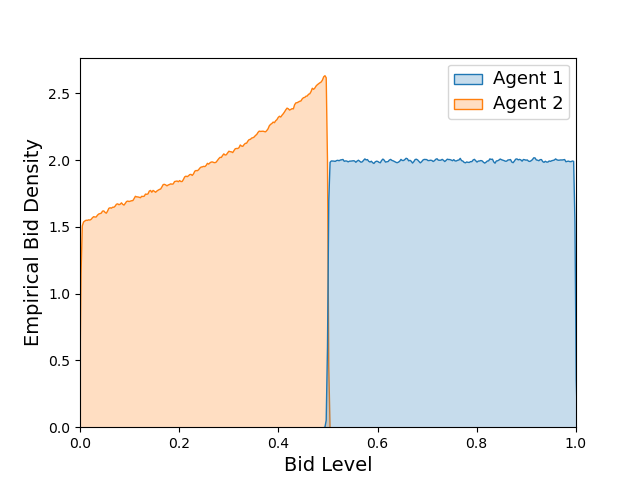}
		\caption{Bid distributions}  
		\label{fig:SecondPriceV1W05BidDensity}
	\end{subfigure}
	\begin{subfigure}{.32\linewidth}
		\includegraphics[width=1.13\linewidth]{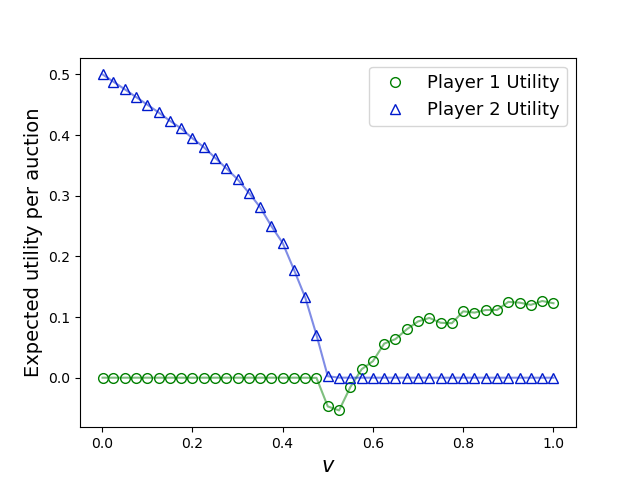}
		\caption{Value manipulation}
		\label{fig:SP_gains_from_manipulation}
	\end{subfigure}
	
	\caption{{\small Second-price auctions. Figure \ref{fig:SecondPriceV1W05BidDynamics}: Dynamics of multiplicative-weights agents, where agent $1$ has value $1$ and agent $2$ has value $0.5$. The solid lines show running-window averages of the bids of each agent over a window of $100$ auctions. 
	Figure \ref{fig:SecondPriceV1W05BidDensity}: Bid distribution in $5\cdot 10^6$ auctions for agent $1$ (in blue) and agent $2$ (in orange). 
Figure \ref{fig:SP_gains_from_manipulation}: Gains from value manipulation in a second-price auction where player $1$ has value $0.4$ and player $2$ has value $0.5$. The figure shows the utility for player $1$ from misreporting his value to his agent as $v$  for different $v$ values.
	}}
	\vspace{-12pt}
	\label{fig:second-price-suctions}
\end{figure}

The fact that the average price paid 
is well below the second price has immediate implications for a human user that has a value that lies 
between the average price in these dynamics and the actual second price. Such a user will gain from misreporting (exaggerating) his value {\em to his own agent}.  
In this example, a player with value, say, $v=0.4$ that bids truthfully against a second player with value $w=0.5$ will always lose, whereas misreporting his value as $v=1$ will lead to these simulated dynamics (as in Figure \ref{fig:second-price-suctions}), making him always win and achieve a strictly positive utility on average. One can continue this analysis, viewing the reports of the users to their own agents as defining a 
``meta-game'' between the users, and study the reached equilibrium. In a companion paper \cite{KolumbusNisan2021manipulate}, we define and study this type of 
``meta-game''  in a general context, and in Section \ref{sec:second-price} we analyze it for the second-price 
auction.\footnote{Spoiler: there are multiple possible equilibria in which the high player exaggerates his value as much as possible 
and the low player reports anything below a maximum value that is less than twice the true value of the high player.}

In Appendix \ref{sec:appendix-second-price} we also demonstrate that these types of phenomena happen not only with 
multiplicative-weights algorithms, but also with other classic regret-minimization learning algorithms such as 
``follow the perturbed leader'' (FTPL). We also show in Section \ref{sec:second-price} that the same phenomena happen in a generalized second-price (GSP) auction 
similar to the auctions used by search engines to sell ad slots in search result web pages.  
One may well wonder whether the advertiser-optimizing software supplied by the search engines indeed optimizes for the advertiser 
(in the sense of regret minimization), and whether the search engines lose revenue by using such a generalized second-price
auction.\footnote{Practically, as the number of advertisers may be large, with some distribution
on the participants' values in each auction, the revenue loss should be smaller than in our simulations. Still, it is far from clear that this loss is small.}

\subsection{First-Price Auctions}
We now turn our attention to basic first-price auctions (for a single good).   What do we expect to happen
in a repeated first-price auction between two regret-minimizing algorithms with values $v \geq w$?
Our basic intuition may be that they reach some kind of (full information) equilibrium
of the first-price auction.  It is well known that the only pure Nash equilibrium is where the low bidder
bids his value (up to an $\epsilon$) and the high bidder bids $\epsilon$ more than that.  
This means that we would expect the high player to win (almost always) and pay (close to) the second price.  Mixed Nash and 
correlated equilibria of the first-price auction are also known to yield the same outcome 
(namely, the high player always wins and pays the second price) \cite{feldman2016correlated}. Figure \ref{fig:first-price-suctions-dynamics} strengthens this intuition by showing the dynamics of first-price auctions with values $v=1$ and $w=0.5$ (Figure \ref{fig:FirstPrice-non-symmetric}) and with symmetric values $v=w=1$ (Figure \ref{fig:FirstPrice-symmetric}) between two multiplicative-weights algorithms that indeed converge to this outcome.\footnote{It should be noted, though, that this convergence in the symmetric case is very slow, perhaps the slowest that we have seen in our simulations.}

\begin{figure}[!t]
\centering
\vspace{-24pt}

	\begin{subfigure}{.49\linewidth} 
	\center
		\includegraphics[width=1.0\linewidth]{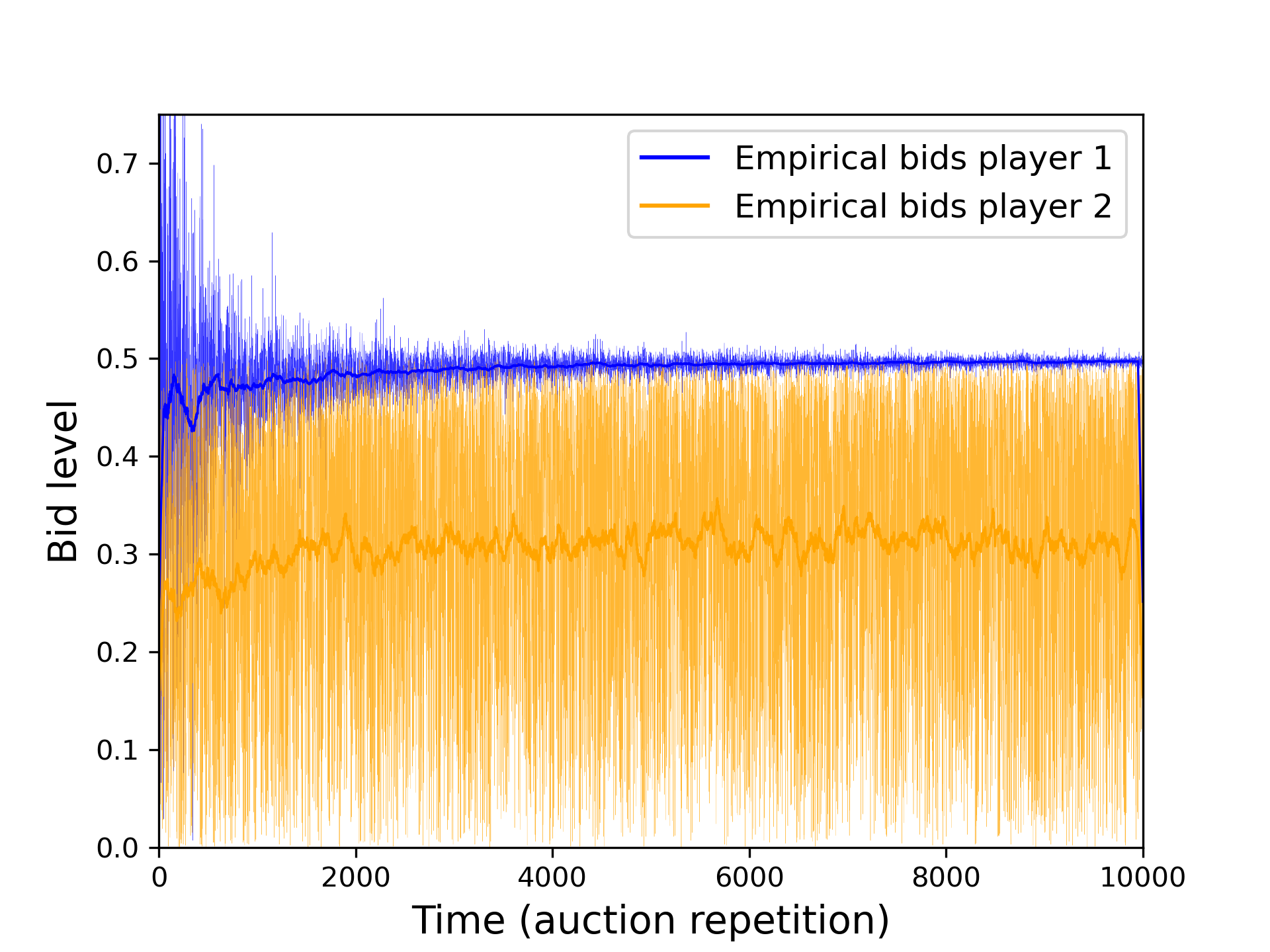}
		\caption{$v=1$, $w=0.5$}  
		\label{fig:FirstPrice-non-symmetric}
	\end{subfigure}
		\begin{subfigure}{.49\linewidth}
		\vspace{17.5pt}
		\includegraphics[width=0.894\linewidth]{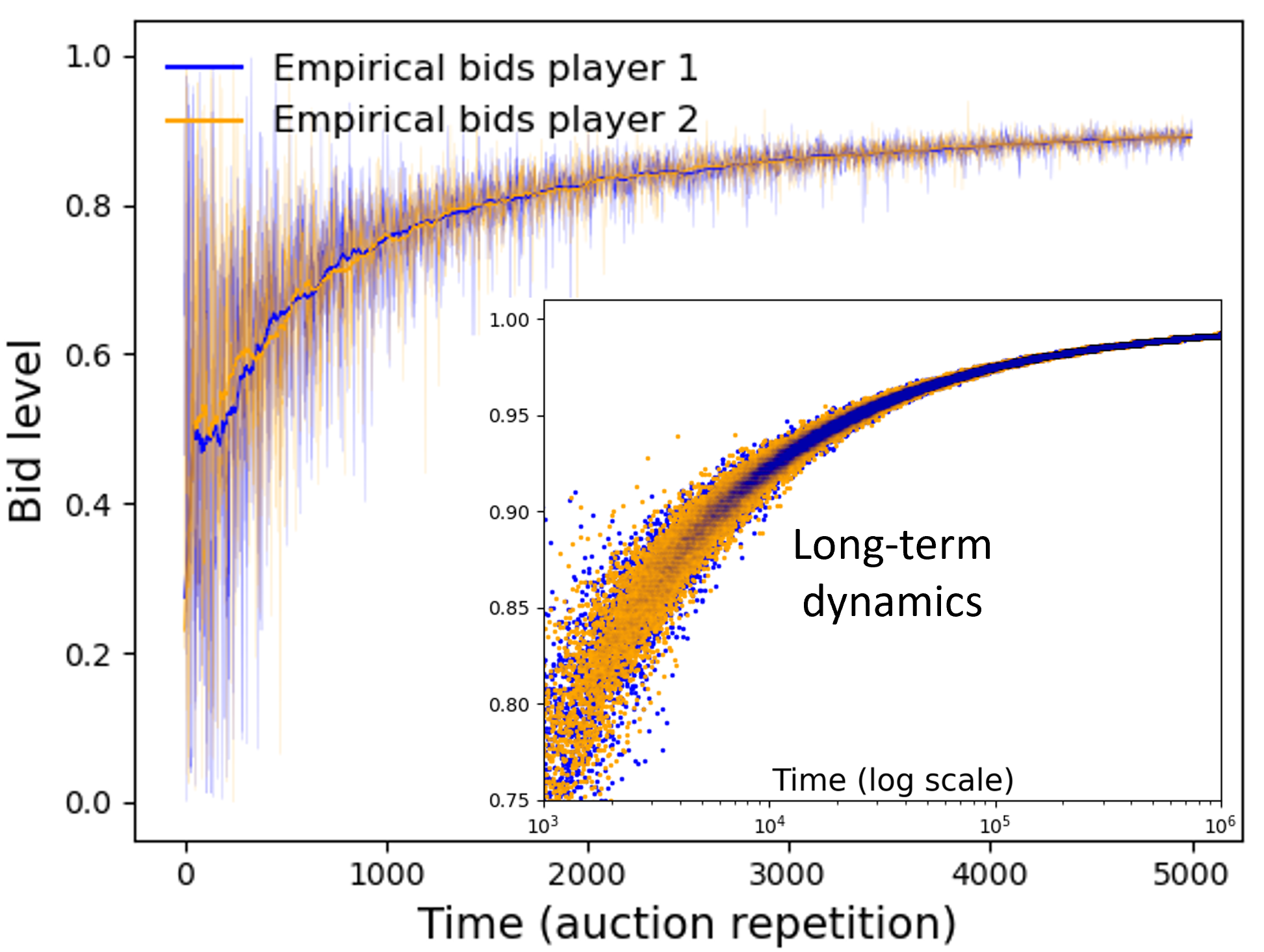}
		\caption{$v=w=1$}
			\vspace{0.5pt}
		\label{fig:FirstPrice-symmetric}
	\end{subfigure}
	\vspace{-6pt}
	\caption{{\small Dynamics of multiplicative-weights agents playing a repeated first-price auction, where agent $1$ has value $v$ and agent $2$ has value $w$. The solid lines show running-window averages of the bids of each agent over a window of $100$ auctions.	}}
	\vspace{-14pt}
	\label{fig:first-price-suctions-dynamics}
\end{figure}

The missing element in this intuition is that regret-minimization dynamics need not converge to a Nash equilibrium (or even a correlated
equilibrium) but may reach an arbitrary coarse correlated equilibrium, and it turns out that the first-price auction also has other coarse correlated equilibria \cite{feldman2016correlated}, and in many of these coarse correlated equilibria the revenue is well below the second price, and sometimes even the low player wins. While in \cite{feldman2016correlated} these coarse correlated equilibria were analyzed in a continuous setting, they also
exist with simple discrete bidding levels.  For example, consider an auction with $v=w=1$;  
the reader may verify that the distribution that gives equal weights to the following four
pairs of bids is a coarse correlated equilibrium: $\{(0,0), (0.46, 0.46), (0.64,0.64), (0.73,0.73)\}$.
It is true in general that for {\em every} CCE there exist {\em some} regret-minimizing algorithms that converge to it (\cite{monnot2017limits}, see also \cite{KolumbusNisan2021manipulate}). Specifically, in our setting, consider agents that start by playing a sequence of bids that converge to
the desired CCE (according to some fixed-in-advance schedule); each software agent proceeds this way as long as the other agent does, and in the case of any deviation reverts to a standard regret-minimizing algorithm.  Thus, regret-minimizing agents need not converge to the ``second-price outcome''  
in general.\footnote{Note that if one considers only dynamics of algorithms that minimize conditional (``internal'') regret which are known to approach the polytope of correlated equilibria \cite{foster1997calibrated,fudenberg1999conditional,hart2000simple,blum2007learning} (rather than minimizing unconditional (``external'') regret), then, since correlated equilibria of the first-price auction yield the second-price outcome \cite{feldman2016correlated}, 
the empirical average of such joint dynamics would approach the second-price outcome.}
  We are able to prove, however, that a large class of regret-minimizing algorithms, 
defined in \cite{braverman2018selling} and called ``mean-based'' algorithms, may converge only to this second-price outcome.
Informally, algorithms from this class cannot put too much probability on actions that previously performed poorly.

\begin{theorem}\label{thm:first-price-auctions}
In a repeated first-price auction between two mean-based regret-minimizing agents with possible bid
levels that are a discrete $\epsilon$-grid, if the dynamics converge to any single distribution,  
then the high player wins with probability approaching $1$ and pays an average price that converges to the second price (up to $O(\epsilon)$).
\end{theorem}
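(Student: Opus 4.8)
The plan is to use the mean-based hypothesis to upgrade the generic coarse-correlated-equilibrium guarantee into a genuine (marginal) Nash condition, and then to analyze the Nash equilibria of the discretized first-price auction directly. Write $\sigma$ for the limiting joint distribution of bids, $\sigma_1,\sigma_2$ for its marginals, and let $G_i(b)$ be the probability that a bid $b$ beats player $i$ (accounting for grid tie-breaking). First I would note that, since the empirical distribution of the opponent's realized bids converges to $\sigma_{-i}$, the counterfactual time-averaged reward of any fixed bid $b$ converges to its expected payoff against $\sigma_{-i}$, namely $(v-b)G_2(b)$ for player $1$ and $(w-b)G_1(b)$ for player $2$. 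The mean-based property says a bid whose cumulative reward trails the running leader by more than $\gamma T$ gets probability at most $\gamma$; a bid with a strictly suboptimal limiting average falls this far behind after an initial vanishing fraction of rounds, so each player asymptotically places negligible total weight on it. Hence $\mathrm{supp}(\sigma_i)\subseteq\arg\max_b(\text{payoff of }b\text{ against }\sigma_{-i})$, i.e.\ $(\sigma_1,\sigma_2)$ is a Nash equilibrium of the $\epsilon$-grid first-price auction. \textbf{This reduction is the whole point of the mean-based assumption:} the first-price auction has low-revenue coarse correlated equilibria, but (as in the continuous analysis of \cite{feldman2016correlated}) its Nash equilibria are well behaved, and it is exactly the best-response condition, strictly stronger than the CCE condition, that rules the bad outcomes out.

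The second step is to characterize these grid Nash equilibria. Using that bidding above one's own value is weakly dominated (and disallowed in our model), player $2$ bids at most $w$, so player $1$ can guarantee $g_1^*\ge v-w-\epsilon$ by bidding $w+\epsilon$ and winning for sure. I would first show $g_2^*=0$: if $g_2^*>0$ then player $2$ abandons every sure-losing bid, in particular all bids at or below player $1$'s lowest bid $\underline b_1$, whence $G_2(\underline b_1)=0$ and player $1$'s bid $\underline b_1$ wins with probability $0$, contradicting $g_1^*\ge v-w-\epsilon>0$. Given $g_2^*=0$, the best-response equalities pin player $1$'s support into $[\,w-O(\epsilon),\,w+\epsilon\,]$: the upper bound follows from $(v-b)G_2(b)=g_1^*\ge v-w-\epsilon$ together with $G_2(b)\le 1$, while the lower bound follows because any bid $c<w-\epsilon$ that player $1$ used with positive probability would let player $2$ profitably deviate to $c+\epsilon$, gaining $w-c-\epsilon>0=g_2^*$. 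Consequently the winner's payment, which equals player $1$'s own bid, converges to $w$ up to $O(\epsilon)$.

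It remains to show the high player wins with probability tending to $1$. Since $g_2^*=0$ and player $2$ never overbids, the nonnegative quantity $(w-b_2)\mathbf 1[\,2\text{ wins}\,]$ has zero expectation, so player $2$ can win \emph{only} by bidding exactly $w$, its break-even bid. A short best-response calculation then bounds any atom player $2$ places at $w$: were that atom larger than $O(\epsilon)$, player $1$ would strictly prefer $w+\epsilon$ (a sure win) to every bid below $w$ and so would never be undercut; otherwise the atom, and hence player $2$'s winning probability, is $O(\epsilon)$. Either way player $2$ wins with vanishing probability. I expect the main obstacle to be precisely this discrete best-response bookkeeping around the boundary bid $w$ — controlling atoms and grid tie-breaking so as to convert the clean payoff statement ($g_2^*=0$) into the allocation statement (win probability $\to 1$) — rather than the conceptual reduction of the first step, which is where the mean-based hypothesis does its essential work.
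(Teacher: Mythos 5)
Your proposal is correct in substance but takes a genuinely different route from the paper. The paper never upgrades the limit to a Nash condition: its Lemma~\ref{thm:mean-based-CCE-supports} shows only that mean-based convergence forces a \emph{co-undominated} CCE (no supported action weakly dominated relative to the opponent's support), and then Lemmas~\ref{thm:FP-CCE-with-unduminated-supports} and~\ref{thm:symmetric-FP-CCE-supports} pin down the supports of such CCEs by a four-case analysis ($b_1=B_1$, $b_1=b_2$, $b_1<b_2$, $b_1>b_2$). Your first step proves something strictly stronger: since the counterfactual cumulative reward of a fixed bid depends only on the opponent's \emph{empirical marginal}, the mean-based clause kills every bid with a fixed suboptimality gap against $\sigma_{-i}$, so $\mathrm{supp}(\sigma_i)\subseteq\arg\max_b U_i(b\mid\sigma_{-i})$, i.e.\ the marginals form a grid Nash equilibrium (note this implies co-undominatedness: weak dominance within the opponent's support forces a strict gap against the marginal). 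This claim is right, and it is provable by exactly the two-claim concentration argument in the paper's proof of Lemma~\ref{thm:mean-based-CCE-supports} — your phrase ``falls this far behind after an initial vanishing fraction of rounds'' hides the same $\epsilon T$ initial segment, conditional-probability bound $\gamma(T)$, and contradiction-with-convergence bookkeeping, which you should spell out. What the two approaches buy: yours gives a sharper structural conclusion and a cleaner second stage (equilibrium payoffs $g_1^*,g_2^*$ and undercutting, rather than case analysis over correlated supports), while the paper's weaker co-undominated notion is deliberately game-independent currency that it reuses elsewhere (e.g.\ for the second-price lemma) and is offered as a standalone refinement concept.

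Two local repairs are needed, both patchable. First, in your $g_2^*=0$ argument, the bid \emph{exactly} $\underline b_1$ is not sure-losing: player $1$ has an atom there, so player $2$ can tie and win with probability $\tfrac12\Pr_1[\underline b_1]$. You must add the undercut step: if player $2$ has an atom at $\underline b_1<w-2\epsilon$, deviating to $\underline b_1+\epsilon$ converts half-ties into sure wins and is profitable; this lands exactly on the paper's Case~II and yields either $G_2(\underline b_1)=0$ (your contradiction) or $\underline b_1\geq w-2\epsilon$ (the desired bound anyway). Second, in the allocation step, do not conflate expectations against the marginal with expectations under the correlated joint: the limit $\sigma$ is correlated, and the CCE condition gives $\E_{\sigma}[u_2]\geq g_2^*$, not equality, so ``$(w-b_2)\mathbf 1[2\text{ wins}]$ has zero expectation'' needs justification. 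The clean route is the one you gesture at: $g_2^*=0$ in fact forces $\sigma_1\subseteq\{w,w+\epsilon\}$ (any atom of player $1$ at $w-\epsilon$ or below would give player $2$ a positive-margin tie or undercut, contradicting $g_2^*=0$), whence player $2$ ``wins'' only via zero-margin ties at $w$, and player $1$'s indifference at $w$ bounds $\Pr_2[w]\leq 2\epsilon/(v-w)$. Note the resulting win-probability bound is $O(\epsilon/(v-w))$, so, exactly as the paper states in Lemma~\ref{thm:FP-CCE-with-unduminated-supports}, the allocation claim requires $\epsilon$ small relative to $v-w$; your argument also silently assumes $v>w+\epsilon$ (it uses $g_1^*\geq v-w-\epsilon>0$), whereas the paper handles the symmetric case $v=w$ separately in Lemma~\ref{thm:symmetric-FP-CCE-supports} — harmless here since the theorem concerns a ``high player,'' but worth stating.
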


Following our simulations, we conjecture that the assumption that the dynamics converge 
(a notion defined formally in Section \ref{sec:preliminaries}) is not really required, at least for the multiplicative-weights algorithm.

As opposed to the case of the second-price auction where our results were in contrast to those 
with an additional pre-exploration phase \cite{feng2020convergence}, here we get the same type of result  
but for natural algorithms, without requiring this additional exploration phase.

A rather direct corollary of the fact that we get the second-price outcome is that from the point of view
of the human users that report their values to their regret-minimizing agents, they observe a
second-price auction.  Since this auction is incentive compatible, we get that there is never any incentive 
for a user to misreport his value to his own agent (up to a vanishing loss in the long term).  Recall
that this is in contrast to the case of the second-price auction described above.

\subsection{Generalized First-Price Auctions}
We now proceed to the more complex generalized first-price (GFP) auction.  
Specifically, we focus our attention on the following first price two-slot auction: two ``ad slots'' are being sold 
in an auction to two bidders. The ``top'' ad slot has a ``click-through rate'' of $1$, while the ``bottom'' ad slot 
has a click-through rate of $0.5$.  Each of the bidders has a value ``per click'' and so his value for the
bottom slot is exactly half of his value for the top slot.

Our first result is the analysis of the (mixed) Nash equilibrium of
this game.  While there has been much work on equilibria in related auction settings (e.g., 
\cite{chawla2013auctions, dutting2019expressiveness, feldman2016correlated,hoy2013dynamic}), 
to the best of our knowledge, this paper is the first to
successfully analyze the Nash equilibrium of 
a generalized first-price auction.

\begin{theorem}\label{thm:GFP-mixed-Nash}
Assume w.l.o.g. that $v \geq w$. In the unique Nash equilibrium of the two-slot auction described above, the players mix their bids $x,y$, respectively,
according to the following cumulative density functions with support $[0,w/2]$. 
\small
\begin{equation*}
F(x) = \frac{x}{w-x} \quad \quad \quad \quad G(y) = \left(2\frac{v}{w} -1 \right)\frac{y}{v-y}.
\end{equation*} 
\normalsize
The expected payoffs are {\small$u_1 = v/2 + (v-w)(1 - \ln(2))$}, and {\small$u_2 = w/2.$}
\end{theorem}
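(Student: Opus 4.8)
The plan is to prove the claim as a verification-plus-uniqueness result for a mixed Nash equilibrium. First I would fix the stage payoffs of the two-slot GFP: if player~1 bids $x$ and player~2 bids $y$, the higher bid takes the top slot (click-through rate $1$) and the lower bid the bottom slot (rate $1/2$), and, the auction being first-price, each bidder pays its own per-click bid; hence player~1 earns $v-x$ when $x>y$ and $\tfrac12(v-x)$ when $x<y$, and symmetrically player~2 earns $w-y$ or $\tfrac12(w-y)$. Integrating out the opponent, player~2's expected payoff to a bid $y$ against $x\sim F$ is $u_2(y)=\tfrac12(w-y)\bigl(1+F(y)\bigr)$; substituting $F(y)=y/(w-y)$ makes the bracket equal $w/(w-y)$ and collapses this to the constant $w/2$ for every $y\in[0,w/2]$, while any $y>w/2$ gives $w-y<w/2$. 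Thus $F$ renders player~2 indifferent on $[0,w/2]$ and deters higher bids, so $G$ (indeed any mixture supported on $[0,w/2]$) is a best response and $u_2=w/2$. The symmetric relation $u_1(x)=\tfrac12(v-x)\bigl(1+G(x)\bigr)$ against $y\sim G$ is the object I would use to tie down the high bidder's side.

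For the payoffs I would evaluate the expected utilities directly under the candidate pair $(F,G)$. Player~1's payoff is $u_1=\int_0^{w/2}\tfrac12(v-x)\bigl(1+G(x)\bigr)\,dF(x)$; with the stated $G$ the integrand simplifies to the linear form $v/2+(v/w-1)x$, so $u_1=v/2+(v/w-1)\,\E_F[x]$. The expectation is computed from the density $f(x)=F'(x)=w/(w-x)^2$, and the only nontrivial integral, $\int_0^{w/2}x\,\tfrac{w}{(w-x)^2}\,dx$, becomes $w\int_{w/2}^{w}\bigl(\tfrac{w}{t^2}-\tfrac1t\bigr)\,dt=w(1-\ln 2)$ after the substitution $t=w-x$; this is exactly where the logarithm appears. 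Hence $\E_F[x]=w(1-\ln 2)$ and $u_1=v/2+(v/w-1)\,w(1-\ln 2)=v/2+(v-w)(1-\ln 2)$, while the already-established constancy of player~2's integrand gives $u_2=w/2$.

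The remaining and most delicate task is uniqueness. Here I would run the standard template for asymmetric first-price-type competitions: argue that in any equilibrium neither bidder bids above $w/2$ (such a bid secures the top slot but at a strictly worse price than bidding at $w/2$), that each support is an interval with no interior gap (a gap permits a downward shading that preserves the allocation probability), and that there is no interior atom (an atom would give the opponent a profitable upward jump). Under these structural restrictions the two indifference identities, $1+F(\cdot)=2u_2/(w-\cdot)$ and $1+G(\cdot)=2u_1/(v-\cdot)$, together with the normalizations at $0$ and at $w/2$, pin down $F$, $G$ and the constants $u_1,u_2$. I expect the crux to be the asymmetry $v>w$: the two indifference conditions have different ``natural'' upper endpoints, so the hard part is to prove that the common support is exactly $[0,w/2]$ and to control the boundary behavior of the high bidder's marginal there --- i.e.\ to show that the candidate distributions, and not some variant carrying a boundary atom, are the ones selected. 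Once the support is fixed at $[0,w/2]$ and atoms are excluded, the explicit indifference relations close the argument and yield uniqueness of both the strategies and the payoffs.
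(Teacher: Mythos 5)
Your player-2 side is right (against $F(y)=y/(w-y)$ the bracket collapses and $u_2\equiv w/2$ on $[0,w/2]$, with higher bids dominated), and your payoff integral $\mathbb{E}_F[x]=w(1-\ln 2)$ is correct. But the verification half of your plan is only half done, and the missing half fails. You defer the check that $F$ is a best response for player 1 against the stated $G$, yet your own computation already shows it cannot succeed: you find $u_1(x)=\tfrac12(v-x)\bigl(1+G(x)\bigr)=\tfrac{v}{2}+\tfrac{v-w}{w}\,x$ --- the very ``linear form'' you integrate to get $u_1=v/2+(v-w)(1-\ln 2)$. For $v>w$ this is strictly increasing on $[0,w/2]$, so player 1 is \emph{not} indifferent over the support of $F$; deviating to the pure bid $w/2$ yields $v-w/2$, which exceeds the claimed $v/2+(v-w)(1-\ln 2)$ by $(v-w)(\ln 2-\tfrac12)>0$. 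Hence the stated pair $(F,G)$ is a Nash equilibrium only in the symmetric case $v=w$. Moreover, your own uniqueness identity $1+G(\cdot)=2u_1/(v-\cdot)$ with $G(0)=0$ forces $G(y)=y/(v-y)$, which has $G(w/2)=w/(2v-w)<1$ when $v>w$: the boundary-atom issue you flag as ``the crux'' is real and unavoidable. Working it out, the low bidder must carry an atom of mass $(v-w)/v$ at $y=0$ with continuous part $G(y)=(v-w+y)/(v-y)$ on $[0,w/2]$, giving $u_1=v-w/2$; so your uniqueness program, executed honestly, \emph{contradicts} the theorem's $G$ and $u_1$ rather than confirming them.

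You also cannot repair this by consulting the paper's proof, because it makes the corresponding slip. The paper writes the first-order condition $(v-x)g(x)-G(x)-1=0$, differentiates once more to get $(v-y)g'(y)=2g(y)$, and normalizes $g(y)=c_2/(v-y)^2$ on $[0,w/2]$. But the extra differentiation enlarges the solution family from the one-parameter $G(y)=C/(v-y)-1$ to the two-parameter $G(y)=c_2/(v-y)+c_3$, and the boundary conditions $G(0)=0$, $G(w/2)=1$ select $c_3\neq -1$ whenever $v>w$ --- i.e., a $G$ that no longer satisfies the indifference condition from which it was derived. Your verify-then-prove-uniqueness strategy is methodologically sounder than the paper's derivation (it is exactly what exposes the problem), but as written your proposal does not prove the statement: the player-1 best-response check and the atom analysis at the lower boundary are the missing pieces, and for $v>w$ they show the claimed equilibrium and payoff $u_1$ are those of the symmetric case only.
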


Running an actual simulation of multiplicative-weights agents participating in such an auction with 
$v=w=1$ shows that the agents do not get even close to any uncoupled mixing profile as this Nash equilibrium.  Figure \ref{fig:GFP_symmetric_bid_dynamics}
 shows the dynamics of a (typical) simulation. Throughout the dynamics, both agents bid nearly the same and thus only a single bid-line is visible. 
Figure \ref{fig:GFP_joint_bid_dist_diagonal} shows the density of the empirical pair of bids. The lack of convergence to a Nash equilibrium is not surprising since we only expect to reach a CCE of the game, not necessarily the Nash equilibrium.  

Most of Section \ref{sec:GFP} is devoted to studying the equilibrium that is actually reached by multiplicative-weights algorithms.
We observe via simulations that the bid distribution of the high bidder agrees with the computed distribution of the Nash equilibrium {\small $F(x) = \frac{x}{w-x}$}, but that the other player's bid is strongly correlated with the high player's bid (rather than independent of it as in the Nash equilibrium).  In fact, we see that in the achieved equilibrium the low player either bids (almost) identically to the first player or bids (close to) $0$, where the probability of the latter is proportional to the ratio between the high and low bids. As we show in Section \ref{sec:GFP}, our closed-form formulas agree well with the simulation results, but we do not have a complete theoretical analysis of this and such an analysis seems to be difficult.

\begin{figure}[!t]
\centering
\vspace{-24pt}
	\begin{subfigure}{.49\linewidth}
		\includegraphics[width=1.0\linewidth]{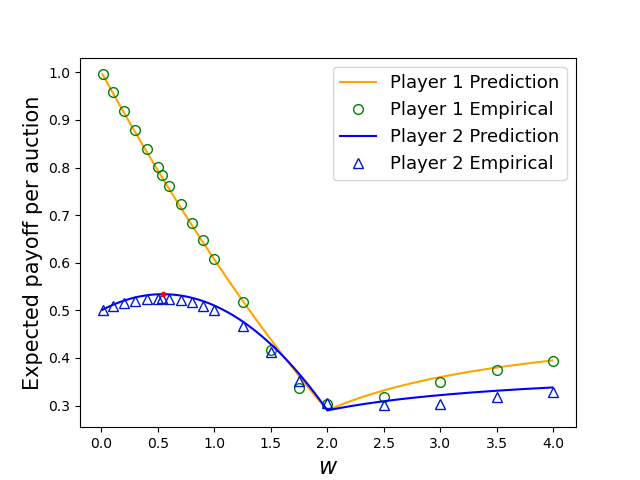}
		\caption{Player 1 plays $v=2$}
		\label{fig:w_manipulation_fixed_v=2}
	\end{subfigure}
	\begin{subfigure}{.49\linewidth} 
		\includegraphics[width=1.0\linewidth]{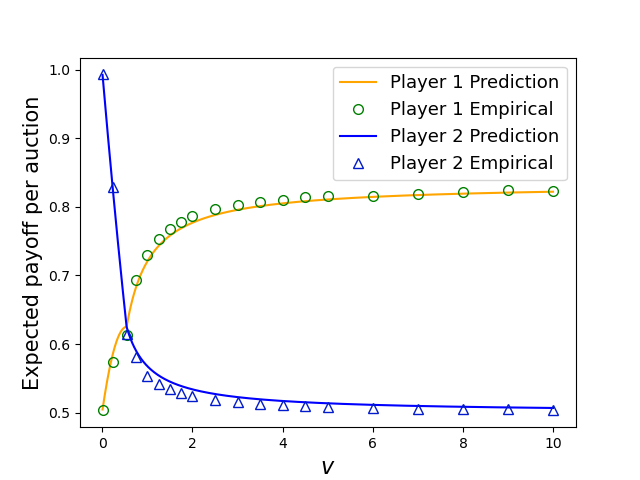}
		\caption{Player 2 plays $w\cong0.54$}  
		\label{fig:v_manipulation_fixed_w=054}
	\end{subfigure}
\vspace{-2pt}
\caption{{\small Payoffs in generalized first-price auctions according to our analytically predicted distribution compared with simulation results. The figure depicts the payoffs of the two players when one player declares a fixed value as a function of the declared value of the other player, where the declaration of player 1 is denoted by $v$ and the declaration of player 2 is denoted by $w$. }}
\label{fig:GFP-empirical-equilibrium}
\vspace{-3pt}
\end{figure}

The closed-form formulas do allow us, however, to
analytically compute the achieved winning probabilities and prices in this auction and thus also the users' payoffs from 
different reports in the ``meta-game'' where the users report their values 
to their respective agents. In particular, we see that even with true reports, the users' average utilities are (slightly) higher than in the 
Nash equilibrium, implying that their agents achieved some sort of implicit collusion.  
Demonstrating an analysis of strategic reports of the users to their agents,
Figures \ref{fig:w_manipulation_fixed_v=2} and \ref{fig:v_manipulation_fixed_w=054} show the utilities that
users with a true value of $1$ would get from strategically reporting different values to their own agent as a reply to two fixed reports ($0.54$ and $2$)\footnote{The value $2$ is used here as a proxy for ``a very high value''; the results for higher values are similar but harder to eyeball.} of the other
user.  We first see the very close agreement of the closed-form formulas with the actual simulation results.  
Second, these results demonstrate the equilibrium in the meta-game where one bidder exaggerates his bid as much as possible, while the other one underbids somewhat.  Calculating the utilities of the players in this meta-game equilibrium reveals that severe implicit collusion occurs, where the users capture $89\%$ of the total welfare, leaving only $11\%$ to the auctioneer 
(in contrast to $33\%$ to the auctioneer in the Nash equilibrium and $31\%$ to the auctioneer in the equilibrium reached for truthful reports).

\section{Further Related Work}
Regret minimization in repeated strategic interactions and in online decision problems has been extensively studied in the literatures of game theory, machine learning, and optimization. Early regret-minimizing algorithms were related to the notion of fictitious play \cite{brown1951iterative,robinson1951iterative} (a.k.a. ``follow the leader''), which in its basic form does not guarantee low regret, but whose smoothed variants, such as ``follow the perturbed leader'' (FTPL) \cite{hannan1957lapproximation,kalai2002geometric,kalai2005efficient,grigoriadis1995sublinear} or ``follow the regularized leader'' (FTRL) \cite{shalev2011online}, are known to guarantee adversarial regret of $O(\sqrt{T})$ in $T$ decisions. Another common approach to regret minimization is the 
``multiplicative-weights'' algorithm which has been developed and studied in many variants (see \cite{arora2012multiplicative} and references therein). 
Similar ideas of regret minimization apply both in settings in which players observe full feedback on their past actions and forgone actions (the experts-advice setting) and in the multi-armed bandit setting, where feedback is available at every step only for the action that was selected at that step, and in which variants of multiplicative weights (such as EXP3 \cite{auer2002finite,langford2007epoch}) and other algorithms have been developed.  

Regret minimization has also been widely studied in auctions, with two prominent lines of work focusing on the econometric problem of inferring structural parameters of the utilities of the bidders \cite{nekipelov2015econometrics,Noti2017Empirical,nisan2017quantal,gentry2018structural}, 
and on the price of anarchy in classic auction formats and in combinatorial auctions when the bidders are no-regret learners \cite{blum2008regret,roughgarden2012price,syrgkanis2013composable,caragiannis2015bounding,hartline2015no,daskalakis2016learning,roughgarden2017price}. 
Other works deal with the design of automated bidding algorithms for auctions with more complex factors, such as budget constraints \cite{balseiro2019learning,deng2021autoBidding}, or the utilization of regret minimization ideas for optimizing reserve prices \cite{roughgarden2019minimizing,mohri2014optimal,cesa2014regret}, and in \cite{noti2021bid,alaei2019response} regret-minimization algorithms were used to predict bidder behavior. Importantly, \cite{nekipelov2015econometrics,noti2021bid} provide concrete empirical evidence from large ad auction datasets showing that actual bidder behavior is consistent with no-regret learning.   

The standard first-price and second-price auctions 
have been widely studied in auction theory \cite{krishna2009auction,milgrom1987auction}. 
Their generalized versions for multiple items have been studied and applied in ad auctions (also called ``position auctions'') \cite{edelman2007internet,varian2007position,borgs2007GFPdynamics,jansen2008sponsoredSearchReview}. Most prior works mainly discussed the Nash equilibria of the auctions or their best-response dynamics, and did not study the dynamics when players use regret-minimization algorithms. One notable exception is \cite{feng2020convergence}; however, as discussed in the introduction, they analyze a different setting in which the actual play of the agents is preceded by a pure exploration phase in which all bidders bid uniformly, which leads to different convergence results than those we show for standard interaction between the algorithms. 

Finally, there is a growing body of work studying various aspects of interactions between strategic players and learning algorithms, including learning from strategic data \cite{cai2015optimum,hardt2016strategic,chen2019learning,dong2018strategic, haghtalab2020maximizing,levanon2021strategic,ghalme2021strategic}, 
Stackelberg games \cite{Jiarui2019imitativeFollower,birmpas2020optimally}, security games \cite{nguyen2019deception,nguyen2019imitative,gan2019manipulating}, 
recommendation systems \cite{tennenholtz2019rethinking,ben2018game}, and optimization against regret-minimizing opponents \cite{deng2019strategizing,braverman2018selling}. Notably, we use the formalism of mean-based learning algorithms defined in \cite{braverman2018selling}, who also study an auction setting. Their setting, however, is basically different from ours. They consider a game in which a strategic seller seeks revenue-maximizing auction strategies against a regret-minimizing buyer, while we consider fixed auction rules and study the interactions between regret-minimizing buyers and the strategic considerations of their users. Lastly, in a companion paper \cite{KolumbusNisan2021manipulate}, we formalize a ``meta-game'' model for general strategic settings in which human users select parameters to input into their learning agents that play a repeated game on their behalf, and we analyze this model for simple games with general regret-minimizing agents. Our discussion in the present paper on the incentives of players to misreport their valuation to their own bidding agents is 
directly related to this meta-game model.

\section{Preliminaries}\label{sec:preliminaries}
We consider repeated auction settings in which the same group of bidding agents repeatedly play an auction game with a fixed auction rule and the same set of items are being sold at each repetition of the game. The valuation of every agent, assigned to it by its user, is constant throughout the dynamics, and the utilities are additive. I.e., a bidder with value $v$ for an item who wins $k$ such items has total value $kv$. Player values and bids are 
assumed to be on an $\epsilon$-grid, i.e., to be multiples of a minimum value $\epsilon > 0$ (e.g., $1$ cent).
We assume that agents cannot overbid, and in all auction formats if agents place equal bids, ties are broken uniformly at random. 

First-price and second-price auctions are defined in the standard way: the auction sells a single item, and $b_1 > b_2 > ... > b_n$ denotes the bids of $n$ bidders. Bidder $s\in [n]$ with the highest bid (bidding $b_1$) wins the auction, and pays a price $p$ to the auctioneer. In the first-price auction the winner pays his bid: $p=b_1$, and in the second-price auction the winner pays the amount of the next highest bid: $p = b_2$. 
The utility for player $s$ with value $v_s$ is then $v_s - p$ when winning the item, and zero otherwise.  
For further details on these auction formats see, e.g., \cite{nisan2007introduction}.  
Generalized auctions for multiple items are defined in the relevant sections. 

We study agents that use regret-minimization algorithms. The regret of player $i$ at time $T$, given a history of bids 
$(\textbf{b}^1, ... ,\textbf{b}^T)$, is defined as the difference between the optimal utility from using a fixed bid in hindsight and the actual utility: 
$R_i^T = \max_b \sum_{t=1}^T u_i(b, \textbf{b}_{-i}^t) - u_i(b_i^t, \textbf{b}_{-i}^t)$,  
where 
$b_i^t$ 
is the bid of player 
$i$ at time 
$t$ and 
$\textbf{b}_{-i}^t$ denotes the bids of the other players at time 
$t$. 
As usual, regret-minimization algorithms are stochastic, and whenever we talk about the limit behavior, we consider a sequence of algorithms\footnote{This follows most of the literature that looks at regret-minimizing algorithms with a fixed horizon $T$ and looks at a sequence of such algorithms as $T \rightarrow \infty$.
Typically, and specifically for the case of multiplicative weights, the ``update parameter'' depends on $T$
but remains fixed over all time steps $1 \le t \le T$ of the algorithm with horizon $T$.  While it is
alternatively possible to consider a single regret-minimizing algorithm that has an 
infinite horizon and uses a decreasing
update parameter \cite{auer2002adaptive,cesa2006prediction}, such a definition does not ``play well'' with the notion of 
mean-based algorithms \cite{braverman2018selling} 
that we use below, and so we do not use it.
}
with $T \rightarrow \infty$ and with probability approaching $1$. An agent $i$ is said to be ``regret-minimizing'' if $R_i^T/T \rightarrow 0$ almost surely as $T \rightarrow \infty$. In our simulations we mainly present the multiplicative-weights algorithm, as defined in \cite{arora2012multiplicative}. The results apply equally both to the Hedge algorithm (often also referred to as multiplicative weights in the literature) and to the linear version of the algorithm. Appendix \ref{sec:appendix-additional-simulations} presents simulations with additional algorithms. 

We use the following notations to describe the empirical dynamics of the agents' bids. Denote by $\Delta$ the space of probability distributions over bid tuples in an auction with discrete bid levels. We use $\textbf{p}^T_t\in\Delta$ to denote the empirical distribution of bids after $t$ rounds in a sequence of $T\geq t$ auctions, and denote by $p_t^T(\textbf{b})$ the empirical frequency of a bid tuple $\textbf{b}$ after round $t$. 
When we discuss convergence of the dynamics we consider the following definition: 

\begin{definition} The dynamics ``converge to a distribution'' $\textbf{p}\in\Delta$ if for every $\epsilon>0$ there exists $T_0(\epsilon)$ s.t. for every $T>T_0$ w.p. at least $1-\epsilon$ it holds that for every $\epsilon T<t\leq T$, $|\textbf{p}_t^T-\textbf{p}|<\epsilon$.
\end{definition}

Intuitively, this means that after a sufficiently long time, the time-average distribution stabilizes and remains close to a stationary distribution. 
Notice that the convergence of the cumulative empirical distribution to a joint distribution does not imply the convergence of the probability of play
of any single agent to a distribution on its own actions\footnote{
Notice also that convergence to a distribution $\textbf{p}$ does allow for the existence of an unlimited number of accumulation points (other than $\textbf{p}$),  
as long as all these points together have zero probability in the limit. However, the impact of such accumulation points on the average utilities and revenue is vanishing in the long run. 
} 
(a stronger notion of convergence studied, e.g., in \cite{bailey2018multiplicative,feng2020convergence,deng2021nash}).  See also \cite{KolumbusNisan2021manipulate} for a discussion of
yet weaker notions of convergence.

\section{Co-Undominated Coarse Equilibria}\label{sec:co-undominated-CCE}
In this section we define a refinement of the concept of coarse correlated equilibria, which we call ``co-undominated CCE.'' 
For simplicity, the definitions are given for two-player games; their generalization is straightforward. 
Intuitively, those are CCEs in which players do not use strategies that are weakly dominated relative to the distributions of the opponents. Formally: 

\begin{definition}
Let $A,B$ denote subsets of the action spaces of the two players in a two-player game. Action $i\in A$ of player $1$ is called \emph{weakly-dominated-in-$B$ 
by action $i'$} of player $1$ if: $\forall j \in B: \ u_1(i,j) \leq u_1(i',j)$, and: $\exists j \in B: \ u_1(i,j) < u_1(i',j)$.
\end{definition}

\begin{definition}
Let ${p_{i,j}}$ be a CCE of a (finite) two-player game with action spaces $I,J$.  Denote its support $(A,B)$ by $A =\{i\in I | \exists j\in J $ $ with \ p_{i,j}>0\}$  and $B=\{j\in J | \exists i\in I $ $ with \ p_{i,j}>0\}$ ($A$ is a subset of $I$ and $B$ is a subset of $J$). 
A CCE is called \emph{co-undominated} if no strategy in its support is weakly dominated relative to the support of the other player,  i.e., if for every $i \in A$, and every $i'\in I$, action $i$ is not weakly-dominated-in-$B$ by $i'$, and similarly for $B$.
\end{definition}

We show that these types of equilibria are closely connected to a family of learning algorithms called ``mean-based.'' 
For completeness, we cite their definition (\cite{braverman2018selling}, Definition $2.1$):  

\begin{definition}
(Mean-based learning algorithm \cite{braverman2018selling}). Let 
{\small$\sigma_{i,t} = \sum_{t'=1}^t u_{i,t'}$}, where {\small$u_{i,t}$} is the utility of action $i$ at time $t$. 
An algorithm for the experts problem or the multi-armed bandits problem is $\gamma(T)$-mean-based if it is the case that whenever
{\small$\sigma_{i,t} < \sigma_{j,t} - \gamma(T) T$}, then the probability that the algorithm pulls arm $i$ in round $t$ is at most $\gamma(T)$. We say an algorithm is mean-based if it is $\gamma(T)$-mean-based for some $\gamma(T) = o(1)$.
\end{definition}
The above definition from \cite{braverman2018selling} considers a fixed value $T$. To clarify the definition further when considering a series of $T$-values and the limit $T\rightarrow \infty$, we say that an algorithm family $A^T$, parametrized by $T$, is mean-based if there exists a series $(\gamma(T))_{T=1}^\infty$ such that $\gamma(T)\rightarrow 0$ as $T \rightarrow \infty$, and each algorithm in the series $(A^T)_{T=1}^\infty$ is $\gamma(T)$-mean-based.

The following lemma connects between mean-based regret-minimization algorithms and co-undominated CCEs. This result is general for any game (not only auctions), and will be useful for our analysis in the following sections.

\begin{lemma} \label{thm:mean-based-CCE-supports}
Consider the dynamics of any mean-based regret-minimizing agents playing a repeated finite two-player game. 
If the dynamics converge to a distribution $\textbf{p}$, then $\textbf{p}$ is 
a co-undominated CCE.
\end{lemma}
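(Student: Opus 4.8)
The plan is to verify the two defining properties of a co-undominated CCE in turn: first that $\mathbf{p}$ is a coarse correlated equilibrium, and then that no action in its support is weakly dominated relative to the opponent's support. The CCE part is the standard consequence of external-regret minimization and I would dispatch it directly. Since each agent is regret-minimizing, $R_i^T/T \to 0$ almost surely. Writing the realized time-average utility as $\frac1T\sum_t u_i(\mathbf{b}^t)=\E_{\mathbf{b}\sim\mathbf{p}_T^T}[u_i(\mathbf{b})]$ and the best-fixed-bid utility as $\max_a \E_{\mathbf{b}\sim\mathbf{p}_T^T}[u_i(a,\mathbf{b}_{-i})]$, convergence $\mathbf{p}_T^T\to\mathbf{p}$ sends these to $\E_{\mathbf{p}}[u_i(\mathbf{b})]$ and $\max_a\E_{\mathbf{p}}[u_i(a,\mathbf{b}_{-i})]$ respectively; vanishing regret then yields $\E_{\mathbf{p}}[u_i(\mathbf{b})]\ge \max_a\E_{\mathbf{p}}[u_i(a,\mathbf{b}_{-i})]$ for both players, which are exactly the CCE inequalities.

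The co-undomination part I would prove by contradiction, and this is where the mean-based hypothesis is used. Suppose, w.l.o.g., that some $i$ in player 1's support $A$ is weakly-dominated-in-$B$ by some $i'\in I$. Pick a witness $j^\ast\in B$ of the strict inequality and set $\delta=u_1(i',j^\ast)-u_1(i,j^\ast)>0$. The core computation is to track the gap $\sigma_{i',t}-\sigma_{i,t}=\sum_{t'\le t}\big[u_1(i',b_2^{t'})-u_1(i,b_2^{t'})\big]$ between the cumulative hypothetical utilities of $i'$ and $i$. I would split this sum according to player 2's action: rounds with $b_2^{t'}\in B$ contribute nonnegative terms (weak domination in $B$), the rounds with $b_2^{t'}=j^\ast$ each contribute at least $\delta$, and rounds with $b_2^{t'}\notin B$ contribute at least $-M$, where $M$ bounds the utility range. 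On the high-probability event that the dynamics have converged, for every $t\in(\epsilon T,T]$ player 2's empirical marginal is within $\epsilon$ of the marginal of $\mathbf{p}$, so $j^\ast$ has appeared at least $(\rho-\epsilon)t$ times, where $\rho=\sum_i p_{i,j^\ast}>0$ since $j^\ast\in B$, while actions outside $B$ have appeared at most $\epsilon t$ times. Hence $\sigma_{i',t}-\sigma_{i,t}\ge(\delta\rho-C\epsilon)t\ge(\delta\rho/2)t$ for small enough $\epsilon$ and a suitable constant $C$, i.e. the gap grows linearly.

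With the linear gap in hand the mean-based definition closes the argument. Choosing $T$ large enough that $\gamma(T)<(\delta\rho/2)\epsilon$, we get $(\delta\rho/2)t\ge(\delta\rho/2)\epsilon T>\gamma(T)T$ for $t>\epsilon T$, so $\sigma_{i,t}<\sigma_{i',t}-\gamma(T)T$, and the mean-based property caps the per-round probability of playing $i$ at $\gamma(T)$ on the convergence event. Since the convergence event $E$ is contained in the event $G_t=\{\sigma_{i,t-1}<\sigma_{i',t-1}-\gamma(T)T\}$ for every $t$ in the window, and $G_t$ is measurable with respect to $\mathcal{F}_{t-1}$, a tower-property argument bounds the expected number of rounds in $(\epsilon T,T]$ in which $i$ is played by $\gamma(T)T$. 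By Markov's inequality the empirical frequency of $i$ over the window is then $o(1)$ with high probability on $E$. But convergence also forces this frequency to lie within $O(\epsilon)$ of the marginal mass $\pi_i=\sum_j p_{i,j}$, which is strictly positive because $i\in A$. Letting $T\to\infty$ and then $\epsilon\to0$ forces $\pi_i=0$, the desired contradiction; the symmetric argument for player 2 completes the proof.

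I expect the main obstacle to be the bookkeeping in the final step rather than the utility-gap computation. The convergence hypothesis controls the empirical distribution only on the tail $t>\epsilon T$ and only with probability $1-\epsilon$, whereas the mean-based property speaks about per-round pull probabilities, not realized frequencies. Bridging the two requires (i) absorbing the uncontrolled initial $\epsilon T$ rounds into the error terms of the cumulative-gap estimate, (ii) passing from the per-round probability bound to a bound on the realized empirical frequency through the nesting $E\subseteq G_t$, the tower property, and a Markov step restricted to $E$, and (iii) keeping the order of limits consistent, taking $T\to\infty$ before $\epsilon\to0$, so that $\gamma(T)$ and the $O(\epsilon)$ approximation errors are driven to zero in the right sequence.
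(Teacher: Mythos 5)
Your proposal is correct and follows essentially the same route as the paper's proof: vanishing external regret gives the CCE property, and co-undomination is proved by contradiction by showing that on the convergence event the cumulative gap $\sigma_{i',t}-\sigma_{i,t}$ grows linearly in $t$ (the paper's Claim 1, obtained there via a total-variation perturbation of the joint empirical distribution rather than your decomposition over player 2's marginal), after which the mean-based property caps the empirical frequency of the dominated action strictly below its positive marginal mass, contradicting convergence (the paper's Claim 2). Your tower-property/Markov bookkeeping is in fact slightly more explicit than the paper's informal ``not far from its expectation'' step, but the structure, the absorption of the first $\epsilon T$ rounds, and the order of limits are the same.
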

\begin{proof} 
Consider any finite two-player game played repeatedly by two mean-based regret-minimizing agents, and assume that the dynamics converge to a distribution $\textbf{p}\in \Delta$. First, as is well known, since the agents are minimizing regret, $\textbf{p}$ must be a CCE (otherwise some agent must remain with positive average regret in the limit). Next, assume by way of contradiction that action $i$ of player $1$ (w.l.o.g.) is weakly dominated by action $i'$ of player $1$ in the support of player $2$, and that $i$ is in the support of player $1$. Let us use $u(i,j)$ to denote the utility for player $1$ when player $1$ bids $i$ and player $2$ bids $j$, and $u_t(i)$ to denote the utility for player $1$ from using the bid $i$ at time $t$, given the actual (empirical) action of player $2$ at time $t$. Thus, for every action $j$ of player $2$ with $\Pr_{\textbf{b}\sim \textbf{p}}(b_2=j) > 0$ it holds that $u(i,j) \leq u(i',j)$, and there exists an action $j_0$ of player $2$ with $\Pr_{\textbf{b}\sim \textbf{p}}(b_2=j_0)=q_0 > 0$ such that $u(i,j_0) < u(i',j_0)$, and $\Pr_{\textbf{b}\sim \textbf{p}}(b_1=i)=p_0 > 0$. 

Denote $\delta = u(i',j_0) - u(i,j_0)$, and $\Delta_t^T = \sum_{s=1}^t u_s(i') - u_s(i)$.
Let $\epsilon > 0$ s.t. $\epsilon < \frac{q_{\scaleto{0}{3pt}} \delta}{8}$ and $\epsilon < \frac{p_{\scaleto{0}{3pt}}^2}{9}$, and take $T$ s.t. the algorithm used by player $1$ is $\gamma(T)$-mean-based, and $T>T_0(\epsilon)$, and $\gamma(T) < \epsilon^2$.

\vspace{5pt}
\noindent
\emph{Claim 1}: Fix any $0<t\leq T$. In the event that $|\textbf{p}_t^T - \textbf{p}|\leq \epsilon$ it holds that $\Delta_t^T/t > \epsilon$.

\vspace{3pt}
\noindent
\emph{Proof}: Notice that if $\textbf{p}_t^T = \textbf{p}$, then $\Delta_t^T = t q_0 \delta$. In the event that $|\textbf{p}_t^T - \textbf{p}|\leq \epsilon$, it holds that, specifically, $|\textbf{p}_t^T - \textbf{p}| \leq 2\epsilon$. To bound $\Delta_t^T$ from below, let us consider the case where $|\textbf{p}_t^T - \textbf{p}|= 2\epsilon$ and all the difference between the two distribution vectors in concentrated on action profiles in which the dominating action $i'$ has zero utility and the dominated action $i$ has a utility of $1$. Thus $\Delta_t^T > (1-\epsilon)q_0 \delta t - \epsilon t$, 
and therefore $\frac{\Delta_t^T}{t}> q_0 \delta - 2 \epsilon \geq \frac{3}{4}\epsilon q_0 \delta > \epsilon$. 

\vspace{5pt}
\noindent
\emph{Claim 2}: Denote $p_t^T(i) = \sum_j p_t^T(i,j)$. For every $t>T_1 = \sqrt{\epsilon} \cdot T$ 
with probability more than $1-2\epsilon$, it holds that $p_t^T(i) < 2 \sqrt{\epsilon}$.
 
\vspace{3pt}
\noindent
\emph{Proof}: By claim $1$ above, for every $t>\epsilon T$, in the event that $|\textbf{p}_t^T - \textbf{p}|\leq \epsilon$ it holds that $\Delta_t^T > \epsilon t \geq \epsilon^2 T$, and thus from the mean-based property of player $1$, it follows that the probability that action $i$ is chosen at time $t+1$ is less than $\gamma(T)<\epsilon^2$.
From the convergence assumption, this holds with probability at least $1-\epsilon$ for any $t>\epsilon T$. 
Thus, with high probability (at least $1-\epsilon$) the empirical average number of times that action $i$ is chosen in rounds $\epsilon T, ... , t$ is not far from its expectation of $\epsilon^2$, which is less than $\epsilon$.
To bound $p_t^T(i)$ from above, consider the ``worst case'' where $p_{\epsilon T}^T(i) = 1$ (until time $\epsilon T$). 
Thus, for $t>T_1$, with probability more than $1 - 2 \epsilon$ it holds that  
$p_t^T(i) \leq  
\frac{1}{t}(\epsilon T + (t-\epsilon T)\epsilon) < 
\frac{1}{t}(\epsilon T + \epsilon t)
\leq \left(1 + \frac{T}{T_1}\right)\epsilon = 
\epsilon + \sqrt{\epsilon} < 2 \sqrt{\epsilon}$.

\vspace{3pt}
We thus have by claim $2$ that with probability more than $1-2\epsilon$, $p_T^T(i) < 2 \sqrt{\epsilon}$, and since $p_0 - 2\sqrt{\epsilon} > 3 \sqrt{\epsilon} - 2 \sqrt{\epsilon} = \sqrt{\epsilon} > \epsilon$, we have that $\Pr \left(|\textbf{p}_T^T - \textbf{p}|> \epsilon\right) > 1 - 2 \epsilon$, in contradiction to the convergence assumption. Therefore, the dominated action $i$ cannot be in the support of player $1$. This holds for the other player as well, and thus the distribution $\textbf{p}$ must be co-undominated.
\end{proof}

\noindent
Remark:   
Notice 
that this argument can be extended also to the case where the dynamics converge to $\textbf{p}$ with a strictly positive probability that may be 
less than $1$.

\section{Second-Price and GSP Auctions}\label{sec:second-price}
We consider agents that play repeated second-price auctions (for a single item) and GSP auctions (for multiple items; see, e.g., \cite{edelman2007internet}). 
Our analysis starts with the simple second-price auction, which is known to be dominant-strategy incentive compatible; i.e., it is a dominant strategy for all players to bid their true values. We show  that in the setting where regret-minimizing agents play the auction on behalf of their users, the outcomes are different from this truthful equilibrium. 

Specifically, Theorem \ref{thm:second-price-auction-MW-bid-distributions} shows that when the auction is played between two multiplicative-weights agents, 
in the limit, although the high player always wins the auction (as in the truthful equilibrium), he pays a price that is strictly less than the second price. 
Before proceeding to the proof, we consider the following technical restatement of the theorem.  

\begin{theorem*} (Restatement of Theorem \ref{thm:second-price-auction-MW-bid-distributions}): 
In the repeated second-price auction with discrete bid levels that are multiples of $\epsilon$ and players using multiplicative-weights algorithms with values $v > w$, in the limit the dynamics converge to a joint distribution in which for the ``high player'' with value $v$, $Pr[0]=Pr[\epsilon]=...=Pr[w] = 0$, and $Pr[w+\epsilon]=Pr[w+2\epsilon]=...=Pr[v] = \frac{\epsilon}{v-w}$, and for the ``low player'' with value $w$, $0 < Pr[0] \leq Pr[\epsilon] \leq ... \leq Pr[w]$.  
\end{theorem*}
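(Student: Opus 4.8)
The plan is to analyze the multiplicative-weights dynamics directly, exploiting the special structure of the second-price payoffs rather than relying only on the co-undominated CCE characterization of Lemma~\ref{thm:mean-based-CCE-supports}. That lemma, as one checks, constrains the high player's support but leaves the low player's distribution undetermined: barred from overbidding, the low player can never win against a high player bidding above $w$, so it is indifferent among all its actions at the level of the limit distribution and every distribution on $[0,w]$ is compatible. The engine of the proof is instead the observation that, because the low player's bids always lie in $[0,w]$, any two high-player bids $x,x'\in(w,v]$ win against every realized low bid $y$ and pay the same price $y$; hence their per-round utilities $v-y$ coincide \emph{exactly} at every round, so their cumulative utilities $S_1(x,t)=S_1(x',t)$ are identically equal for all $t$. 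Since multiplicative weights assigns probability proportional to the exponential of cumulative utility, the high player's conditional distribution on $(w,v]$ is exactly uniform at every time step, which yields $\Pr[w+\epsilon]=\cdots=\Pr[v]=\epsilon/(v-w)$ once the mass on bids $\le w$ is shown to vanish.

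Next I would establish the monotonicity and vanishing facts by pairwise comparison of cumulative utilities. For a high bid $x\le w$ versus any $x^\ast>w$, one has $u_1(x^\ast,y)\ge u_1(x,y)$ for every $y$, strictly (by $v-y>0$, or a tie term) whenever the low player bids $y\ge x$; so the advantage $S_1(x^\ast,t)-S_1(x,t)$ accumulates at a rate proportional to the frequency with which the low player bids at least $x$. For two low bids $y<y'$ one has $u_2(y',\cdot)\ge u_2(y,\cdot)$ pointwise (same price when both win, extra wins in the range $y\le x<y'$), so $S_2(y',t)\ge S_2(y,t)$ for all $t$, and multiplicative weights then forces $q_t(y')\ge q_t(y)$, giving the chain $\Pr[0]\le\Pr[\epsilon]\le\cdots\le\Pr[w]$. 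The high bid exactly equal to $w$ lags $v$ precisely on rounds where the low player bids $w$, so it too is abandoned, leaving the high player's support equal to $(w,v]$ and in particular $\Pr[w]=0$.

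The crux, and the main obstacle, is the coupled and apparently circular interaction between these steps and the full-support claim $\Pr[0]>0$: the high player abandons bids $\le w$ only if the low player keeps positive frequency on high bids, while the low player keeps full support only if the high player concentrates on $(w,v]$ and stops letting it win. I would break this with a magnitude argument calibrated to the learning rate $\eta=\eta(T)$ (with $\eta\to0$ and $\eta T\to\infty$, so $1/\eta=o(T)$). Assuming the low player maintains frequency bounded below on $\{x,\dots,w\}$, the gap $S_1(v,t)-S_1(x,t)$ grows linearly, so $p_t(x)\propto\exp(-\eta[S_1(v,t)-S_1(x,t)])$ decays and the \emph{total} number of rounds on which the high player bids below $w$ is only $O(1/\eta)=o(T)$: its limiting frequency is $0$, yet its integrated mass is exactly what feeds the low player's gaps, since $S_2(y',t)-S_2(y,t)$ is at most $w$ times the number of rounds the high player bids in $[y,y')$, hence also $O(1/\eta)$. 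The decisive point is that $\eta\cdot O(1/\eta)=O(1)$, so the ratios $q_T(y)/q_T(y')=\exp(-\eta[S_2(y',T)-S_2(y,T)])$ stay bounded away from $0$ and $\infty$; this simultaneously certifies $\Pr[0]>0$ and re-supplies the lower bound on the low player's high-bid frequency used above, closing the loop self-consistently.

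Finally I would assemble these into the stated convergence: the high player's empirical distribution converges to the uniform distribution on $(w,v]$ with the claimed probabilities, while the low player's per-round distribution stabilizes once its cumulative-utility gaps freeze at the $\Theta(1/\eta)$ scale, converging to a full-support nondecreasing limit, from which the average price $\sum_y y\,\Pr[y]<w$ follows. I expect the real work to lie in the self-consistent bounding of the $\Theta(1/\eta)$ gaps and in making the informal ``transient versus steady state'' reasoning rigorous uniformly over the horizon $T$, so that the frequency and integrated-mass estimates hold with probability approaching $1$ in the sense of the convergence definition of Section~\ref{sec:preliminaries}.
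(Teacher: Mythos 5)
Most of your argument coincides with the paper's proof: the exact per-round equality of payoffs for all high-player bids in $(w,v]$ (hence identical cumulative utilities and an exactly uniform conditional distribution under multiplicative weights), the pointwise dominance $u_2(y',\cdot)\ge u_2(y,\cdot)$ for $y<y'$ giving the monotone chain for the low player, and the observation that the low player's weight ratios move only on rounds where the high player bids at most $w$, so that an $o(T)$ bound on such rounds with $\eta\cdot(\text{number of such rounds})=O(1)$ yields bounded ratios and full support. Your quantitative bookkeeping here ($\eta\cdot O(1/\eta)=O(1)$) is, if anything, slightly more explicit than the paper's.

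The genuine gap is exactly where you flag ``the crux'': your resolution of the circularity is not a proof. Asserting that the two implications (low player keeps mass on high bids $\Rightarrow$ high player abandons bids $\le w$, and high player abandons bids $\le w$ $\Rightarrow$ low player keeps full support) are mutually consistent only exhibits a fixed point of the reasoning; it does not show the dynamics realize it, and nothing in your write-up rules out the alternative regime in which the low player's mass on high bids decays and the high player's low bids persist. The paper breaks the loop with an unconditional anchor that your proposal misses: for the low player, the bid $w$ weakly dominates every other bid (it wins whenever any lower bid would, at the same second price, and never earns less), so under multiplicative weights the weight ratio of $w$ to any other bid is nondecreasing, and with uniform initialization the low player bids exactly $w$ with probability at least $1/(w/\epsilon+1)$ at \emph{every} round, regardless of the high player's behavior. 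Against this time-uniform frequency, each high-player bid $\le w$ loses a fixed utility margin relative to the always-winning bid $w+\epsilon$ at rate at least $1/(w/\epsilon+1)$, so its probability decays geometrically on a timescale $O(w/(\eta\epsilon))$ and the total number of low-bid rounds is bounded by a geometric series --- no assumption on the low player's evolving distribution is needed. With this anchor in place, the remainder of your argument (finitely many effective updates for the low player, bounded ratios, $\Pr[0]>0$, and the uniform limit on $(w,v]$) closes correctly and matches the paper.
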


\begin{proof} 
First, observe that for the low player, bidding $w$ weakly dominates all other bids, and so the probability assigned to this bid does not decrease in any update of the weights. Thus, the probability that the low player bids exactly $w$ is at least\footnote{For simplicity, we use the standard definition of the algorithm where the weights are initialized uniformly. The result that the high player always wins and pays strictly less than the second price extends to any other initialization that has positive probability for every bid.} 
$1/(w/\epsilon+1)$.

Since the low player bids $w$ at least $1/(w/\epsilon+1)$ of the time (with high probability), every bid that is less than $w$ of player $1$ loses the auction at this frequency, while an alternative bid of $w+\epsilon$ always wins, yielding positive utility. Thus, every $O(w/\eta \epsilon)$ steps the probability that the high player bids $w$ or less decreases at least by a constant factor. 

Hence, we can deduce that the expected number of times that the high player bids below $w$ is a constant, since it is bounded by the sum of a geometric series. 
In light of this observation, we can look at the bid distribution of the low player. Whenever the high player bids above $w$, the weights of all bids do not change (since all bids would yield zero utility). The weights are updated only in those cases where the high player bids below $w$. Since this happens only a finite number of times, the ratio of probabilities between any two bids of the low player remains bounded by a constant.
Thus, every bid of the low player remains with positive probability in the limit (and at any time). 
Since whenever a bid $j$ of the low player would win the auction, every bid $k>j$ would also win the auction, it follows that $Pr[0] \leq Pr[\epsilon] \leq ... \leq Pr[w]$.

Finally, since all bids above $w$ have the same utility at every step (as the payment is the second price), all weight updates of every bid above $w$ are exactly equal, and so the distribution of bids of the high player in the limit is uniform between $w+\epsilon$ and $v$. 
\end{proof}

In addition to the above result that explicitly shows that dynamics of multiplicative-weights agents converge to these forms of distributions, 
our analysis from the previous section allows us to show that for two agents implemented as any mean-based regret minimization algorithms, whenever the dynamics of the agents converge to a coarse correlated equilibrium, it must be an equilibrium in which the agent with the higher valuation always wins.

\begin{lemma}
In the repeated single-item second-price auction with discrete bid levels that are multiples of $\epsilon$ and two players with values $v > w$, in any   co-undominated CCE the player with the higher valuation $v$ always wins the auction.
\end{lemma}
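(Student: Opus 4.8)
The plan is to apply the co-undomination condition to the high player's support only, combined with the classical fact that truthful bidding is weakly dominant in a second-price auction. Write $A$ and $B$ for the supports of the high player (value $v$) and the low player (value $w$) in the given CCE, and set $m := \max B$, the largest bid the low player places with positive probability. Since no bidder overbids, $B \subseteq \{0,\epsilon,\dots,w\}$, so $m \le w < v$. The goal reduces to showing that every $b_1 \in A$ satisfies $b_1 > m$: this forces $b_1 > b_2$ for every support pair $(b_1,b_2)$, hence the high player strictly outbids the low player in every positive-probability outcome and wins with probability $1$ (no ties can occur, so the uniform tie-break never helps the low player).

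First I would record that for the high player the bid $v$ is globally weakly dominant, i.e. $u_1(b_1,b_2) \le u_1(v,b_2)$ for all $b_1,b_2$, via a routine case split on whether $b_2$ lies below, at, or above $b_1$. The only delicate point is the tie-break: at $b_1 = b_2$ the truthful bid $v$ wins outright rather than sharing the item, which is precisely the source of strictness below. A short consequence is that, for checking co-undomination, it suffices to test domination by $v$ alone: if some $b_1 \in A$ were weakly-dominated-in-$B$ by an arbitrary $b_1'$, then chaining $u_1(b_1,\cdot) \le u_1(b_1',\cdot) \le u_1(v,\cdot)$ (the last step by global dominance of $v$) shows $b_1$ is already weakly-dominated-in-$B$ by $v$, and the strictness witness transfers along the same chain.

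Next I would pin down exactly when a bid $b_1 < v$ is weakly-dominated-in-$B$ by $v$. The weak inequality holds against every opponent bid by global dominance, so only strictness is at issue. Against an opponent bid $b_2 \le w < v$, the bid $v$ strictly beats $b_1$ precisely when $b_1 \le b_2$: at $b_1 = b_2$ the bid $v$ converts a shared tie (utility $\tfrac12(v-b_1)$) into an outright win (utility $v-b_1 > 0$), and at $b_1 < b_2 < v$ it converts a loss into a profitable win $v - b_2 > 0$; when $b_2 < b_1$ both bids win and pay $b_2$, giving equal utility. Hence $b_1 < v$ is weakly-dominated-in-$B$ by $v$ if and only if $B$ contains some $b_2 \ge b_1$, i.e. if and only if $m \ge b_1$. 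Co-undomination therefore forbids every $b_1 \in A$ with $b_1 < v$ from satisfying $b_1 \le m$, so each such $b_1$ obeys $b_1 > m$; the remaining case $b_1 = v$ is automatic since $v > w \ge m$.

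Combining these, every $b_1 \in A$ exceeds $m = \max B \ge b_2$ for all $b_2 \in B$, which is exactly the claim. I expect the main obstacle to be getting the tie-break bookkeeping right, since the factor-$\tfrac12$ utility at equal bids is what produces the strict-domination witness and hence what drives the whole argument; the second point requiring care is the reduction that domination by an arbitrary bid collapses to domination by the truthful bid $v$. Once these are in place the conclusion is immediate, and it is worth emphasizing that only the high player's half of the co-undomination requirement is used.
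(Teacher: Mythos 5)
Your proof is correct and takes essentially the same approach as the paper: both apply the co-undomination condition only on the high player's side, using a bid that wins outright against everything in the low player's support as the dominating bid (the paper uses $w+\epsilon$, you use the truthful bid $v$), and conclude that the high player's support lies strictly above the low player's, so he always wins. The one remark worth making is that your reduction step (domination by an arbitrary $b_1'$ collapses to domination by $v$) is superfluous, since the co-undomination definition already forbids domination by the specific bid $v$ — though it is harmless, and your explicit tie-break bookkeeping is if anything slightly more careful than the paper's, which writes $u_1(i_0,j_0)=0$ and thereby glosses over the tie case where the dominated bid earns $(v-j_0)/2$.
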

\begin{proof} Denote the high player with value $v$ as player $1$ and the low player with value $w$ as player $2$. Assume that the players are in a CCE in which there is positive probability that player $1$ loses the auction. Thus, there exist $i_0,j_0$ such that $0 = u_1(i_0,j_0) < u_1(w+\epsilon, j_0) = v-j_0$.
If for all $j \in B$ ($B$ being the support of player $2$) it holds that $u_1(i_0,j) \leq u_1(w+\epsilon,j)$, then $w+\epsilon$ weakly dominates $i_0$, leading to a contradiction with the co-undominated CCE condition. If we assume the contrary, we have that there exists $j\in B$ such that $u_1(i_0,j) > u_1(w+\epsilon,j)$. This is impossible since bidding $w+\epsilon$ wins the auction and gives utility $v-j$, while any other bid gives either the same utility or zero. Therefore, in any co-undominated CCE the high player always wins.  
\end{proof}

These theoretical results are consistent with simulation results, as can be seen in Figure \ref{fig:second-price-suctions} that shows    
the bid dynamics of two multiplicative-weights agents playing a repeated second-price auction for a single item. The first player has value $1$ for the item and the second player has value $0.5$. In the very beginning, bids are close to uniform and there are some auctions in which player $2$ bids higher than player $1$ and wins the item with positive utility. However, after a short learning phase player $1$ learns to bid higher than the value $0.5$ of player $2$ and to always win the auction. 

\begin{figure}[!t]
\centering
\vspace{-24pt}

	\begin{subfigure}{.49\linewidth}
		\includegraphics[width=1.0\linewidth]{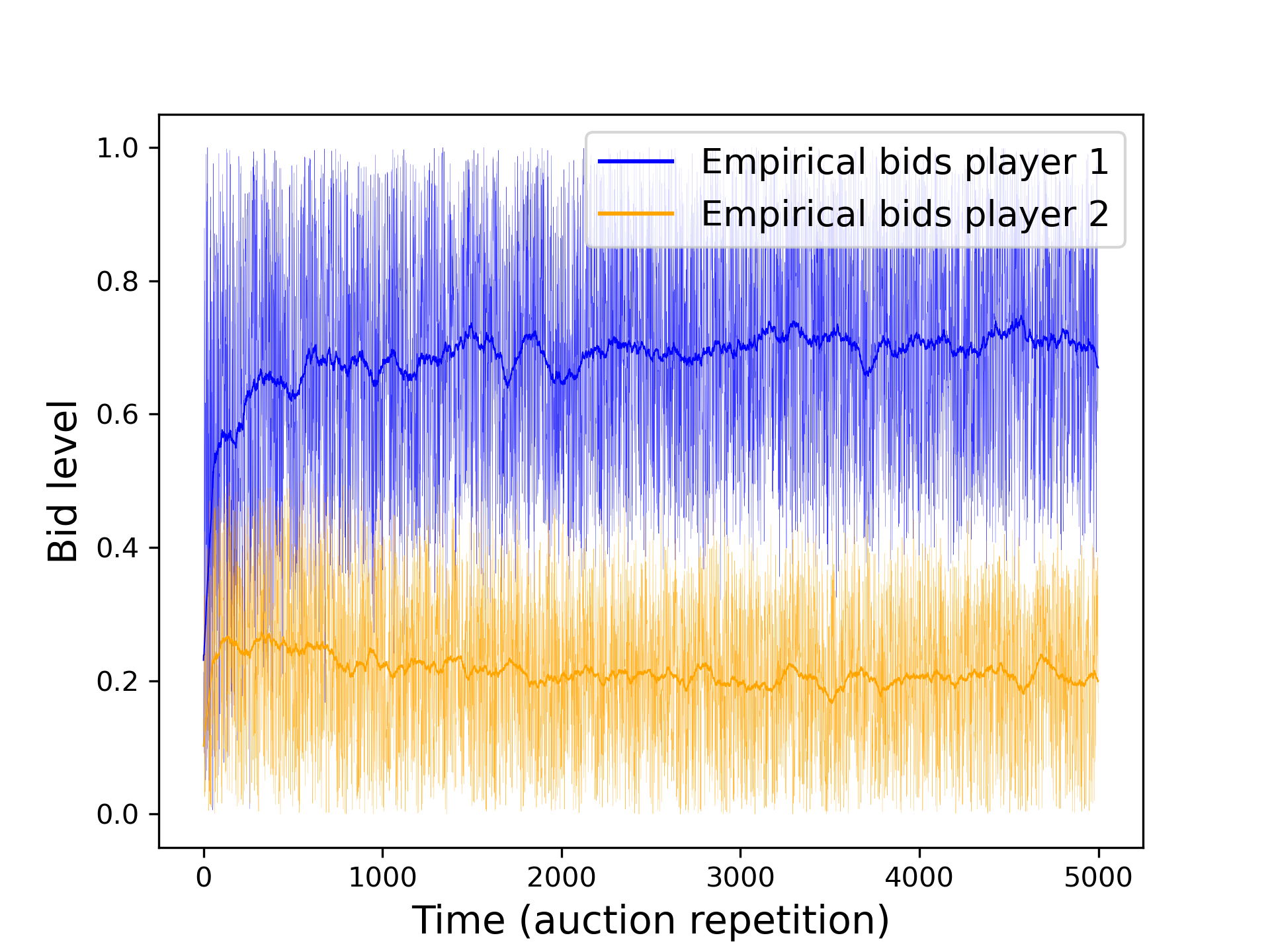}
		\caption{Bid dynamics}
		\label{fig:GSP_v=01_w=05_MW_bid_dynamics}
	\end{subfigure}
	\begin{subfigure}{.49\linewidth} 
		\includegraphics[width=1.0\linewidth]{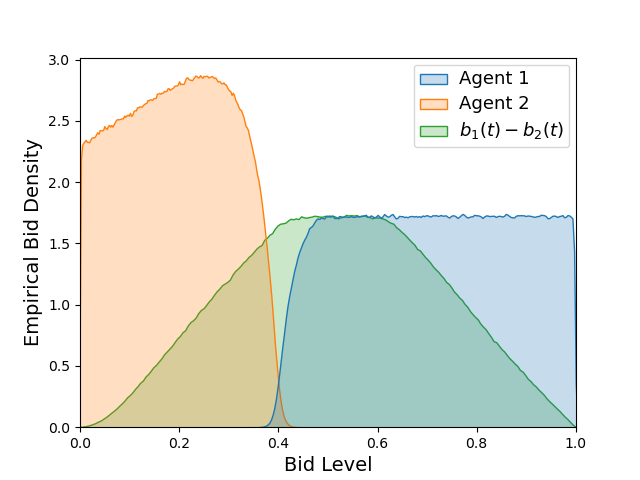}
		
		\caption{Bid distributions}
		\label{fig:GSP_BidPDF}
	\end{subfigure}
	\vspace{-2pt}

\caption{{\small GSP auctions. 
Figure \ref{fig:GSP_v=01_w=05_MW_bid_dynamics}: Bid dynamics in a GSP auction with two items with CTRs $1$ and $0.5$ and player values $v=1$ and $w=0.5$. 
Figure \ref{fig:GSP_BidPDF}: Long-term bid distribution in the same auction.
}}
	\vspace{-5pt}

\label{fig:SP_gains_from_manipulation_and_GSP_auctions}
\end{figure}

Figure \ref{fig:SecondPriceV1W05BidDensity} shows the empirical bid densities of the two players. 
It can be seen that the high player bids uniformly between his value and the value of the low player. The basic reason for this is that under the second-price payment rule, all these bids are equivalent. The bids of player $2$ are not uniform, but have more weight on higher bids. This is despite the fact that in the long-term dynamics player $2$ always loses, getting zero utility, and so all his bids should also be equivalent. The reason for this form of the distribution is that the multiplicative-weights algorithm (and other algorithms like FTPL) keep a ``memory'' of the history of play: in the initial phase there was a finite number of auctions in which player $2$ got positive utility. These winnings were associated with higher bids. 
Thus, in the initial phase higher bids get higher weight, and after this phase all bids yield exactly the same feedback (of zero utility), and therefore the distribution remains skewed toward the higher bids. This intuition is formalized in the proof of Theorem \ref{thm:second-price-auction-MW-bid-distributions} above.  

Figure \ref{fig:GSP_v=01_w=05_MW_bid_dynamics} shows a very similar picture also in the generalized second-price auction with multiple items. Interestingly, there is a narrow bid range (centered at the bid $0.4$ in Figure \ref{fig:GSP_BidPDF}) in which the bid supports of both bidders overlap, but, as can be seen by the distribution of bid differences at all times (in green), the bids are correlated such that despite the overlapping bid ranges, the player with the higher valuation still almost always wins the auction.  

Now let us consider the point of view of the human users that report their values to their regret-minimizing agents playing a repeated second-price auction on their behalf. Although this auction is known to be incentive compatible, in light of the results shown above on the dynamics of the agents, from the point of view of the human users the second-price outcomes are not obtained, and thus these users do not observe a second-price auction. Hence, in the repeated second-price auction that is implemented with regret-minimizing agents it can be beneficial for players to misreport their true valuations.  

To demonstrate this, let us return to our example from the introduction in which two players compete in a repeated second-price auction using multiplicative-weights agents. The first player has value $0.4$ and the second player has value $0.5$. Figure \ref{fig:SP_gains_from_manipulation} shows the expected utility for player $1$ over a period of $50$,$000$ auctions from inputting different values $v$ into his agent. It can be seen that when $v$ is sufficiently high above the value $0.5$ of player $2$, the utility of player $1$ is positive and increases for yet higher values; i.e., for a fixed value of the low player, the high player pays a lower price as he declares a higher value. For very high values of the high player, the high player's agent almost never bids below the low value, and so the distribution of bids of the low player remains very close to uniform.  
As the high declaration decreases, the probability that the high player's agent bids below the low value increases, and this is what drives the increase in the probability of the low player's agent bidding higher values.  

It follows that the best reply of the high player in the meta-game is always as high a declaration as possible, as long as the low player's declaration is below a maximum value that is less than twice the true value of the high player, and under which the dynamics lead to a positive average utility for the high player.    
The low player (almost) always loses and so all declarations that cause him to lose are equivalent.  
These can lie anywhere between $0$ and the declaration of the high player.   
The truthful declaration is certainly one possible best reply in the full information game, 
but notice that there is little to suggest selecting it specifically since it is a best reply only if it leads to a loss, and is in no way a dominant strategy if the value of the other player is not known. 
We thus can identify a whole set of possible equilibria of the meta-game, but do not see a natural way to select one of these as the outcome of user behavior in the meta-game. Notice that the revenue is close to half of the low declared value (when the high declaration is very high) 
and thus we cannot give a reasonable prediction of the revenue in equilibrium. 

\section{First-Price Auctions}

In this section we present our analysis of first-price auctions leading to Theorem \ref{thm:first-price-auctions}. 
We begin by showing that in the first-price auction in any co-undominated CCE, all bids of the high-value player are close to the second price. This is formalized in the following lemma.  

\begin{lemma}\label{thm:FP-CCE-with-unduminated-supports}
For the first-price auction with discrete bid levels that are multiples of $\epsilon$ with player values $v\geq w>2\epsilon$, in any co-undominated CCE, 
the support of the ``high player'' with value $v$ is in $[w-2\epsilon,...,w+\epsilon]$. Additionally, when $\epsilon$ is small compared to the difference in values $v-w$, the player with the higher value $v$ almost always wins the auction. 
\end{lemma}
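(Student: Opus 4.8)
The plan is to trap the high player's support inside $[w-2\epsilon,\,w+\epsilon]$ by two applications of the co-undominated condition, one bounding it from above and one from below, and then to read off the winning probability from the coarse-correlated-equilibrium property alone. Write $A,B$ for the two players' supports, and set $a_*=\min A,\ a^*=\max A,\ b_*=\min B,\ b^*=\max B$; since no one overbids we have $B\subseteq[0,w]$ and $A\subseteq[0,v]$, and I work in the regime $\epsilon\ll v-w$ (so in particular $v>w$), which guarantees that the high player has a feasible, strictly profitable winning bid, namely $b^*+\epsilon\le v$ with $v-b^*-\epsilon>0$. For the upper bound, any high bid $b$ with $w+\epsilon<b\le v$ beats every low bid (all at most $w$), exactly as $w+\epsilon$ does, but pays strictly more; hence $w+\epsilon$ weakly dominates $b$ relative to $B$, so $b\notin A$ and $a^*\le w+\epsilon$.

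For the lower bound, which is the crux, I would assume $a_*\le w-3\epsilon$ and derive a contradiction in two layers. \emph{Layer one:} for the high player's lowest bid $a_*$ to survive domination by $a_*+\epsilon$, the low player must sometimes bid strictly below $a_*$. Indeed, if every low bid were $\ge a_*$, then either some low bid sits at $a_*$ or $a_*+\epsilon$, in which case $a_*+\epsilon$ strictly improves on $a_*$ there (it wins rather than ties, using $v-a_*>2\epsilon$, or ties rather than loses) and weakly dominates it; or all low bids exceed $a_*+\epsilon$, in which case $a_*$ always loses and is dominated by the profitable winning bid $b^*+\epsilon$. Either way co-undominatedness is violated, so $b_*\le a_*-\epsilon$. \emph{Layer two:} now the low player's lowest bid $b_*\le a_*-\epsilon$ always loses, since every high bid is at least $a_*>b_*$. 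Consider the low player's ``sniping'' bid $a_*+\epsilon$: it is feasible ($a_*+\epsilon\le w-2\epsilon<w$), it beats the high bid $a_*$ with strictly positive utility $w-a_*-\epsilon\ge 2\epsilon$, and against every other high bid it yields at least $0$, i.e.\ no less than $b_*$ does. Thus $a_*+\epsilon$ weakly dominates $b_*$ relative to $A$, the desired contradiction. Hence $a_*\ge w-2\epsilon$, and combined with the upper bound the high player's support lies in $[w-2\epsilon,\,w+\epsilon]$.

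Granting this containment, the ``almost always wins'' claim follows from the CCE property without any further structural analysis. Deviating to the constant bid $w+\epsilon$ (feasible since $w+\epsilon\le v$) beats the low player with certainty and earns exactly $v-w-\epsilon$, so the high player's equilibrium utility is at least $v-w-\epsilon$. On the other hand, when the high player wins it earns $v-b_1\le v-w+2\epsilon$ (as $b_1\ge w-2\epsilon$) and otherwise $0$, so its utility is at most $q\,(v-w+2\epsilon)$, where $q$ is the probability that it wins. Combining the two bounds gives $q\ge (v-w-\epsilon)/(v-w+2\epsilon)=1-O(\epsilon/(v-w))$, which tends to $1$ precisely when $\epsilon$ is small compared to $v-w$.

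I expect the real obstacle to be Layer one of the lower bound, for a structural reason: the co-undominated condition never directly forbids a bid that always loses, because two always-losing bids are utility-equivalent and hence neither dominates the other. To rule such ``pure-loser'' bids out of the support one must instead exhibit an explicit feasible bid that strictly improves somewhere, and this is exactly where the value gap $v>w$ is indispensable --- it is what guarantees the existence of the strictly profitable winning bid $b^*+\epsilon$ (in Layer one) and of the profitable sniping bid $a_*+\epsilon$ (in Layer two). The only genuine bookkeeping is checking feasibility ($b^*+\epsilon\le v$, $a_*+\epsilon\le w$) and strict positivity of these dominating bids; tracking how the margins shrink with $v-w$ is also what makes the winning bound degrade gracefully rather than hold outright, consistent with the ``$\epsilon$ small compared to $v-w$'' hypothesis in the statement.
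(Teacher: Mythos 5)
Your proposal is correct in the regime it declares, and the crux of it takes a genuinely different route from the paper. The paper proves the lower bound on the high player's support by a four-way case analysis on the extreme points of \emph{both} supports ($b_1=B_1$, $b_1=b_2$, $b_1<b_2$, $b_1>b_2$), and two of those cases lean on the CCE property itself rather than on co-undominatedness: when the high player is pure, the CCE condition forces the low player to best-reply (Case I), and when the high player's lowest bid always loses, the contradiction comes from the profitable fixed deviation to bidding $w$ (Case III). Your two-layer argument instead pivots entirely on the high player's minimum bid $a_*$ and uses \emph{only} co-undominatedness for the containment: first force $b_*\le a_*-\epsilon$ by dominating $a_*$ either with $a_*+\epsilon$ (tie/overtake cases) or with the always-winning $b^*+\epsilon$ (pure-loser case), then eliminate $b_*$ with the sniping bid $a_*+\epsilon$, whose strict gain at the support point $a_*$ is exactly the $2\epsilon$ margin coming from $a_*\le w-3\epsilon$. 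This buys a cleaner separation of hypotheses --- co-undominatedness does all the support-trimming, while the CCE inequality is invoked only once, for the winning-probability bound, where your aggregate accounting $q\ge(v-w-\epsilon)/(v-w+2\epsilon)$ is the same deviation-to-$w+\epsilon$ argument as the paper's pairwise bound $p\le 2\epsilon/(v-w+\epsilon)$ --- and it avoids any special treatment of pure strategies. Your closing observation that two always-losing bids never dominate each other, so one must exhibit an explicitly profitable dominator, is precisely the mechanism behind the paper's Cases III and IV, and you correctly identify that this is where $v>w$ is load-bearing. One caveat: the lemma's containment claim is stated for all $v\ge w>2\epsilon$, and your argument (through its reliance on $v-b^*-\epsilon>0$) genuinely needs $v$ to exceed $w$ by a couple of grid steps; the paper covers the boundary case $v=w$ via the separate symmetric Lemma~\ref{thm:symmetric-FP-CCE-supports} in the appendix, which your write-up leaves untreated. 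That omission deserves a sentence, but it does not affect the regime $\epsilon\ll v-w$ in which the second claim of the lemma lives.
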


To prove Lemma \ref{thm:FP-CCE-with-unduminated-supports}, we start by analyzing the symmetric case of bidders with equal values, which is technically slightly different. The proof of the next lemma is deferred to Appendix \ref{sec:appendix-first-price}. 

\begin{lemma} \label{thm:symmetric-FP-CCE-supports}
For the single-item first-price auction with equal integer values $v>2\epsilon$ and bid levels that are multiples of $\epsilon$, any co-undominated CCE is supported on $[v-2\epsilon,...,v]\times[v-2\epsilon,...,v]$.
\end{lemma}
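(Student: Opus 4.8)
The plan is to prove the (essentially contrapositive) statement that no bid below $v-2\epsilon$ can survive in a co-undominated CCE, by exhibiting for each offending bid an explicit weakly dominating alternative. The whole argument reduces to controlling a single quantity: the overall lowest bid $m := \min(A\cup B)$ used by either player. If I can show $m \ge v-2\epsilon$, then since no agent overbids (every bid is at most $v$), both supports lie in $[v-2\epsilon,\dots,v]$ and the lemma follows. Notably, I expect this to go through using only the co-undomination property of the supports, without invoking the CCE equalities.

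First I would record the payoff of a single first-price bid against the opponent's support: for player $1$, $u_1(i,j)$ equals $v-i$ when $i>j$, equals $(v-i)/2$ at a tie $i=j$ (two players with uniform tie-breaking), and equals $0$ when $i<j$. The qualitative fact I would extract is that raising a bid by one grid step trades a tie (worth $(v-i)/2$) for an outright win (worth $v-i-\epsilon$), which is an improvement precisely when $v-i>2\epsilon$; this threshold is exactly what produces the window of width $2\epsilon$.

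Then I would bound $m$ by a short case analysis, assuming for contradiction that $m\le v-3\epsilon$ and taking w.l.o.g.\ $m\in A$. If $m\notin B$, then $m$ lies strictly below every bid in $B$, so bidding $m$ always loses and $u_1(m,\cdot)\equiv 0$ on $B$; hence bidding $\min B$ — which ties for a strictly positive payoff and is never worse elsewhere — weakly dominates $m$, unless $\min B=v$, i.e.\ $B=\{v\}$. This last degenerate configuration I would dispose of by switching sides: since player $2$'s action $v$ always nets $0$, it is itself weakly dominated by $v-\epsilon$ as soon as $A$ contains any bid below $v$. If instead $m\in B$, so $m=\min A=\min B$, then $m+\epsilon\le v-2\epsilon$ is a legal on-grid bid, and comparing $m$ with $m+\epsilon$ pointwise over $B$ shows that $m+\epsilon$ weakly dominates $m$ (strictly at $j=m$, because $v-m\ge 3\epsilon>2\epsilon$). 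Each branch contradicts co-undomination, so $m\ge v-2\epsilon$.

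I expect the main obstacle to be the bookkeeping around the boundary and degenerate configurations rather than the core mechanism: in particular the case $B=\{v\}$ (or symmetrically $A=\{v\}$), where no bid of the opponent is dominated and one must instead invoke co-undomination on the player pinned at $v$, and the need to track the exact tie payoff $(v-m)/2$ so as to land on the sharp threshold $v-m>2\epsilon$ rather than an off-by-$\epsilon$ bound. A secondary point requiring care is verifying that every dominating bid I produce ($\min B$, $m+\epsilon$, and $v-\epsilon$) is a legal on-grid action not exceeding $v$, which is where the hypothesis $v>2\epsilon$ together with the no-overbidding assumption get used.
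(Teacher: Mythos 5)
Your proof is correct, and it takes a genuinely different route from the paper's. The paper argues via the per-player extremes $b_1,B_1,b_2,B_2$ and splits into cases (pure strategies, $b_1=b_2$, $b_1<b_2$), invoking the CCE deviation inequalities in two of the three cases: when a player is pure, the CCE condition forces the opponent's support to consist of best replies, and in the $b_1<b_2$ case co-undomination first forces player $1$'s utility to zero and then a CCE deviation to the fixed bid $v-\epsilon$ yields the contradiction. You instead reduce everything to the single global minimum $m=\min(A\cup B)$ and resolve all three branches by exhibiting explicit weak dominators relative to the opponent's support ($\min B$ when $m\notin B$; $v-\epsilon$ for the \emph{opponent} in the degenerate case $B=\{v\}$, which is the correct side-switch and replaces the paper's CCE-based Case III argument; and $m+\epsilon$ when $m\in B$, which coincides with the paper's Case II computation, including the sharp threshold $v-m>2\epsilon$ from comparing the tie payoff $(v-m)/2$ with the win payoff $v-m-\epsilon$). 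All legality checks ($m+\epsilon\le v-2\epsilon$, $v-\epsilon$ on-grid via $v>2\epsilon$, no overbidding) go through, and the strict-improvement witness lies in the relevant support in each branch, as the definition requires. What your approach buys is a strictly stronger statement: co-undomination of the support pair \emph{alone} — with no use of the coarse-equilibrium inequalities — confines both supports to $[v-2\epsilon,\dots,v]$, so the CCE hypothesis in the lemma is needed only for $\textbf{p}$ to be an equilibrium at all, not for the support bound. What the paper's version buys is structural uniformity with the asymmetric case (Lemma \ref{thm:FP-CCE-with-unduminated-supports}, whose Cases I--IV it mirrors), where CCE deviations genuinely are used to pin down utilities; your argument shows that in the symmetric case dominance alone suffices.
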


We now proceed to prove Lemma \ref{thm:FP-CCE-with-unduminated-supports}.

\begin{proof} (Lemma \ref{thm:FP-CCE-with-unduminated-supports}): 
Denote the value of player $1$ as $v$ and the value of player $2$ as $w$ and assume w.l.o.g. that $v \geq w$. 
Lemma \ref{thm:symmetric-FP-CCE-supports} deals with the symmetric case $v=w>2\epsilon$.  
Here we will show that when $v > w > 2\epsilon$, bids that are not in $[w-2\epsilon,...,w+\epsilon]$ cannot be in the support of player $1$. 

First, for any support $B$ of player $2$, any bid $b>w+\epsilon$ is strictly dominated for player $1$ by bidding $w+\epsilon$. Thus, in any co-undominated CCE, the high player does not bid above $w+\epsilon$. 

Next, let $b_1, b_2$ be the lowest bids of the players in the support and let $B_1,B_2$ be the highest bids in the support, and assume that $b_1 < w-2\epsilon$. 
We will consider the following cases: 

\vspace{3pt}
\noindent
Case I: $b_1=B_1$.  Here player $1$ plays a pure strategy and so the CCE condition implies that player $2$ must best-reply to it. 
Thus, for $b_1=B_1 < w-2\epsilon$, we must have $b_2=B_2=b_1+\epsilon$,
which is not an equilibrium, since player $1$ would then always lose and thus prefers using a fixed bid $w-2\epsilon$. 
Therefore, it must be that $b_1 < B_1$.

\vspace{3pt}
\noindent
Case II: $b_1=b_2$. Here, if $b_1=b_2<w-2\epsilon$ then $b_1$ is strictly dominated by $b_1+\epsilon$ ($b_1$ loses unless player $2$ bids $b_2$, in which case player $1$'s utility is $(v-b_1)/2$ due to the tie, but then $v-(b_1+\epsilon)$, which this player gets from bidding $b_1+\epsilon$, is better).  Thus, we have $b_1=b_2 \geq w-2\epsilon$.  

\vspace{3pt}
\noindent
Case III: $b_1<b_2$. In this case $b_1$ always loses and always gets a utility of $0$. Thus, due to the co-undominated CCE condition, player $1$ must get zero utility for every bid in his support and every bid in player $2$'s support, and so he gets zero utility. This leads to a contradiction since player $1$ can bid $w$ and have positive utility in every auction. 

\vspace{3pt}
\noindent
Case IV: $b_1>b_2$. Here $b_2$ always loses and always gets zero utility. Thus, due to the co-undominated CCE condition, player $2$ must get zero utility for every bid in the support. However, player $2$ can bid $w-2\epsilon$ and have a positive utility of $2\epsilon$ in every auction in which player $1$ bids $b_1$, a contradiction. \vspace{1pt}

Hence, the support of the high player must be in $[w-2\epsilon,...,w+\epsilon]$. 
Next, let us show that the high player (with value $v$) wins the auction with probability approaching $1$ when the grid step  $\epsilon$ is small compared to the differences in values. Assume that there is positive probability $p$ that player $1$ bids $a$ and player $2$ bids $b$ and $a\leq b$. 
We know that $w-2\epsilon \leq b \leq w$, and $w-2\epsilon \leq a \leq w+\epsilon$, and denote $\delta=v-w > 2\epsilon$. On the one hand, the expected utility for player $1$ from bidding $a$ in a sequence of $T$ auctions is at most $T\cdot (1-p)(v-w+\epsilon)$. On the other hand, the utility for player $1$ from placing a fixed bid of $w+\epsilon$, which always wins the auction, is $T\cdot (v-w-\epsilon)$. Thus, from the CCE condition it follows that $p \leq \frac{2\epsilon}{v-w+\epsilon}$, and therefore $p$ approaches zero as $\epsilon / \delta$ approaches zero, i.e., as the grid becomes small compared to the difference in values. 
\end{proof}
Using this result together with Lemma \ref{thm:mean-based-CCE-supports} that connects between mean-based learning agents and co-undominated CCEs, we can readily prove Theorem \ref{thm:first-price-auctions}.  
\begin{proof} (Theorem \ref{thm:first-price-auctions}): 
Consider a first-price auction with discrete bid levels that are multiples of $\epsilon$ with players with values $v\geq w>2\epsilon$ that repeatedly play the auction using mean-based regret minimization algorithms. By Lemma \ref{thm:mean-based-CCE-supports}, if the dynamics converge to a CCE, then it must be a co-undominated CCE. By Lemma \ref{thm:FP-CCE-with-unduminated-supports}, in the first price auction in such an equilibrium both players bid close to the low value $w$ (up to $2\epsilon$), and the player with the higher value $v$ wins with probability approaching $1$ as $\epsilon/(v-w) \rightarrow 0$.
\end{proof}

\section{Generalized First-Price Auctions}\label{sec:GFP}
We analyze GFP auctions where two ``ad slots'' are being sold 
 to two bidders. The ``top'' ad slot has a ``click-through rate'' of $1$, while the ``bottom'' ad slot  
has a click-through rate of $0.5$. 
That is, a bidder with value $v$ that wins the top slot and pays $p$ (per click) gets a utility of $v-p$, 
and the bidder that gets the 
bottom slot and pays $p$ (per click) gets a utility of $v/2-p/2$.  The auction uses the first-price rule:
the bidder that bids highest gets 
the top slot and the low bidder gets the bottom slot, and each pays their bid per click.  
That is, if the first player has value $v$ and bids $x$ 
while the second player has value $w$ and bids $y$, then the utility of the first player is $v-x$ if $x>y$ 
and is $(v-x)/2$ if $x<y$, while 
the utility of the second player is $w-y$ if $y>x$ and is $(w-y)/2$ if $y<x$.  
(In the case of a tie, $x=y$, let us assume that we toss a coin for the top slot, in which case the utility of the first bidder is $3\cdot(v-x)/4$, and the utility of the second bidder is $3 \cdot (w-y) /4$.) 
We start with the proof of Theorem \ref{thm:GFP-mixed-Nash} on the Nash equilibrium of the GFP auction, and then proceed to analyze the empirically observed distributions obtained by the agents.

\subsection{Nash Equilibrium}

\begin{proof} (Theorem \ref{thm:GFP-mixed-Nash}):
Denote the bid density function of player $1$ in a Nash equilibrium by $f(x)$ and the bid density function of player $2$ in a Nash equilibrium by $g(y)$, and denote the respective cumulative density functions by $F(x)$ and $G(y)$.
First, observe that every bid $y > w/2$ of the second player is dominated by $y=0$. Therefore, the support of the distributions $f(x),\ g(y)$ is in\footnote{Since $y=w/2$ is only weakly dominated by $y=0$, the bid $y=w/2$ may have positive probability in equilibrium. In principle, if $g(y)$ in equilibrium had a finite (atomic) weight on the bid $y=w/2$, then the support of $f(x)$ would need to be extended to $[0,w/2+\varepsilon]$ for some $\varepsilon > 0$. However, our solution for $g(y)$ shows that the bid $y=w/2$ has zero probability and thus the support $[0, w/2]$ is sufficient for $f(x)$.}  $[0, w/2]$. 
Next, the expected utility to player $1$ from bid $x$, assuming that player $2$ plays according to $g(y)$, is    
\small
$
u_1(x) = \int_{0}^{x} g(y)(v-x) \,dy + \int_{x}^{w/2} g(y)\frac{1}{2}(v-x) \,dy = \frac{1}{2}(v-x)(G(x)+1)
$, 
\normalsize
where the first term is the expected payoff from winning the first slot in the auction, and the second term is the expected payoff from getting the second slot. The expected utility to player $2$ from bid $y$, assuming that player $1$ plays according to $f(x)$, is 
\small
$
u_2(y) = \int_{0}^{y} f(x)(w-y) \,dx + \int_{y}^{\frac{w}{2}} f(x)\frac{1}{2}(w-y) \,dx = \frac{1}{2}(w-y)(F(y)+1)
$. 
\normalsize
In equilibrium the derivatives of the utilities must satisfy 
\small
\begin{equation*}
\frac{d u_1(x)}{dx} = (v-x)g(x) - G(x) - 1 = 0,  \ \frac{d u_2(y)}{dy} = (w-y)f(y) - F(y) - 1 = 0.
\end{equation*}
\normalsize

With another derivative, we obtain differential equations for $f$ and $g$. 
We also replace the variable names by their original notations:
\small
\begin{equation*}
(v-y)g'(y) - 2g(y) = 0,  \quad   (w-x)f'(x) - 2f(x) = 0.
\end{equation*}
\normalsize

The solution to these equations is of the form {\small$f(x) = \frac{c_1}{(w-x)^2}$, $g(y) = \frac{c_2}{(v-y)^2}$, where $c_1, c_2$} are constants that depend on the boundary conditions. The normalization constraint yields the unique solution  
\small
$
f(x) = \frac{w}{(w-x)^2}, \quad  g(y) = \left(2\frac{v}{w} -1 \right)\frac{v}{(v-y)^2}
$.
\normalsize
The integrals of these density functions lead to the cumulative densities stated in the theorem.

The utilities to the two players, with values $v$ to player 1 and $w$ to player 2 such that $v \geq w$, in the Nash equilibrium are then given by the following integrals. 
\small
$
u_1^{NE} = \int_0^{w/2} f(x)\Big(\frac{1}{2}(v-x)(G(x)+1)\Big) dx =\frac{v}{2} + (v - w)(1 - ln(2))
$
%
, 
$
u_2^{NE} = \int_0^{w/2} g(y)\Big(\frac{1}{2}(w-y)(F(y)+1) \Big) dy = \frac{w}{2}
$.
\normalsize
\end{proof}

\subsection{Coarse Correlated Equilibria}

As discussed in the introduction, the dynamics of multiplicative-weights agents with equal values $v=w=1$ yield, on average, the Nash equilibrium distribution as stated in Theorem \ref{thm:GFP-mixed-Nash}; however, unlike in the mixed equilibrium, here bids are highly correlated. 
Throughout the dynamics, as shown in Figure \ref{fig:GFP_symmetric_bid_dynamics}, both agents bid nearly the
same, and the bids do not converge. The lack of convergence to a Nash equilibrium is not surprising since we expect to reach only a CCE of the game, and not necessarily the Nash equilibrium. So let us analyze the empirically selected CCE.   

The fact that the empirical bids are all concentrated on the diagonal may be understood by considering best-reply dynamics, which may be viewed as a primitive ancestor of our no-regret dynamics. The best reply to a bid $x$ is always either $x+\epsilon$, or 0 if $x$ is more than half the player's value. Thus, in such dynamics we will tend to see bids that progress like $(x, x+\epsilon, x+2\epsilon, x+3\epsilon,...)$ until $x$ reaches $w/2$, at which point the second player will resort to bidding $0$, and then the bids will rise gradually again. This is indeed exactly the dynamics that we see in Figure \ref{fig:GFP_symmetric_bid_dynamics}. Looking at the empirical distribution of $x$ when $v=w=1$, it seems to be identical to the distribution 
{\small $F(x) = \frac{x}{1-x}$} derived for the Nash equilibrium. Indeed, we prove that the diagonal distribution $(x,x)$ where $x$ is thus distributed is a CCE of the auction with $v=w=1$. The formal statement of this result is deferred to Appendix \ref{sec:appendix-GFP}. 
We should note, though, that the above {\small$F$} is {\em not} the only distribution yielding  
a CCE and that we cannot theoretically justify why this particular one is the empirical distribution observed.  
One should also note that the utility obtained by the two agents in this CCE is $3/4\cdot \ln(2) \cong 0.52$, 
which is (slightly) higher than the utility of $0.5$ that two players would obtain in the Nash equilibrium.  
We might view this as a weak form of implicit collusion between the agents
in the auction.

\begin{figure}[!t]
\centering
\vspace{-36pt}
	\begin{subfigure}{.301\linewidth}
		\includegraphics[width=1.02\linewidth]{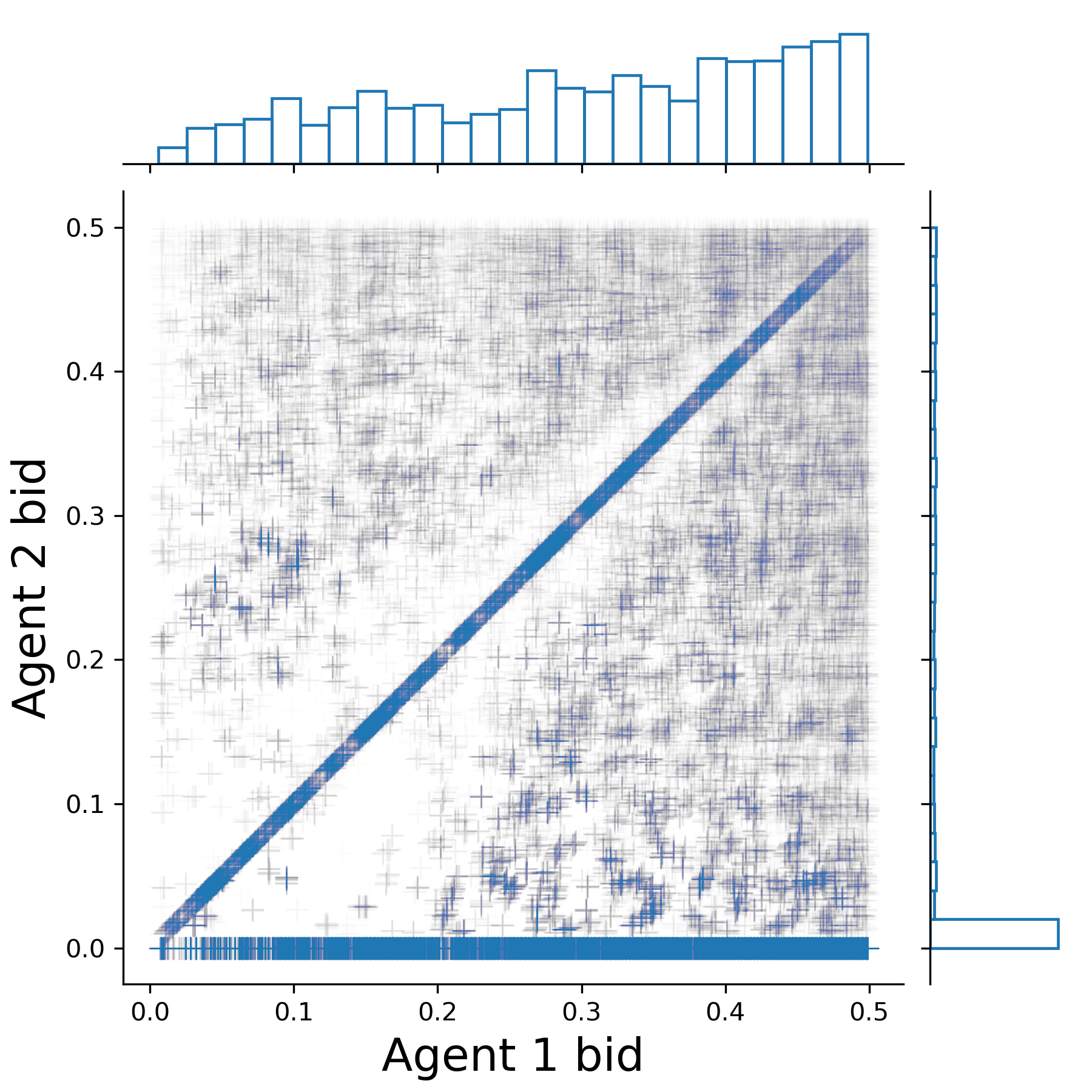}
		\caption{Joint distribution of bids of agents with $v=2$ and $w=1$.}
		\label{fig:GFP_joint_bid_dist}
	\end{subfigure}
	\begin{subfigure}{.341\linewidth} 
	\vspace{13pt}
		\includegraphics[width=1.1\linewidth]{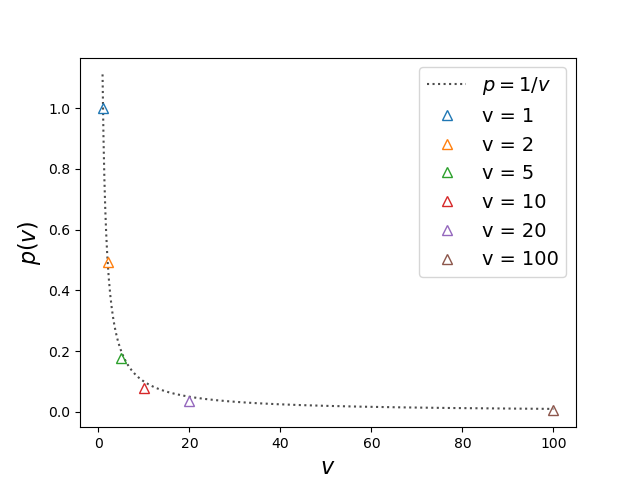}
		
		\caption{Probability of non-zero bids by the low agent vs. $v$ of the high agent.}
		\label{fig:p_v}
	\end{subfigure}
		\begin{subfigure}{.341\linewidth}
			\vspace{13pt}
		\includegraphics[width=1.1\linewidth]{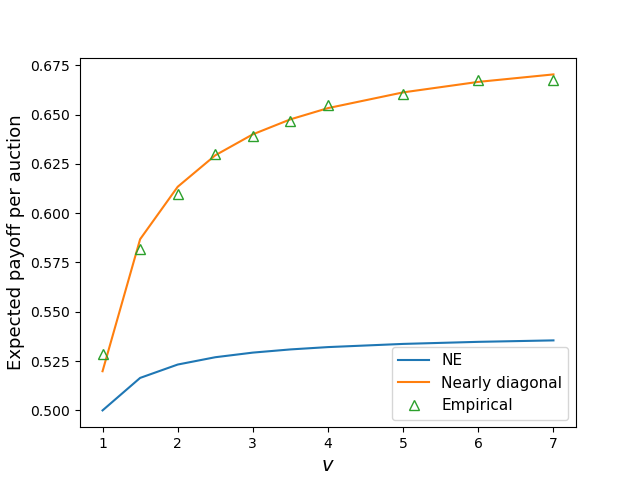}
		\caption{Gains from manipulation by the high-agent user with a true value of $1$.}
		\label{fig:GFP_unilateral_manipulation}
	\end{subfigure}
\caption{{\small Generalized first-price auctions played by multiplicative-weights agents with high value $v \geq 1$ (for agent $1$) and low value $w=1$ (for agent $2$).}}
\label{fig:asymmetric-GFP}
\vspace{-5pt}
\end{figure}

If one looks at no-regret dynamics when $v \ne w$, then the empirical distribution obtained is a bit more complex: not only
do we see bids on the diagonal, but we also see bids where the low agent bids $0$. This is shown in Figure \ref{fig:GFP_joint_bid_dist} for the case $v=2, w=1$. 
That is, in most rounds of the repeated auction either the two agents place nearly equal bids or the low agent bids zero.
We call such distributions ``nearly diagonal.'' 
Empirically we see that the distribution on $x$ (the bid of the agent with high value $v$) remains similar to 
the ``Nash equilibrium-derived'' distribution {\small$F(x) = \frac{x}{w-x}$}, while the probability of bidding on the diagonal, i.e.,
of the low agent bidding $y=x$ rather than $y=0$, decreases like $w/v$ as $v$ increases relative to $w$.  
Figure \ref{fig:p_v} depicts the empirical fraction of non-zero bids placed by the low 
agent as a function of the declared value $v$ of agent $1$ when $w=1$; the dotted line in the figure shows the comparison to the function $1/v$.

We thus consider the following to be our ``prediction'' of the outcome of the agent dynamics. 
First a value $x$ is chosen according to the cumulative distribution {\small$F(x) = \frac{x}{w-x}$} and 
then the high agent (the one with value $v$, where $v \ge w$) bids $x$ and the low agent bids
$x$ with probability $w/v$ and bids $0$ otherwise. We emphasize that we do not have a theoretical justification for this prediction and leave the rigorous derivation of the distribution selected by the agent dynamics as an open problem. 
Figure \ref{fig:GFP_unilateral_manipulation} compares the empirical utilities of the players 
to those implied by the Nash equilibrium and to those implied by our prediction. As we see, 
while the observed empirical utilities  are very far from those in the Nash equilibrium, 
they agree with our prediction quite well.

\subsection{Equilibria of the Meta-game}
We now proceed to study the ``meta-game'' faced by the users, each of whom needs to report his value to his own multiplicative-weights regret-minimization algorithm. Our analysis concerns the case where the true valuations of the players are $1$ and $1$ and their declarations to their agents are $v$ and $w$.  
Assuming that the agents reach the ``equilibrium'' predicted by our nearly diagonal distribution model, 
we can calculate the players' utilities. Our analysis allows us then to identify the ($\epsilon$-)Nash equilibria 
of the meta-game. The formal details of this analysis are  deferred to Appendix \ref{sec:appendix-GFP}. 
In the equilibrium of the meta-game, the high player declares an arbitrarily high value $v$ and the low player declares a fixed
value $w=1/(6 \cdot (1-\ln(2)) \cong 0.54$, and the utilities that they get, respectively, 
in this equilibrium are $5/6-O(1/v)$ and $1/2+O(1/v)$.  (If there exists a maximum allowed value for $v$ this is exactly 
an equilibrium, and otherwise it is an $\epsilon$-equilibrium, where $\epsilon=O(1/v)$.)
The interesting point of this analysis is that the two players are locked in a type of hawk-dove game: 
there are two possible equilibrium points depending on which player is the high bidder and which is the low one.  
Each player prefers to be the high player and to let the other player be the low player.  However, if they both 
declare high values, then they both suffer and get negative utility. 

This theoretically predicted behavior is indeed borne out in simulations.
Figure \ref{fig:w_manipulation_fixed_v=2} shows the average payoffs of the two players when player $1$ uses a fixed ``high'' declaration $v=2$  
as a function of the declaration $w$ of the other player. Figure \ref{fig:v_manipulation_fixed_w=054} shows the payoffs when player $2$ uses the equilibrium declaration $w=1/(6 \cdot (1-\ln(2)))$ as a function of the declaration $v$ of player $1$. It can be seen that the nearly diagonal distribution model is consistent with the empirical payoffs, and that the predicted equilibrium of the meta-game is indeed empirically an equilibrium as the best replies of the players behave as predicted: the best-reply of the ``low player'' is obtained at $w\cong 0.54$ (marked in red in Figure \ref{fig:w_manipulation_fixed_v=2}), 
and the payoff of the ``high player'' is increasing in the declaration $v$, approaching $5/6 \cong 0.83$ for high declaration values 
(Figure \ref{fig:v_manipulation_fixed_w=054}). These results are robust for different player parameters and several other learning algorithms such as linear or exponential multiplicative weights, FTPL, and FTPL with a recency bias.   

To summarize, we succeeded to identify (and empirically validate) the outcome of dynamics of 
multiplicative-weights agents in this auction with any two values. This allowed us to analyze the equilibrium in the meta-game induced on users that have identical true values. We found two such equilibria where, in each of them, one player takes the role of the high player, declaring the maximum possible value to his agent, and the other player declares a value of $0.54$ fraction of the true value.  Significantly, in each of these equilibrium points
the users capture $8/9 \cong 89\%$ of the welfare, 
leaving only $11\%$ to the auctioneer.  This outcome exhibits significant implied collusion between the users and, in fact, 
suggests that such an auction platform encourages collusion, probably against the interests of the platform designer.

\vspace{5pt}
\subsection*{Acknowledgments}
This project has received funding from the European Research Council (ERC) under the European Union's Horizon 2020 Research and Innovation Programme (grant agreement no. 740282).
\vspace{5pt}
\bibliography{no_regret_auctions_WWW2022_refs}
\appendix 
\newpage
\section*{Appendix}
\section{Additional Simulations}\label{sec:appendix-additional-simulations}
Our simulations show similar convergence results for different variants of multiplicative weights and FTPL algorithms. The following figures show the bid dynamics in second-price and first-price auctions when the first agent has a value of $v=1$ and the second agent has a value of $0.5$, for the Hedge algorithm and FTPL. We also look at simulations with a variant of FTPL with a recency bias that uses a geometric discount on the (perturbed) history of play. In all these different dynamics, in the second-price auction the high player always wins the auction and pays a price that is, on average, well below the actual second price of $0.5$, and in the first-price auction the high player always wins and pays the second price. 

\subsection{Second-Price Auctions}\label{sec:appendix-second-price}

\begin{figure}[!h]
\centering
\vspace{-14pt}
	\begin{subfigure}{.32\linewidth}
		\includegraphics[width=1.05\linewidth]{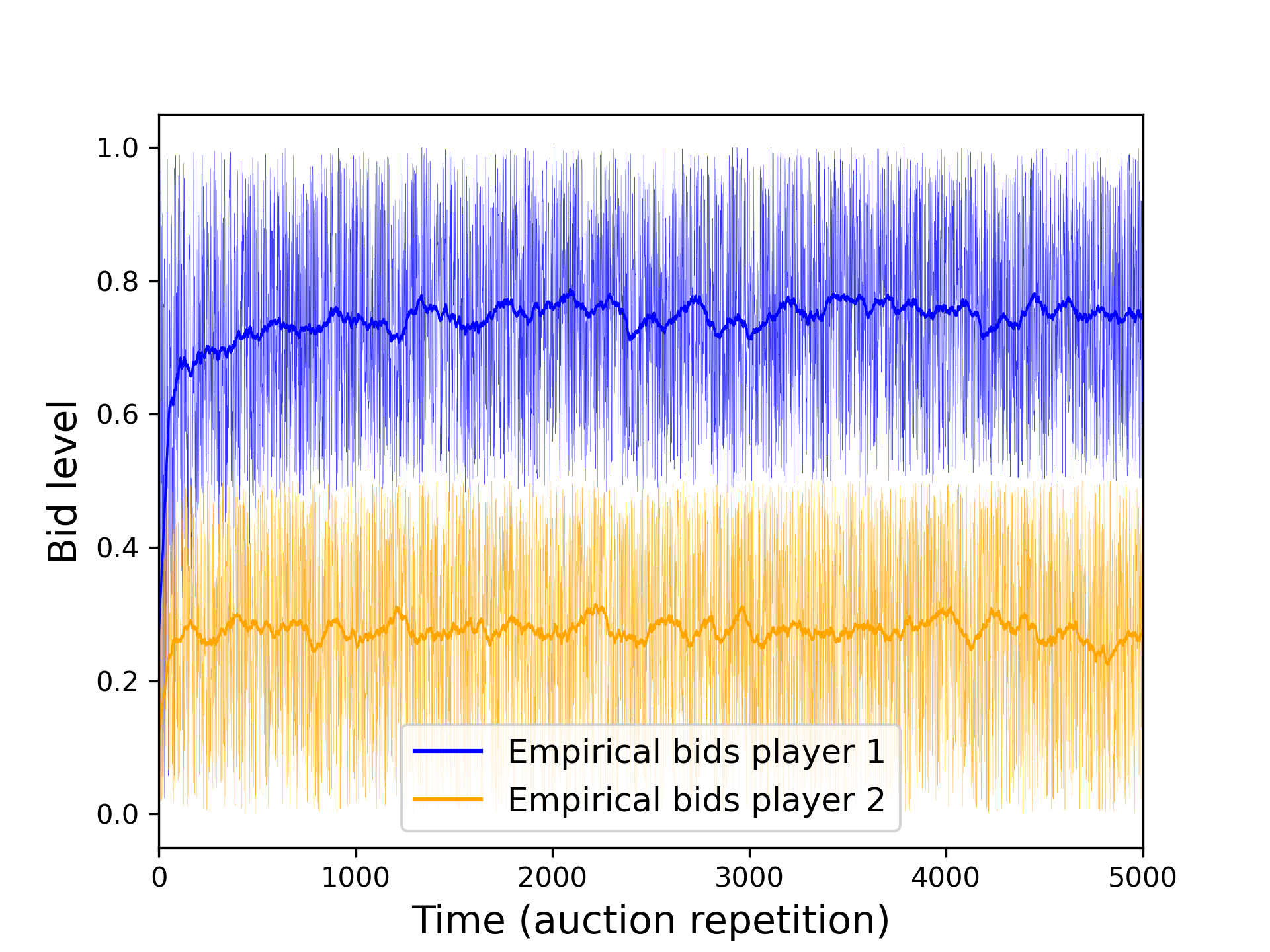}
		\caption{Hedge}  
		\label{fig:SecondPriceHedgeBidDynamics}
	\end{subfigure}
	\begin{subfigure}{.32\linewidth} 
	\center
		\includegraphics[width=1.05\linewidth]{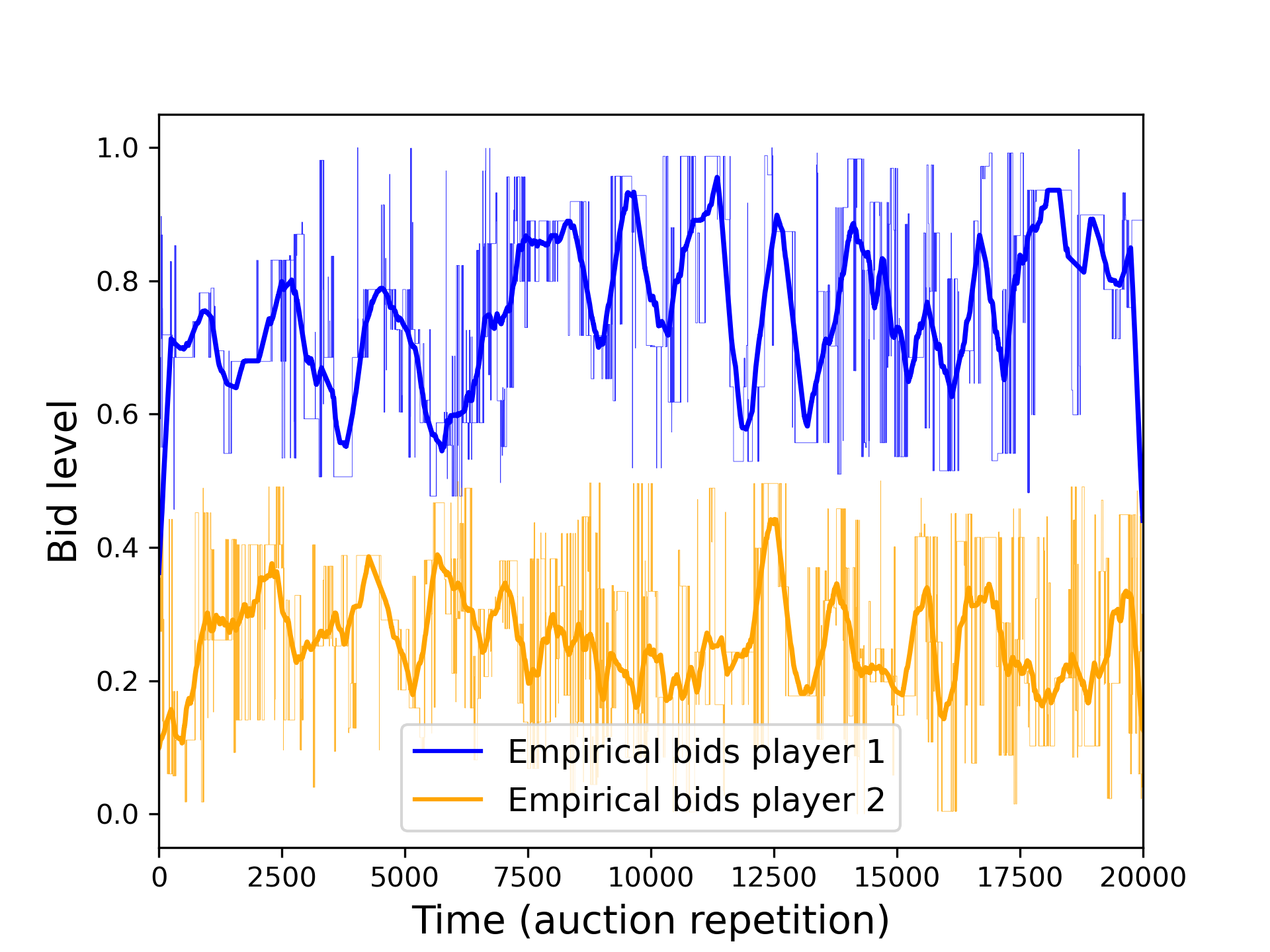}
		\caption{Recency FTPL}  
		\label{fig:SecondPriceFTPLRecencyBidDynamics}
	\end{subfigure}
	\begin{subfigure}{.32\linewidth}
		\includegraphics[width=1.05\linewidth]{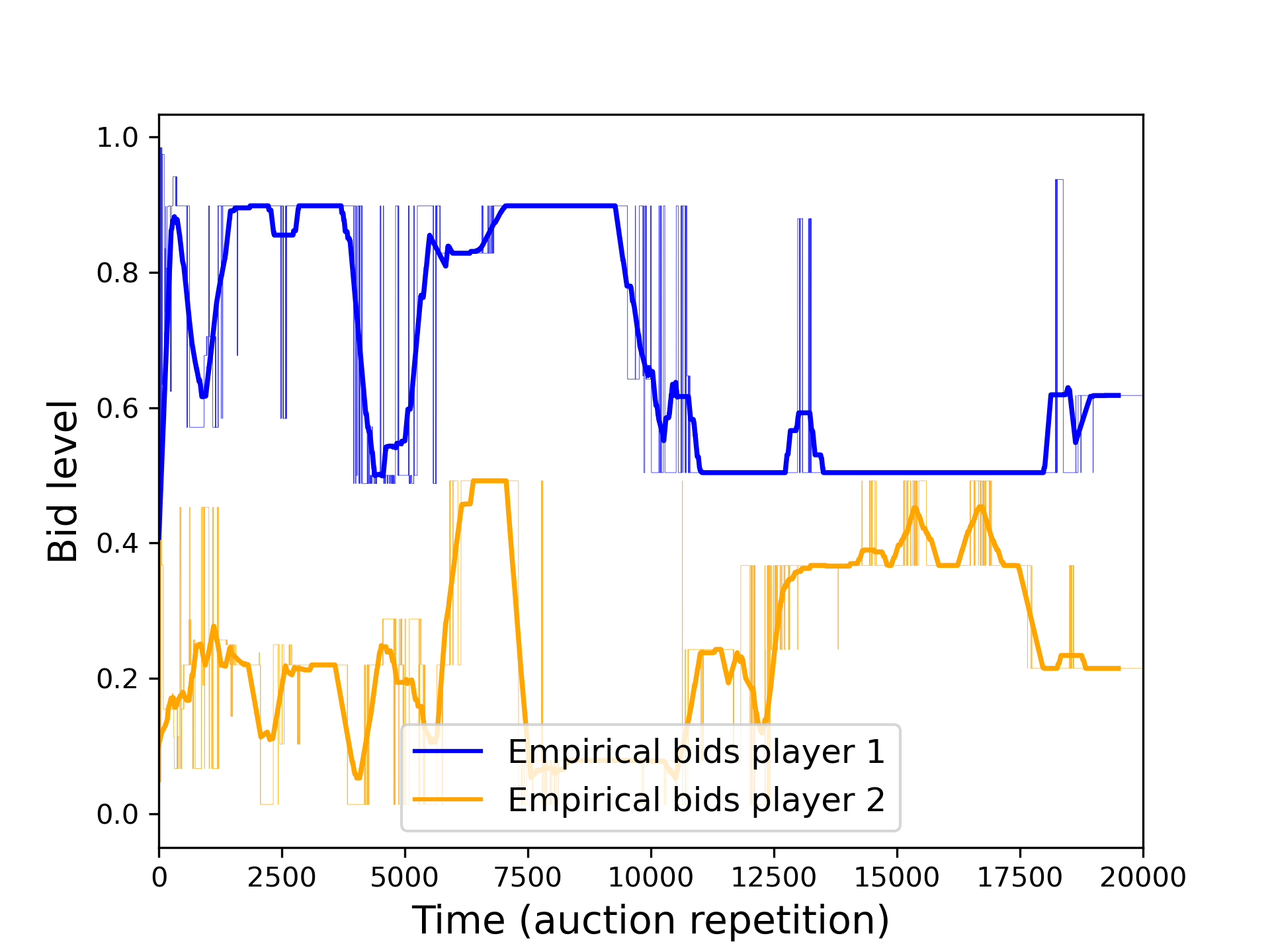}
		\caption{FTPL}  
		\label{fig:SecondPriceFTPLBidDynamics}
	\end{subfigure}
	
	\caption{{\small Bid dynamics in a repeated second-price auction with different types of regret-minimizing agents, where agent $1$ has value $1$ and agent $2$ has value $0.5$. Figure \ref{fig:SecondPriceHedgeBidDynamics}: The exponential update version of multiplicative weights (the Hedge algorithm). 
	Figure \ref{fig:SecondPriceFTPLRecencyBidDynamics}: FTPL with a recency bias.  
	Figure \ref{fig:SecondPriceFTPLBidDynamics}: FTPL.
	The solid lines show running-window averages of the bids of each player over a window of $100$ auctions in Figure \ref{fig:SecondPriceHedgeBidDynamics}, and of $500$ auctions in Figures \ref{fig:SecondPriceFTPLRecencyBidDynamics} and \ref{fig:SecondPriceFTPLBidDynamics}. 
		The average price paid by the high bidder in sequences of $50$,$000$ auctions in our simulations was $0.270 \pm 0.005$ for the Hedge algorithm, $0.26 \pm 0.11$ for FTPL, and $0.256 \pm 0.017$ for FTPL with a recency bias, compared with $0.271 \pm 0.003$ for the the linear multiplicative weights algorithm (presented in Figure \ref{fig:second-price-suctions}).
	}}
	\vspace{-9pt}
	\label{fig:second-price-Hedge-FTPLRecency}
\end{figure}

\subsection{First-Price Auctions}

\begin{figure}[!h]
\centering
\vspace{-14pt}
	\begin{subfigure}{.32\linewidth}
		\includegraphics[width=1.05\linewidth]{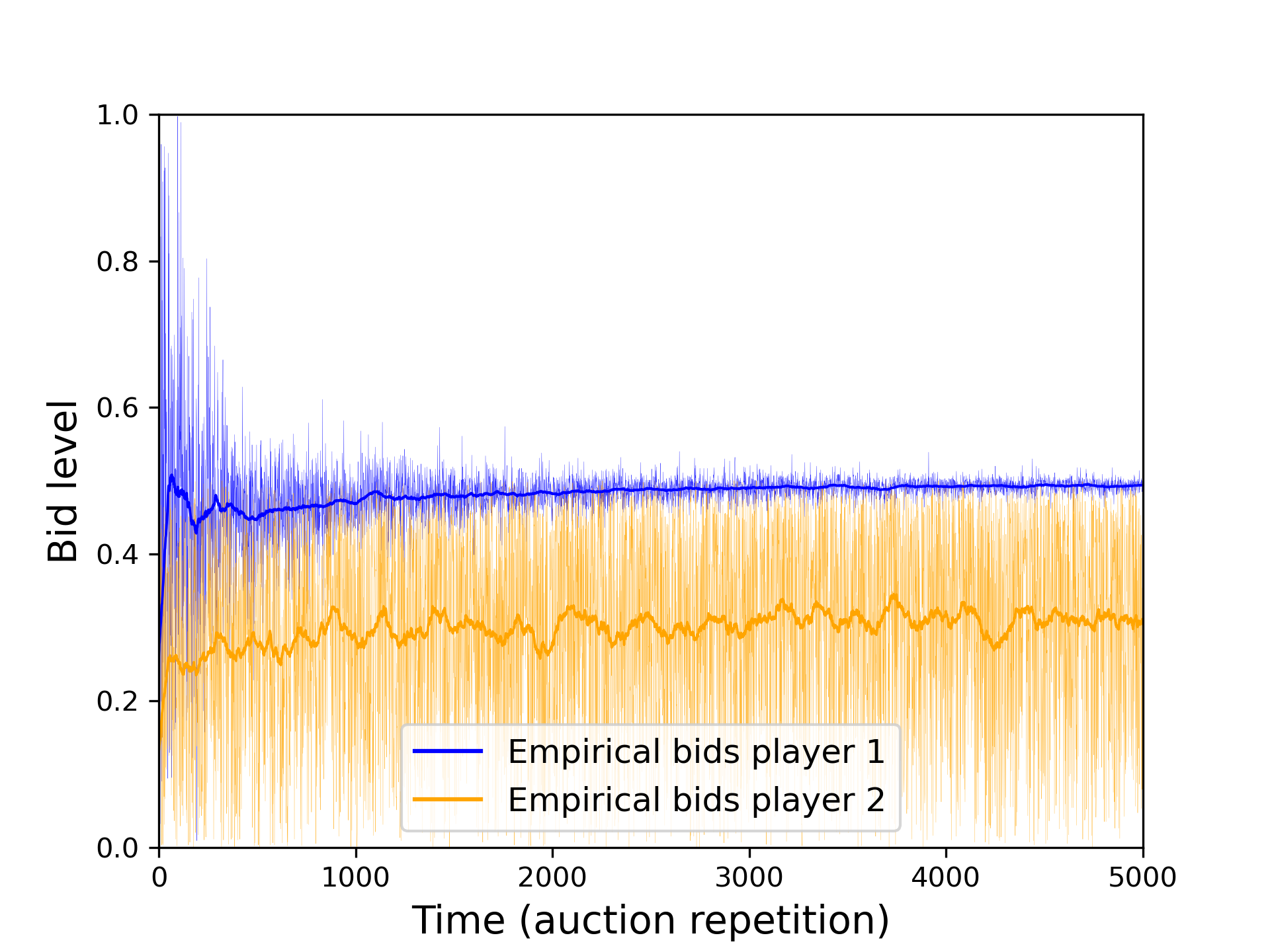}
		\caption{Hedge}  
		\label{fig:FirstPriceHedgeBidDynamics}
	\end{subfigure}
	\begin{subfigure}{.32\linewidth} 
	\center
		\includegraphics[width=1.05\linewidth]{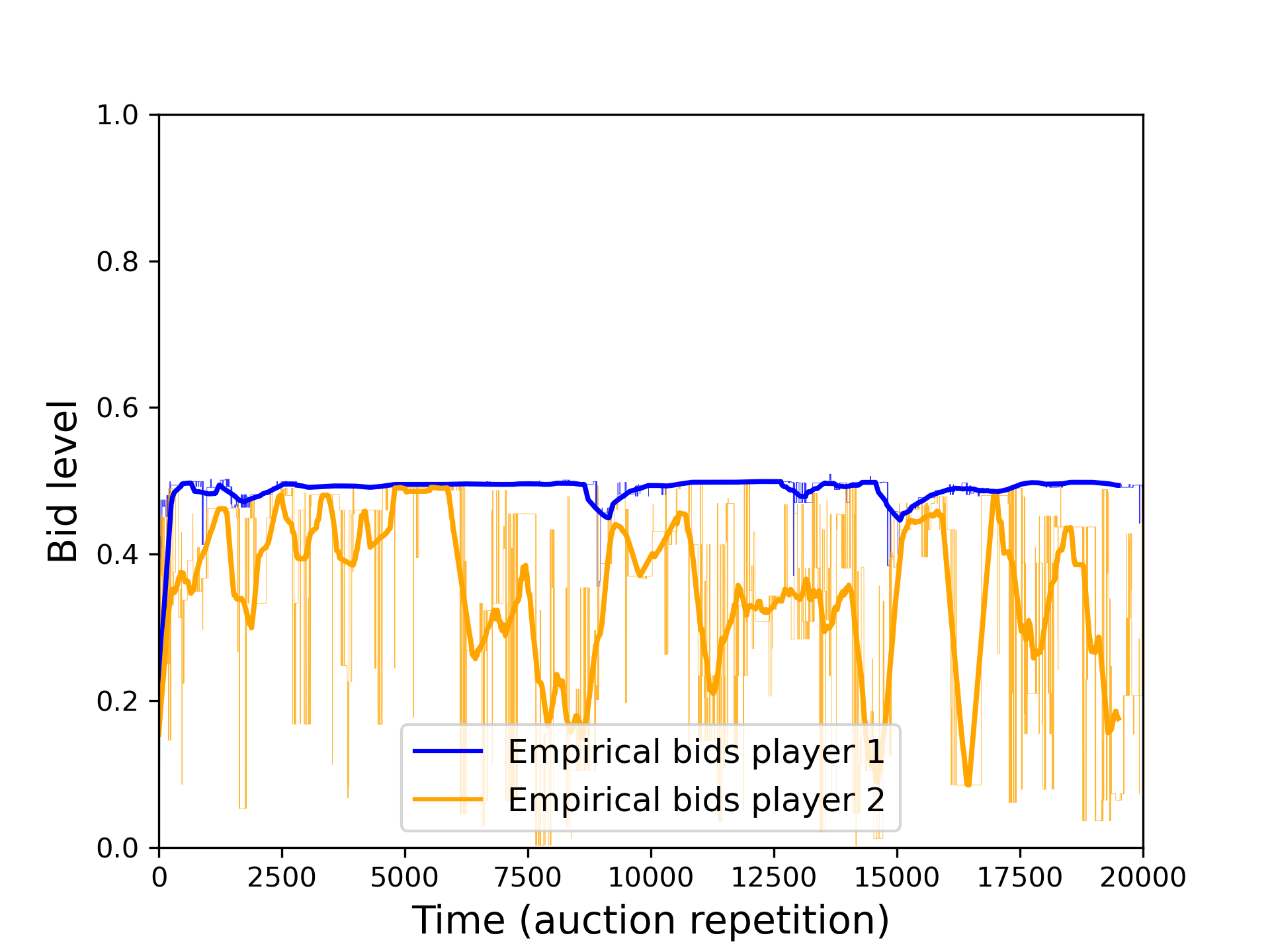}
		\caption{Recency FTPL}  
		\label{fig:FirstPriceFTPLRecencyBidDynamics}
	\end{subfigure}
	\begin{subfigure}{.32\linewidth}
		\includegraphics[width=1.05\linewidth]{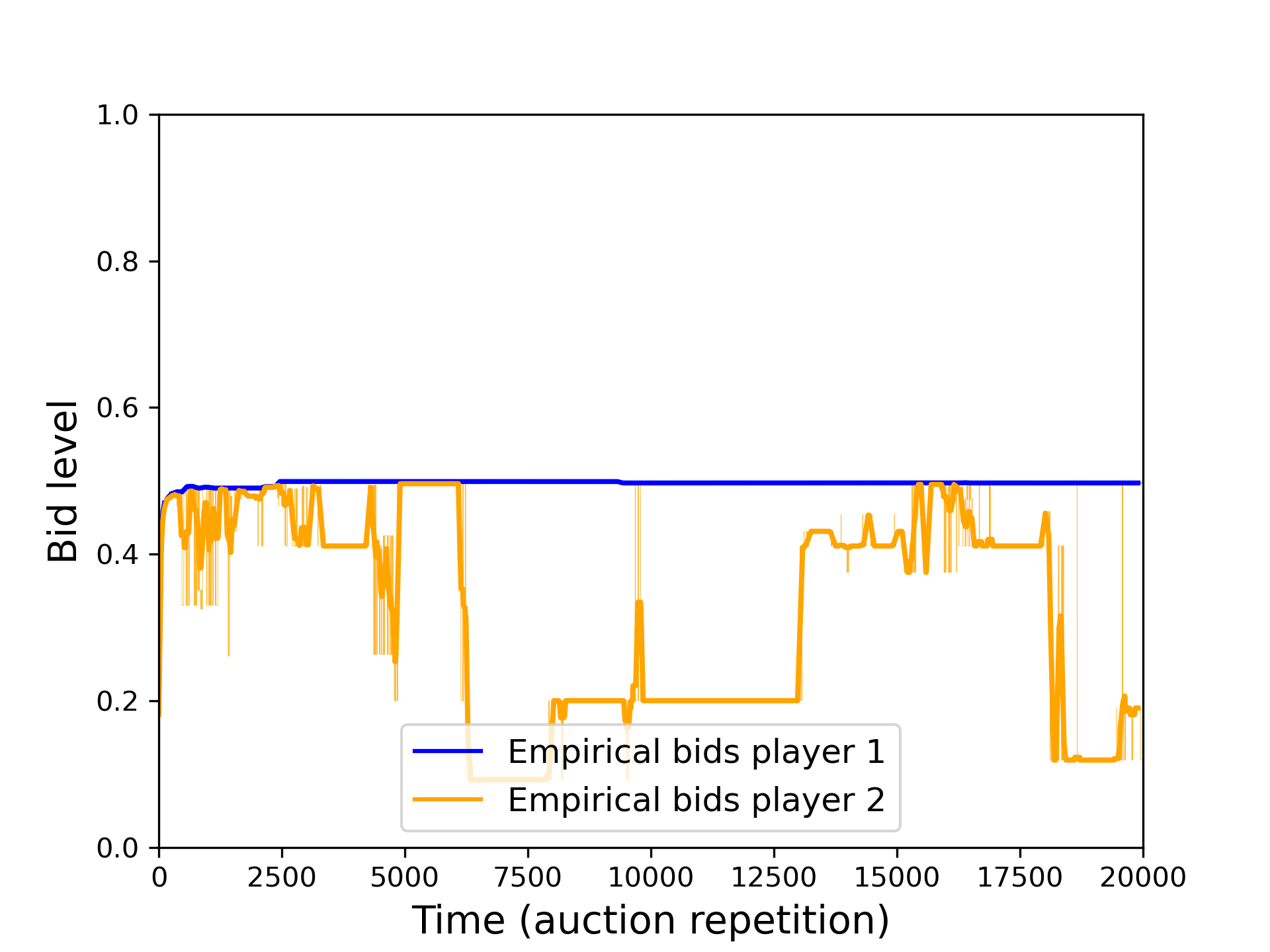}
		\caption{FTPL}  
		\label{fig:FirstPriceFTPLBidDynamics}
	\end{subfigure}
	
	\caption{{\small Bid dynamics in a repeated first-price auction with different types of regret-minimizing agents, where agent $1$ has value $1$ and agent $2$ has value $0.5$. Figure \ref{fig:FirstPriceHedgeBidDynamics}: The exponential update version of multiplicative weights (the Hedge algorithm). Figure \ref{fig:FirstPriceFTPLRecencyBidDynamics}: FTPL with a recency bias. Figure \ref{fig:FirstPriceFTPLBidDynamics}: FTPL.
	The solid lines show running-window averages of the bids of each player over a window of $100$ auctions in Figure \ref{fig:FirstPriceHedgeBidDynamics}, and of $500$ auctions in Figures \ref{fig:FirstPriceFTPLRecencyBidDynamics} and \ref{fig:FirstPriceFTPLBidDynamics}.
	}}
	\vspace{-9pt}
	\label{fig:first-price-Hedge-FTPLRecency}
\end{figure}

\section{First-Price Auctions}\label{sec:appendix-first-price}

\begin{proof} (Lemma \ref{thm:symmetric-FP-CCE-supports}):
Let $b_1, b_2$ be the lowest bids of the players in the support and let $B_1,B_2$ be the highest bids in the support. Consider the following cases: 

\vspace{3pt}
\noindent
Case I: $b_1=B_1$ (or symmetrically $b_2=B_2$).  In this case player $1$ plays a pure strategy and so the CCE condition implies that the second player must best-reply to it.  Therefore unless $b_1=B_1 \geq v-2\epsilon$, we must have $b_2=B_2=b_1+\epsilon$, which is not an equilibrium unless $b_2 = B_2 = v$.
Thus $b_1<B_1$ and $b_2<B_2$.  

\vspace{3pt}
\noindent
Case II: $b_1=b_2$. Here, if $b_1=b_2<v-2\epsilon$ then $b_1$ is strictly dominated by $b_1+\epsilon$ ($b_1$ loses unless player $2$ bids $b_2$, in which case player $1$'s utility is $(v-b_1)/2$ due to the tie, but then $v-(b_1+\epsilon)$, which this player gets from bidding $b_1+\epsilon$, is better).  Thus $b_1=b_2 \geq v-2\epsilon$.  

\vspace{3pt}
\noindent
Case III: w.l.o.g. $b_1<b_2$. Then $b_1$ always loses and always gets utility $0$. Thus, due to the co-undominated CCE condition, player $1$ must get $0$ utility for every bid in his support and every bid in player $2$'s support, and so he gets zero utility.  We thus must have $b_2=B2=v$ since otherwise a fixed bid of $v-\epsilon$ for player $1$ would beat staying in the given CCE, leading to a contradiction.
\end{proof}

\section{Generalized First-Price Auctions}\label{sec:appendix-GFP}

Here we provide further details of our analysis of GFP auctions. We state and prove Theorem \ref{thm:diagonal-CCE}, showing that the diagonal bid distribution $(x,x)$ with the Nash equilibrium marginal distributions (as were shown in Theorem \ref{thm:GFP-mixed-Nash}) is indeed a CCE of the auction game. 
We then continue with the formal details of our analysis of the  
Nash equilibria of the meta-game between the users of regret-minimizing agents that repeatedly play the auction for the nearly diagonal bid distribution.

\subsection{Diagonal Coarse Correlated Equilibrium}

Next we formally show, in Theorem \ref{thm:diagonal-CCE}, that the diagonal distribution discussed in the introduction is indeed a CCE. In this equilibrium both bidders place equal bids in every auction that are distributed according to the Nash density function {\small$F(x) = \frac{x}{1-x}$}. We begin with a lemma that formalizes the CCE condition for the case of diagonal distributions $(x,x)$, and then proceed to the proof of the theorem.

\begin{lemma}\label{thm:diagonal-CCE-condition} 
Consider the GFP auction with two items with click-through rates $(1,1/2)$ played by two players with values-per-click $(v,w)$. Assume w.l.o.g. that $v \geq w$, and assume that ties are broken uniformly at random. A diagonal bid distribution $(x,x)$, with support $[0,w/2]$ and density $f(x)$ and cumulative distribution $F(x)$, is a coarse correlated equilibrium of the auction game if and only if the following two inequalities hold:
\footnotesize
$
\frac{3}{4} \int_0^{w/2} f(x)(v-x) dx \geq \frac{1}{2} \max_{x_0 \in [0,w/2]} \big(F(x_0) + 1 \big)(v-x_0)
$, and 
$
\frac{3}{4} \int_0^{w/2} f(x)(w-x) dx \geq \frac{1}{2} \max_{x_0 \in [0,w/2]} \big(F(x_0) + 1 \big)(w-x_0)
$.
\normalsize
\end{lemma}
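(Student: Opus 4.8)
The plan is to invoke the standard characterization of a coarse correlated equilibrium directly: the joint distribution is a CCE if and only if, for each player, the expected utility from obeying the recommendation is at least the expected utility from committing in advance to any single fixed bid, with the opponent's play still drawn from the (marginal of the) same joint distribution. Since each of the two displayed inequalities is exactly of the form ``recommended utility $\geq$ best fixed deviation'' for one player, the whole proof reduces to computing the two sides of each inequality in closed form and checking that deviations outside the support $[0,w/2]$ may be ignored. The ``if and only if'' is then automatic, because the CCE definition \emph{is} the statement that no fixed deviation is profitable.

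First I would compute the on-path (recommended) utilities. On the diagonal the two players submit the same bid $x\sim f$ in every round, so every round is a tie; by the stated tie-breaking rule (a coin decides the top slot) player $1$'s per-round utility is $\tfrac34(v-x)$, giving the recommended utility $\tfrac34\int_0^{w/2} f(x)(v-x)\,dx$, and symmetrically $\tfrac34\int_0^{w/2} f(x)(w-x)\,dx$ for player $2$. These are precisely the left-hand sides of the two inequalities.

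Next I would compute the value of a fixed deviation. Suppose player $1$ abandons the recommendation and always bids a fixed $x_0$, while player $2$'s bid is drawn from $f$. Because $f$ is atomless, the tie event $\{x=x_0\}$ has probability zero, so player $1$ secures the top slot (utility $v-x_0$) exactly when $x<x_0$, i.e.\ with probability $F(x_0)$, and otherwise takes the bottom slot (utility $\tfrac12(v-x_0)$). This yields a deviation value $(v-x_0)\bigl(F(x_0)+\tfrac12(1-F(x_0))\bigr)=\tfrac12(v-x_0)(F(x_0)+1)$, which matches the right-hand side; the identical computation with value $w$ gives player $2$'s deviation value.

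The one step needing care — and the one I expect to be the only non-mechanical part — is justifying that the maximum is taken only over $x_0\in[0,w/2]$, even though a player could in principle deviate to a bid above $w/2$. Here I would note that a fixed bid $x_0>w/2$ beats every bid in the support and hence secures the top slot deterministically with utility $v-x_0$ (resp.\ $w-x_0$), which is decreasing in $x_0$ and tends to $v-w/2$ as $x_0\to(w/2)^+$; since $F(w/2)=1$, the deviation formula $\tfrac12(F(x_0)+1)(v-x_0)$ evaluated at $x_0=w/2$ already equals $v-w/2$, so the supremum over all admissible bids is attained inside $[0,w/2]$ and the $\max$ in the statement is well-defined as written. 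With both sides identified for each player, the CCE condition is literally the conjunction of the two displayed inequalities, which establishes the claimed equivalence.
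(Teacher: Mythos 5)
Your proposal is correct and follows essentially the same route as the paper's proof: compute the on-path tie utilities $\tfrac34(v-x)$ and $\tfrac34(w-x)$, compute the fixed-deviation value $\tfrac12\bigl(F(x_0)+1\bigr)(v-x_0)$ against the opponent's marginal, and observe that the CCE definition is exactly the conjunction of the two no-deviation inequalities. The only (immaterial) difference is in dismissing deviations above $w/2$: the paper uses two separate domination arguments (bids above $w/2$ are worse than $w/2$ for the high-value player and dominated by $0$ for the low-value player), whereas you use a single monotonicity argument showing the deviation value for $x_0>w/2$ is strictly below the formula's value at $x_0=w/2$, which works equally well.
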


\begin{proof} 
First, in the case where both players play according to $f$, i.e., the bids are $(x,x)$ where $x\sim f(x)$, the utility of the player with value $v$ from bidding $x$ is $3/4\cdot(v-x)$ and the utility of the player with value $w$ is $3/4\cdot (w-x)$. The expected utilities are then given by the integrals on the left-hand side of the two inequalities. 

In a coarse correlated equilibrium these expected utilities, of the players must be greater than or equal to the maximum expected utility that each player could obtain by playing a fixed bid, assuming that the other player bids according to the distribution $f$. Notice that for the player with the high value $v$, any bid $x_0>w/2$ gives lower utility than bidding $w/2$ (assuming that the other player plays according to $f$), and for the player with the low value $w$, any bid $x_0>w/2$ is strictly dominated by bidding 0. Therefore, it is sufficient to look only at fixed bids $x_0 \leq w/2$ for both players.

The expected utility of the player with value $v$ from bidding a fixed bid {\small$x_0$ is: $F(x_0)\cdot (v-x_0) + (1-F(x_0))\cdot (v-x_0)/2 = 1/2 \cdot \big(F(x_0) + 1 \big)(v-x_0)$}, and the expected utility of the player with value $w$ from bidding a fixed bid {\small$x_0$} {\small is $F(x_0)\cdot (w-x_0) + (1-F(x_0))\cdot (w-x_0)/2 = 1/2 \cdot \big(F(x_0) + 1 \big)(w-x_0)$}. Taking the maximum, we obtain the expressions of the right-hand side of the two inequalities. 
\end{proof}

\begin{theorem}\label{thm:diagonal-CCE} 
Consider a GFP auction as in Lemma \ref{thm:diagonal-CCE-condition}, with {\small$v$=$w$=$1$}.
The diagonal distribution {\small$(x,x)$} with support {\small$[0,1/2]$}, and {\small$F(x)=\frac{x}{1-x}$} (and density {\small$f(x) = \frac{1}{(1-x)^2}$}),  
is a CCE of the auction, with an expected payoff of {\small$\frac{3}{4} \ln(2) \cong 0.52$} to each player. Additionally, there are other diagonal CCEs  with same support and different expected payoffs.
\end{theorem}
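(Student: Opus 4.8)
The plan is to reduce everything to the two-inequality criterion of Lemma~\ref{thm:diagonal-CCE-condition}, which for $v=w=1$ collapses to the single condition $\frac34\int_0^{1/2} f(x)(1-x)\,dx \ge \frac12\max_{x_0\in[0,1/2]}(F(x_0)+1)(1-x_0)$, whose left-hand side is exactly the diagonal payoff. First I would verify the displayed Nash distribution. For the payoff side, plugging in $f(x)=1/(1-x)^2$ gives $\frac34\int_0^{1/2}\frac{dx}{1-x}=\frac34\ln 2$, which is both the left-hand side and the claimed expected payoff. For the deviation side, the algebraic identity $F(x_0)+1=\frac{x_0}{1-x_0}+1=\frac1{1-x_0}$ makes the product $(F(x_0)+1)(1-x_0)\equiv 1$ constant on the whole support; this is precisely the Nash indifference reappearing, since a fixed-bid deviation only ever sees the opponent's marginal $f$. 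Hence the right-hand side equals $\frac12$, and since $\frac34\ln2\approx0.52>0.5$ the criterion holds with \emph{strict} slack, establishing both the CCE claim and the payoff $\frac34\ln 2$.

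For the final sentence of the theorem I would exhibit a second explicit diagonal CCE with a different payoff, which already suffices for the claim. I would take the uniform distribution on $[0,1/2]$, i.e.\ $f\equiv 2$ and $F(x)=2x$. Its payoff is $\frac34\int_0^{1/2}2(1-x)\,dx=\frac9{16}$, while the deviation value is governed by $g(x_0)=(2x_0+1)(1-x_0)$, a downward parabola maximized at $x_0=\tfrac14$ with $g(\tfrac14)=\frac98$; thus the right-hand side is $\frac12\cdot\frac98=\frac9{16}$ and the criterion holds with equality. Since $\frac9{16}=0.5625\ne\frac34\ln2$, this is a diagonal CCE with the same support $[0,1/2]$ but a strictly different payoff.

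To upgrade this to a genuine continuum of distinct payoffs I would interpolate, setting $F_t=(1-t)F_{\mathrm{NE}}+tF_{\mathrm{unif}}$ for $t\in[0,1]$; each $F_t$ has strictly positive density on $[0,1/2]$ and hence support exactly $[0,1/2]$. The payoff is affine in $t$, whereas the deviation term $\frac12\max_{x_0}(F_t(x_0)+1)(1-x_0)$ is a maximum of functions affine in $t$ and is therefore convex in $t$. Consequently the slack (payoff minus deviation term) is concave in $t$; it is nonnegative at $t=0$ (strictly) and at $t=1$ (with equality), so it remains nonnegative on all of $[0,1]$. Every $F_t$ is thus a diagonal CCE, and the payoff sweeps continuously over $[\frac34\ln2,\frac9{16}]$.

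I expect the main obstacle to be controlling the inner maximization $\max_{x_0}(F(x_0)+1)(1-x_0)$ as the distribution varies, since this is exactly what could break the CCE inequality. The two ingredients that make it tractable are the strict slack in the Nash case and the choice of a second example (uniform) whose maximizer is found by a one-line calculus step; the convexity of the max-of-affine term is what lets the interpolation argument avoid re-solving the maximization for every $t$.
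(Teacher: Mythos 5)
Your proof is correct and follows essentially the same route as the paper's: reduce to the criterion of Lemma~\ref{thm:diagonal-CCE-condition}, compute the diagonal payoff $\frac{3}{4}\ln(2)$, and use the identity $(F(x_0)+1)(1-x_0)\equiv 1$ to bound the deviation value by $\frac12$, then substitute an explicit second distribution (the uniform one, also the paper's first example, for which the paper additionally lists the linear density $f(x)=8x$ with payoff $\frac12$) for the final claim. Your interpolation argument via convexity of the max-of-affine deviation term is a correct and slightly stronger addition, yielding a continuum of diagonal CCEs with payoffs sweeping $[\frac{3}{4}\ln 2,\,\frac{9}{16}]$ rather than just isolated examples.
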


\begin{proof} 
By Lemma \ref{thm:diagonal-CCE-condition}, the CCE condition for the symmetric case {\small$v=w=1$}, is:
\footnotesize
$
\frac{3}{4} \cdot \int_0^{1/2}{f(x) (1-x)} dx \geq \frac{1}{2} \cdot \max_{x_0 \in [0,1/2]} (1-x_0)\big(F(x_0)\big) + 1).
$
\normalsize
This condition is identical for each of the players. 
Integrating the left-hand side of the inequality and substituting {\small $F(x) = \frac{x}{1-x}$} in the right-hand side, we obtain 
\footnotesize
$
\frac{3}{4} \cdot \ln(2) \geq \frac{1}{2} \cdot \max_{x_0 \in [0,1/2]} (1-x_0)\big(\frac{x_0}{1-x_0}\big) + 1) = \frac{1}{2}.
$
\normalsize
That is, with the distribution $F$, the right-hand side is constant: any fixed bid played against this distribution gives a utility of $1/2$. Since $3/4\cdot \ln(2) = 0.519... > 1/2$, the inequality holds for every fixed bid.

Other diagonal CCEs with the same support can be shown in a similar way.   
For example, the diagonal uniform distribution $f(x) = 2$ is a CCE with an expected payoff of $9/16$, and the diagonal linear distribution $f(x) = 8x$ is a CCE with an expected payoff of $1/2$. These are obtained directly by substituting the distributions into the inequality stated above. 
\end{proof}

\subsection{Equilibrium of the Meta-Game}\label{app:GFP-meta-game-NE}
Next, we turn to consider the meta-game of the auction game, and to derive the equilibrium of this meta-game. 
The declared values provided by the players and used by the agents are denoted by $v$ and $w$, and we assume w.l.o.g. that $v \geq w$, and in cases where it is needed to check whether a player would prefer to declare otherwise we will do so explicitly.   
The analysis is based on the assumption that the agents converge to the nearly diagonal distribution discussed in the introduction where the ``high agent,'' with value $v$, places bid $x$ distributed according to the Nash distribution $F=\frac{x}{1-x}$, and the other agent bids $x$ with probability $w/v$ and otherwise bids $0$. This is validated empirically, as discussed in Section \ref{sec:GFP}. 
We begin with the definition of a nearly diagonal distribution.
\begin{definition}
A distribution $D$ on $(x,y)$ is called \emph{nearly diagonal} if its support is on the diagonal, $x=y$, and on the $x$-axis, $y=0$.  
\end{definition}
 
The following lemma formalizes our prediction for the empirical play of two agents with values $v>w$, and shows the utilities obtained by players with equal values who misreport their values to their agents as $v,w$ under convergence to this distribution.

\begin{lemma}\label{thm:symmetric_GFP_nearly_diagonal_utilities}
Consider the two-player GFP auction with two slots with click-through rates {\small$(1,1/2)$},  
where the true value-per-click for each of the two players is $1$. The values that users declare to their agents are {\small$v>0$} for player $1$ and {\small$w>0$} for player $2$, and 
assume w.l.o.g. that {\small$v \geq w$}.  
In a nearly agonal distribution {\small$(x,y)$} where {\small$x\sim F(x)=\frac{x}{1-x}$} with support {\small$[0,w/2]$} and {\small$y=x$} w.p. {\small$p=w/v$} and {\small$y=0$} otherwise, the expected utilities of the players are 
\footnotesize
$
u_1(v,w) = \big(1-w(1-\ln(2))\big)\Big(1-\frac{w}{4v}\Big), \quad u_2=\frac{1}{2}\Big(1-\frac{w}{v}\Big) + \frac{3w}{4v}\big(1-w(1-\ln(2))\big)
$.
\normalsize
\end{lemma}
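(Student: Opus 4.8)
The plan is to compute each player's expected utility directly by conditioning on the two branches of the nearly diagonal distribution, using the \emph{true} per-click value $1$ (not the declared values $v,w$) in every payoff expression. Write $p = w/v$ for the probability that the low agent matches the high agent's bid. Then with probability $p$ the realized profile is the tie $(x,x)$, and with probability $1-p$ it is $(x,0)$ with $x>0$ almost surely, since the relevant CDF is continuous with $F(0)=0$. I take $F$ to be the Nash-derived marginal of the high agent restricted to its stated support $[0,w/2]$, i.e.\ $F(x)=\frac{x}{w-x}$ with density $f(x)=\frac{w}{(w-x)^2}$, since this is the distribution from the model that normalizes correctly on $[0,w/2]$ (one checks $F(w/2)=1$).

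Next I would record the per-round payoffs in each branch. In the tie branch the coin toss for the top slot gives each player, against a true value of $1$, the utility $\tfrac12(1-x)+\tfrac12\cdot\tfrac{1-x}{2}=\tfrac34(1-x)$. In the $(x,0)$ branch the high agent wins the top slot and gets $1-x$, while the low agent takes the bottom slot at price $0$ and gets $\tfrac12(1-0)=\tfrac12$. Averaging over the two branches yields
\begin{align*}
u_1 &= \E_{x\sim F}\!\left[p\cdot\tfrac34(1-x) + (1-p)(1-x)\right] = \left(1-\tfrac{p}{4}\right)\E_{x\sim F}[1-x],\\
u_2 &= p\cdot\tfrac34\,\E_{x\sim F}[1-x] + (1-p)\cdot\tfrac12 .
\end{align*}
Both utilities are thereby reduced to the single quantity $\E_{x\sim F}[1-x]=1-\E_{x\sim F}[x]$.

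The one genuine computation is the first moment $\E_{x\sim F}[x]=\int_0^{w/2} x\,\frac{w}{(w-x)^2}\,dx$. I would evaluate it via the substitution $u=w-x$, splitting $\frac{w-u}{u^2}=\frac{w}{u^2}-\frac1u$ and integrating to $\left[-\frac{w}{u}-\ln u\right]_{w/2}^{w}$, which collapses to $w\bigl(1-\ln 2\bigr)$; this is precisely where the $\ln 2$ in the statement originates. Hence $\E_{x\sim F}[1-x]=1-w(1-\ln 2)$. Substituting this together with $p=w/v$ into the two displayed expressions gives $u_1=\bigl(1-w(1-\ln 2)\bigr)\bigl(1-\tfrac{w}{4v}\bigr)$ and $u_2=\tfrac12\bigl(1-\tfrac{w}{v}\bigr)+\tfrac{3w}{4v}\bigl(1-w(1-\ln 2)\bigr)$, as claimed.

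I expect the main difficulty to be bookkeeping rather than depth: keeping the true value $1$ (and not $v$ or $w$) in every payoff, correctly apportioning probability mass $p=w/v$ to the diagonal and $1-p$ to the axis, and applying the tie payoff $\tfrac34(1-x)$ consistently to both players. The moment integral itself is routine once the substitution $u=w-x$ is chosen, so essentially all the conceptual content sits in the clean two-branch decomposition that reduces both utilities to $\E_{x\sim F}[1-x]$.
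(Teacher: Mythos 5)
Your proof is correct and follows essentially the same route as the paper's: condition on the two branches of the nearly diagonal distribution (tie with probability $p=w/v$, low agent bids $0$ otherwise), reduce both utilities to $\E_{x\sim F}[1-x]$, and evaluate the moment integral via $u=w-x$ to get $1-w(1-\ln 2)$. Your reading of the marginal as $F(x)=\frac{x}{w-x}$ on $[0,w/2]$ (rather than the statement's $\frac{x}{1-x}$, which normalizes only when $w=1$) matches the paper's intended distribution, and your insistence on using the true value $1$ in every payoff quietly corrects two slips in the paper's own proof text, where ``$v-x$'' and ``utility of $w/2$'' should read $1-x$ and $1/2$ to be consistent with the stated formulas.
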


\begin{proof} 
Player $2$ bids zero with probability $1-p$; in these cases, player $1$ wins the top slot and gets a utility of $v-x$.  In the other cases (i.e., with probability $p$), both players bid the same, and player $1$ has a utility of $3/4 \cdot (v-x)$. The expected utility of player $1$ is thus
\vspace{-6pt}
\footnotesize
\begin{equation*}
\vspace{-6pt}
u_1(v,w) = \Big(1-\frac{p}{4}\Big)\int_0^{\frac{w}{2}} f(x)(1-x) dx = \Big(1-\frac{p}{4}\Big)\big(1-w(1-\ln(2))\big).
\end{equation*} 
\normalsize
Player $2$ bids zero with probability $1-p$ and has in these cases a utility of $w/2$, or bids $x$ with probability $p$ and then has a utility of $3/4\cdot (w-x)$. The expected utility to player $2$ is thus
\vspace{-6pt}
\footnotesize
\begin{equation*}
\vspace{-6pt}
u_2(v,w) = \frac{(1-p)w}{2} + \frac{3p}{4} \int_0^{\frac{w}{2}} f(x)(1-x) dx = \frac{(1-p)w}{2} + \frac{3p}{4} \big(1-w(1-\ln(2))\big).
\end{equation*} 
\normalsize
Substituting $p=w/v$ in the expressions we obtained for $u_1$ and $u_2$ gives the result stated.
\end{proof}

The next result specifies the equilibria of the meta-game, in which users select values to report to their regret-minimizing agents.  
\begin{proposition}\label{thm:GFP-meta-game-equilibria} 
Consider the GFP auction as in Lemma \ref{thm:symmetric_GFP_nearly_diagonal_utilities}. 
Denote the declared values of agents $1$ and $2$ by $v$ and $w$, respectively. Assume w.l.o.g. that $v \geq w$, and that the agents of these players converge to the nearly diagonal distribution $(x,y)$ as in Lemma \ref{thm:symmetric_GFP_nearly_diagonal_utilities}.

\vspace{5pt}
\noindent
If there is a maximum value declaration {\small$M > \frac{1}{1-\ln(2)}$}, 
then the meta-game with these dynamics has two Nash equilibria in which the 
 players declare {\small$(\frac{1}{6(1-\ln(2))}, M)$} or {\small$(M, \frac{1}{6(1-\ln(2))})$}. If the value declarations are not bounded, then for any $\epsilon>0$ every declaration profile in 
which one player declares {\small$\frac{1}{6(1-\ln(2))}$} and the other player declares a value $h>1/\epsilon$ is an $\epsilon$-Nash-equilibrium of the meta-game.
\end{proposition}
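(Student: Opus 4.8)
The plan is to treat each user's report as a single scalar action and to read the payoffs directly off Lemma~\ref{thm:symmetric_GFP_nearly_diagonal_utilities}, keeping in mind that $u_1,u_2$ are written under the convention that the \emph{high} declarer receives $u_1$ and the \emph{low} declarer receives $u_2$. Thus a player who declares $a$ against an opponent declaring $b$ receives $u_1(a,b)$ when $a>b$ and $u_2(b,a)$ when $a<b$ (with the symmetric tie value when $a=b$). Every candidate deviation then splits into two regimes: one that keeps the deviator in the same role, and one that crosses the opponent's declaration and \emph{flips} the role. I would rule out a profitable deviation in each regime separately, for each player. Throughout I write $w^\ast=\tfrac{1}{6(1-\ln 2)}$.

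First I would fix the opponent at the large declaration ($M$ in the bounded case, $h$ in the unbounded case) and optimize the low player's report. Writing $u_2(M,w)=\tfrac12+\tfrac{w}{4M}-\tfrac{3(1-\ln2)}{4M}\,w^2$, this is a downward quadratic in $w$, so the first-order condition yields the unique maximizer $w^\ast$, exactly the claimed value; and $w^\ast<M$ follows from $M>\tfrac{1}{1-\ln 2}$, so the low player genuinely remains low. To close the role-flip I would observe that if the low player instead declares above $M$ he becomes the high player and earns $u_1(\cdot,M)$, whose leading factor $1-M(1-\ln2)$ is negative precisely because $M>\tfrac{1}{1-\ln 2}$; hence any such deviation gives negative utility and is dominated. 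This is exactly where the hypothesis on $M$ does its work and produces the hawk--dove structure.

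Second I would fix the opponent at $w^\ast$ and optimize the high player's report. Since $1-w^\ast(1-\ln2)=\tfrac56>0$ is a positive constant and $u_1(v,w^\ast)=\tfrac56\bigl(1-\tfrac{w^\ast}{4v}\bigr)$ is strictly increasing in $v$, the best in-role response is to declare as large as allowed: $v=M$ in the bounded case (an exact best reply, giving an exact Nash equilibrium at $(M,w^\ast)$ and its mirror image), and no finite maximizer in the unbounded case. For the role flip, a high player who drops below $w^\ast$ becomes the low player and earns at most $\max_{v'}u_2(w^\ast,v')$, which I would bound by evaluating the quadratic at its vertex to obtain $\le 5/8<5/6$, so switching to the low role is never profitable. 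Reading off $u_1(M,w^\ast)=\tfrac56-O(1/M)$ and $u_2(M,w^\ast)=\tfrac12+O(1/M)$ then gives the stated payoffs.

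Finally, for the unbounded case I would quantify the only remaining incentive, the high player's wish to raise $h$ further. The supremal gain from replacing $h$ by any $h'>h$ is $\tfrac56\cdot\tfrac{w^\ast}{4}\bigl(\tfrac1h-\tfrac1{h'}\bigr)\le \tfrac{5w^\ast}{24\,h}=O(1/h)$, which is below $\epsilon$ once $h>1/\epsilon$; combined with the already-established facts that the low player is at an \emph{exact} best reply and that neither role flip helps, this yields the claimed $\epsilon$-Nash equilibrium. I expect the main obstacle to be the bookkeeping around the piecewise, role-switching payoff: one must check that \emph{both} the in-role deviation and the role-flipping deviation are unprofitable for each player and treat the tie case $a=b$ carefully, though the strict gaps $5/8<5/6$ and the negativity of $1-M(1-\ln2)$ should comfortably absorb the boundary.
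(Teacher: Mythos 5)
Your proposal is correct and takes essentially the same route as the paper's proof: the first-order condition on the downward quadratic $u_2$ giving $w^\ast=\frac{1}{6(1-\ln 2)}$, the monotonicity of $u_1(v,w^\ast)=\frac56\bigl(1-\frac{w^\ast}{4v}\bigr)$ in $v$, the role-flip checks via the negativity of $1-M(1-\ln 2)$ and the $5/8$ bound on the flipped-to-low payoff against the high payoff near $5/6$, and the $O(1/h)$ residual-gain bound in the unbounded case. The only cosmetic difference is that you bound the high player's supremal gain directly by $\frac{5w^\ast}{24h}$, whereas the paper instantiates $M=\frac{1}{24\epsilon(1-\ln 2)}$ and evaluates $u_1=\frac56(1-\epsilon)$ there; these are the same estimate.
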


\begin{proof} 
According to Lemma \ref{thm:symmetric_GFP_nearly_diagonal_utilities}, 
in the nearly diagonal distribution defined in the proposition, the expected utilities of the players (assuming $v\geq w$) are 
\footnotesize
\begin{equation*}
u_1(v,w) = \big(1-w(1-\ln(2))\big)\Big(1-\frac{w}{4v}\Big), \ u_2=\frac{1}{2}\Big(1-\frac{w}{v}\Big) + \frac{3w}{4v}\big(1-w(1-\ln(2))\big).
\end{equation*}
\normalsize
To find an equilibrium, we first require that the utility to player $2$ be maximal:
{\small 
$
\frac{\partial u_2}{\partial w} = 
\frac{1}{4v} 
\big(1-6w(1-\ln(2))\big) = 0
$. 
}
A maximum is obtained at {\small$w=\frac{1}{6(1-\ln(2))} \cong 0.543$}.
Next, we look at the derivative for the ``high player'': 
{\small$\frac{\partial u_1}{\partial v} = 
\frac{w}{4v^2} \big(1-w(1 - \ln(2))\big)$}. As it turns out, this derivative is positive for every 
{\small$w < \frac{1}{1-\ln(2)}$}, and specifically, it is positive also for 
{\small$w=\frac{1}{6(1-\ln(2))}$}. 
Therefore, if one player declares {\small$\frac{1}{6(1-\ln(2))}$}, then the other player prefers to declare as high a value as possible. 

Let {\small$M>\frac{1}{1-\ln(2)}$} and assume (w.l.o.g. about switching the names of players $1$ and $2$) that player $1$ declares $v=M$ and that player $2$ declares {\small$w=\frac{1}{6(1-\ln(2))}$}. 
We have shown already that player $1$ has lower utility if he declares any declaration lower than $M$ that is still above $w$ (since the derivative of the utility is  positive), and that player $2$ has less utility if he declares any other declaration that is still less than or equal to $v$ (since this is not his maximum point). We now need to show that no player prefers to unilaterally switch positions, i.e., that player $1$ would not prefer to declare below $w$, and that player $2$ would not prefer to declare above $M$. This can be seen directly from the utility functions derived above.  
On the one hand, when player $1$ declares  
$v=M$ 
(and player $2$ declares $w$), player $1$ has utility 
{\small$u_1(v,w)=\frac{5}{6}\big(1-\frac{1}{24M(1-\ln(2))}\Big) > 
\frac{115}{144} > \frac{3}{4}$}. 
On the other hand, if player $1$ declares $w$ or less (thus taking the role of the low player), then he has utility at most $\frac{5}{8}$. 
Hence the high player (player $1$) would not prefer to declare less than $M$, whether this is below or above $w$ (but he would prefer to declare values higher than $M$ if this is possible, as shown above). 

Next, when player $2$ declares {\small$w=\frac{1}{6(1-\ln(2))}$}, he has utility more than $1/2$. If player $2$ declares above player $1$, i.e., above $M$, he has negative utility and therefore he would not prefer to take the role of the high player, and his best reply, given that player $1$ declares $v=M$, is {\small$w=\frac{1}{6(1-\ln(2))}$}. If $M$ is the maximum possible declaration in the meta-game, then $v=M$ is also the best reply of player $1$, given that player $2$ declares {\small$w=\frac{1}{6(1-\ln(2))}$}. That is, when there is a maximum declaration {\small$M>\frac{1}{1-\ln(2)}$}, the two declaration profiles in which one player declares $M$ and the other declares {\small$\frac{1}{6(1-\ln(2))}$} are Nash equilibria of the meta-game.     

If declarations are not bounded from above, let $\epsilon>0$, and let {\small$M=\frac{1}{24\epsilon(1-\ln(2))}$}. Assume w.l.o.g. that player $1$ is the high player and declares $v=M$, and player $2$ is the low player and declares {\small$w=\frac{1}{6(1-\ln(2))}$}. We have already shown that in this profile player $2$ is best-replying to player $1$, but player $1$ would prefer to increase his declaration. To show an $\epsilon$-equilibrium, we claim that player $1$ cannot increase his utility by more than $\epsilon$ from any other declaration. 
From the expression of the utility stated above it follows that the utility of player $1$ from declaring $v=M$, assuming player $2$ does not change $w$, is $u_1 = \frac{5}{6}\big(1-\frac{1}{24M(1-\ln(2))}\big) = \frac{5}{6}\big(1-\epsilon\big)$. Since $u_1$ is monotonically increasing with $v$ (given that $w$ remains fixed) and in the limit $v\rightarrow \infty$ player $1$'s utility is $5/6$, player $1$ cannot increase his gains (compared with his gains when declaring $v=M$) by more than $\epsilon$ with any declaration. Therefore, every declaration profile {\small$w=\frac{1}{6(1-\ln(2))}$} and {\small$v \geq \frac{1}{24\epsilon(1-\ln(2))}$} (and specifically, also $v>1/\epsilon$), or vice versa, is an $\epsilon$-Nash equilibrium of the meta-game.
\end{proof}


\end{document}